\documentclass[format=acmsmall,screen=true]{acmartvar}

% % % % % % % % % % % % % % % % % %
\usepackage{acm-ec-19}
% % % % % % % % % % % % % % % % % %

\usepackage{booktabs}

\usepackage[ruled,vlined,linesnumbered]{algorithm2e}
\providecommand{\DontPrintSemicolon}{\dontprintsemicolon}
\SetKwFor{WP}{}{:}{}

\usepackage{amsmath,amsfonts,amsthm,amssymb}
\usepackage[retainorgcmds]{IEEEtrantools}
\usepackage{xcolor}

\usepackage{url}
\usepackage{tikz}

\usepackage{enumitem}

\usepackage[normalem]{ulem}
\usepackage{changepage}
\usepackage{microtype}

\usepackage{graphicx} 
\makeatletter
\newcommand*{\algotitle}[2]{%
  \stepcounter{algocf}%
  \hypertarget{algocf.title.\theHalgocf}{}%
  \NR@gettitle{#1}%
  \label{#2}%
  \addtocounter{algocf}{-1}%
}
\makeatother

\DeclareMathOperator{\alg}{\textnormal{\textsc{alg}}}
\DeclareMathOperator{\opt}{\textnormal{\textsc{opt}}}

\DeclareMathOperator*{\argmax}{arg\,max}

\newcommand{\mysetminusD}{\hbox{\tikz{\draw[line width=0.6pt,line cap=round] (3pt,0) -- (0,6pt);}}}
\newcommand{\mysetminusT}{\mysetminusD}
\newcommand{\mysetminusS}{\hbox{\tikz{\draw[line width=0.45pt,line cap=round] (2pt,0) -- (0,4pt);}}}
\newcommand{\mysetminusSS}{\hbox{\tikz{\draw[line width=0.4pt,line cap=round] (1.5pt,0) -- (0,3pt);}}}
\newcommand{\mysetminus}{\mathbin{\mathchoice{\mysetminusD}{\mysetminusT}{\mysetminusS}{\mysetminusSS}}}
\allowdisplaybreaks

% % % % % % % % % % % % % % % % % %
% % % % % % % % % % % % % % % % % %

\newtheorem{theorem}{Theorem}[section]

\newtheorem{claim}[theorem]{Claim}
\theoremstyle{acmdefinition}
\newtheorem{remark}[theorem]{Remark}

\usepackage{ifthen}
% the rtheorem environment is useful for restating the same theorem, e.g., when it is proved in an appendix.
% Use it as
% \begin{rtheorem}{Theorem/Proposition/Lemma}{\ref{theorem-reference}}
% ...
% \end{rtheorem}
\newenvironment{rtheorem}[3][]{%
	\noindent\ifthenelse{\equal{#1}{}}{\sc #2 #3.}{\sc #2 #3 (#1)}%
	\begin{it}}{\end{it}}

\begin{document}
		
% % % % % % % % % % % % % % % % % %
% % % % % % % % % % % % % % % % % %

\title[Budget-Feasible Mechanism Design for Non-Monotone Submodular Objectives]{Budget-Feasible Mechanism Design for Non-Monotone Submodular Objectives: Offline and Online}  
%\titlenote{A conference version of this work got accepted in EC 2019.}

	\author{Georgios Amanatidis}
\authornote{G.~Amanatidis and P.~Kleer are supported by the NWO Gravitation Project NETWORKS, Grant Number 024.002.003.}
\affiliation{%
	\institution{Centrum Wiskunde \& Informatica (CWI)}
	\country{The Netherlands}
}
\email{Georgios.Amanatidis@cwi.nl}

\author{Pieter Kleer}
\authornotemark[1]
\affiliation{%
	\institution{Centrum Wiskunde \& Informatica (CWI)}
	\country{The Netherlands}
}
\email{P.S.Kleer@cwi.nl}

\author{Guido Sch{\"a}fer}
\affiliation{%
	\institution{Centrum Wiskunde \& Informatica (CWI) and Vrije Universiteit Amsterdam}
	\country{The Netherlands}
}
\email{G.Schaefer@cwi.nl}

\begin{abstract}
\bigskip\bigskip

The framework of budget-feasible mechanism design studies procurement auctions where the auctioneer (buyer) aims to maximize his valuation function subject to a hard budget constraint. We study the problem of designing truthful mechanisms that have good approximation guarantees and never pay the participating agents (sellers) more than the budget. We focus on the case of general (non-monotone) submodular valuation functions and derive the first truthful, budget-feasible and $O(1)$-approximation mechanisms that run in polynomial time in the value query model, for both offline and online auctions. Since the introduction of the problem by Singer \citep{Singer10}, obtaining efficient mechanisms for objectives that go beyond the class of monotone submodular functions has been elusive. Prior to our work, the only $O(1)$-approximation mechanism known for non-monotone submodular objectives required an exponential number of value queries. 

At the heart of our approach lies a novel greedy algorithm for non-monotone submodular maximization under a knapsack constraint. Our algorithm builds two candidate solutions simultaneously (to achieve a good approximation), yet ensures that agents cannot jump from one solution to the other (to implicitly enforce truthfulness). Ours is the first mechanism for the problem where---crucially---the agents are not ordered according to their marginal value per cost. This allows us to appropriately adapt these ideas to the online setting as well. 

To further illustrate the applicability of our approach, we also consider the case where additional feasibility constraints are present, e.g., at most $k$ agents can be selected. We obtain $O(p)$-approximation mechanisms for both monotone and non-monotone submodular objectives, when the feasible solutions are independent sets of a $p$-system. With the exception of additive valuation functions, no mechanisms were known for this setting prior to our work. 
Finally, we provide lower bounds suggesting that, when one cares about non-trivial approximation guarantees in polynomial time, our results are  asymptotically best possible. 
\end{abstract}
\maketitle
% % % % % % % % % % % % % % % % % %
% % % % % % % % % % % % % % % % % %

\newpage

% % % % % % % % % % % % % % % % % %
% % % % % % % % % % % % % % % % % %
\section{Introduction}
\label{sec:intro}
% % % % % % % % % % % % % % % % % %
% % % % % % % % % % % % % % % % % %
We consider the problem of designing \emph{budget-feasible mechanisms} for a natural model of procurement auctions. In this model, an auctioneer is interested in buying services (or goods) from a set of agents $A$. Each agent $i \in A$ specifies a cost $c_i$ to be paid by the buyer for using his service; crucially, these costs are assumed to be private information. The auctioneer has a budget $B$ and a valuation function $v(\cdot)$, where $v(S)$ specifies the value derived from the services of the agents in $S \subseteq A$. Given the (reported) costs of the agents, the goal of the auctioneer is to choose a \emph{budget-feasible} subset $S \subseteq A$ of the agents, such that the valuation $v(S)$ is maximized. 
Budget-feasibility here means that $\sum_{i \in S} p_i \le B$, where $p_i$ is the payment issued from the mechanism to agent $i$.

Note that the agents might try to extract larger payments from the mechanism by misreporting their actual costs---which of course is undesirable from the auctioneer's perspective. 
The goal, therefore, is to design budget-feasible mechanisms that \emph{(i)} elicit truthful reporting of the costs by all agents, and \emph{(ii)} achieve a good approximation with respect to the optimal value for the auctioneer. What makes the problem so intriguing is the fact that truthfulness and budget-feasibility are two directly conflicting goals, since the former is achieved by paying as much as needed to make agents indifferent to lying (see Lemma \ref{lem:myerson}).
Indicatively, the use of the celebrated truthful VCG mechanism in this setting completely fails with respect to keeping the payments bounded \citep{Singer10}.

The problem of designing budget-feasible mechanisms was introduced by Singer \citep{Singer10} and has received a lot of attention, both because of its theoretical appeal and of its relevance to several emerging application domains. A prominent such application is in crowdsourcing marketplaces (such as Mechanical Turk, Figure Eight and Clickworker) which provide online platforms to procure workforce (see \citep{AnariGN14,GoelNS14,JT17}). Another application is in the context of influence maximization in social networks, where one seeks to select influential users (see \citep{Singer12,ABM16}). 

We focus on the design of budget-feasible mechanisms for the general class of \emph{non-monotone submodular} valuation functions. Submodular objectives constitute an important class of valuation functions as they satisfy the property of diminishing returns, which naturally arises in many settings. 
Most existing works make the assumption that the valuation functions are \emph{monotone} (non-decreasing), i.e., $v(S)\leq v(T)$ for $S\subseteq T$.
Although the monotonicity assumption makes sense in certain applications, there are several examples where it is violated. 
For example, in the context of influence maximization in social networks, adding more users to the selected set may sometimes result in negative influence (see \citep{BFO10}). 
The most prominent example of a non-monotone submodular objective studied in our setting is the \emph{budgeted max-cut problem}~\citep{DobzinskiPS11,ABM17}, where  $v(\cdot)$ is determined by the cuts of a given graph. 

A natural generalization of this framework is to assume that the space of feasible sets has some structure, e.g., the feasible sets form a matroid. This variant has been studied only for additive valuation functions \citep{ABM16,LeonardiMSZ17}, despite its wide range of applications varying from team formation to spectrum markets (see~\citep{LeonardiMSZ17}). Here we study the problem for monotone and non-monotone submodular objectives under $p$-system constraints.

The purely algorithmic versions of these mechanism design problems ask for the maximization of a (non-monotone) submodular function subject to the constraint that the total cost of the selected agents does not exceed the budget; often referred to as a \emph{knapsack constraint}.
These problems are typically NP-hard, hence our focus  is on approximation algorithms that compute a close to optimal solution in polynomial time. From an algorithmic point of view, most of these problems are well-understood and admit good approximations. However, it is not clear how to appropriately convert these algorithms into truthful, budget-feasible mechanisms and, up to this work, this goal had been elusive. 
Our results illustrate that for the mechanism design problems it is possible to achieve 
the same asymptotic guarantees that are known for their algorithmic counterparts.
\medskip

\noindent {\bf Our Contributions.} 
We derive the first budget-feasible and $O(1)$-appro\-xi\-mate mechanisms for non-monotone submodular objectives, both for the offline and the online setting.
Our results for the online setting hold for the well-studied \emph{secretary  model}, where the agents arrive in a uniformly random order. 
Our mechanisms run in polynomial time in the value query model.
The highlights of this work are as follows:

\begin{itemize}[leftmargin=0.2cm,itemindent=.3cm,labelwidth=\itemindent,labelsep=0cm,align=left,itemsep=6pt,topsep=4pt]
\item We obtain the first universally truthful, budget-feasible $O(1)$-approximation mechanism for non-monotone submodular objectives in the value query model. 
	
\item We derive the first universally truthful, budget-feasible $O(1)$-approxi\-ma\-tion \emph{online} mechanism for non-monotone submodular objectives. As a consequence, we obtain the first $O(1)$-approxi\-ma\-tion algorithm for the non-monotone \emph{Submodular Knapsack Secretary Problem}  (see also Remark \ref{rem:SKS}), a budget constrained variant of the infamous Secretary Problem.
	
\item We give universally truthful, budget-feasible $O(p)$-approximation mechanisms for both monotone and non-monotone submodular objectives, when the feasible solutions are independent sets of a $p$-system. 
Beyond the additive case, nothing was known for this constrained setting.
	
\item We provide lower bounds illustrating that asymptotically our results are \emph{best possible}. On a high level, only trivial guarantees can be achieved in polynomial time if one goes beyond the class of general submodular functions or imposes constraints beyond downward closed systems. 
\end{itemize}

\medskip

\noindent
\textbf{Technical Challenges.}
It should be noted that for monotone submodular objectives all known mechanisms essentially use the same greedy subroutine introduced by Singer \citep{Singer10}: Sort all agents in decreasing order of marginal value per cost and pick as many agents as possible before hitting some carefully selected threshold. This is a simplified version of the optimal greedy algorithm of Sviridenko \shortcite{Sviridenko04} and indeed gives non-trivial approximation guarantees. Further, due to its simplicity it also has the other desired properties of truthfulness, individual rationality, and budget-feasibility. 
While this whole framework might feel somewhat straightforward, the existing literature on budget-feasible mechanisms suggests that there is a frail balance between simplicity and performance here. Only ``naive'' algorithmic ideas, like greedy, seem to have any hope generating truthful mechanisms that are robust subject to cost changes and, thus, budget-feasible. 

Unfortunately, it is easy to construct examples where running such a greedy algorithm for a non-monotone objective  results in a solution of arbitrarily poor quality. 
The algorithmic state-of-the-art for non-monotone submodular maximization under a knapsack constraint, e.g., \citep{FeldmanNS11,KulikST13,ChekuriVZ14}, provides us with quite involved algorithms on continuous relaxations of the problem that seem very unlikely to yield monotone allocation rules, and thus truthful mechanisms. The only simple (and deterministic) exception is the two-pass greedy algorithm of Gupta et al.~\shortcite{GuptaRST10}, where it is shown that running Sviridenko's greedy algorithm twice and then maximizing without the knapsack constraint is sufficient to get a deterministic $6$-approximation algorithm.\footnote{In fact, that algorithm has an approximation ratio of $4+\alpha$, where $\alpha$ is the approximation ratio of any deterministic algorithm for the unconstrained maximization of non-monotone submodular functions. Recently, Buchbinder and Feldman \shortcite{BuchbinderF18} suggested a deterministic 2-approximation algorithm for the unconstrained problem, hence the ratio of 6.}  
Despite being significantly simpler, however, this two-pass greedy algorithm still suffers with respect to monotonicity. 

More recently, several simple randomized greedy approaches for maximizing non-monotone submodular objectives subject to other (i.e., non-knapsack) constraints were proposed \citep{FeldmanZ18,CGQ15,BuchbinderFNS14,FeldmanHK17}. However, these approaches are also not applicable here. In its simplest version such a random greedy algorithm would initially randomly discard half of the agents and then run a  greedy algorithm for monotone submodular objectives.
A first issue is that this approach has not been studied for knapsack constraints. And while it is tempting to believe that random greedy algorithms easily extend to such constraints, it does not seem straightforward (see also Remark \ref{rem:SKS}). 
A second, and probably more serious, issue is that even if a random greedy algorithm directly worked for a knapsack constraint in terms of approximate optimality and truthfulness, budget-feasibility crucially depends on the monotonicity of the objective function \citep{ChenGL11,Singer10}. So, one still needs to deal with the fact that for non-monotone objectives the payments of simple greedy algorithms (like the one by Singer \citep{Singer10}) can be unbounded.

At the heart of our approach lies a novel deterministic greedy algorithm for non-monotone submodular maximization under a knapsack constraint. Our algorithm builds two candidate solutions \emph{simultaneously}, yet prevents agents to jump from one solution to the other by changing their cost. To do the latter we offer each agent a take-it-or-leave-it price based on an estimate of the optimal value which we obtain by sampling. Moreover, this is the first mechanism for the problem where---crucially---the agents are not ordered with respect to their marginal value per cost. This further allows us to appropriately modify the algorithm and adapt it to the online secretary setting and to settings with additional feasibility constraints, while maintaining all its desired properties. 

All of our mechanisms are randomized and, in fact, random sampling is an essential building block in our approach. Obtaining a good estimate of the optimal value via random sampling has been crucial in previous works on  budget-feasible mechanism design for monotone objectives as well \citep{BeiCGL17,BadanidiyuruKS12,ABM17,LeonardiMSZ17}.
Designing \emph{deterministic} budget-feasible mechanisms seems very challenging.  Beyond additive valuation functions \citep{Singer10,ChenGL11}, no deterministic, polynomial-time $O(1)$-approximation mechanisms are known, except for some specific well-behaved objectives \citep{Singer12,ABM16,HorelIM14,DobzinskiPS11,ABM17}. In order to obtain a constant approximation ratio while maintaining truthfulness, one would need to compare the single most valuable agent to an easy-to-calculate estimate of the optimal value that is also non-increasing to each agent's cost. Obtaining deterministic, budget-feasible, $O(1)$-approximation mechanisms is an intriguing topic for future research.

\medskip\noindent 
\textbf{Related Work.}
As mentioned above, the study of budget-feasible mechanisms was
initiated by Singer \citep{Singer10}, who gave a randomized $O(1)$-approximation mechanism for monotone submodular functions. Later, Chen et al.~\citep{ChenGL11} significantly improved the approximation ratio and also suggested a deterministic $O(1)$-approximation mechanism, albeit with superpolynomial running time. Several follow-up results modified this deterministic mechanism so that it runs in polynomial time for special cases, including coverage functions \citep{Singer12,ABM16} and information gain functions \citep{HorelIM14}.    
For subadditive functions, Dobzinski et al.~\citep{DobzinskiPS11} suggested a  $O(\log^2 n)$-approximation mechanism, and  gave the first constant factor mechanisms for  a special case of non-monotone objectives, namely cut functions. The factor for subadditive functions was later improved to $O(\log n / \log \log n)$ by Bei et al.~\citep{BeiCGL17}, who also gave a randomized $O(1)$-approximation mechanism for XOS functions, albeit in exponential time in the value query model, and further initiated the Bayesian analysis in this setting.\footnote{Bei et al.~\shortcite{BeiCGL17} propose an $O(1)$-approximation mechanism for  \emph{non-decreasing} XOS objectives that runs in polynomial time in the much stronger \emph{demand query} model. However, they discuss how to extend their result to general XOS functions via the use of $\hat{v}(S)=\max_{T\subseteq S} v(T)$. It is easy to see that $\hat{v}$ is non-decreasing and that $S$ is an  optimal solution of $v$ if and only if it is a minimal optimal solution for $\hat{v}$. 
Moreover, Gupta et al.~\shortcite{GuptaNS17} proved that if $v$ is general XOS then $\hat{v}$ is monotone XOS.
It should be noted that this transformation does not work for submodular functions \citep{ABM17}. 
Therefore, known results for monotone submodular functions do not extend to the non-monotone case, even in the demand query model.} 
Amanatidis et al.~\citep{ABM17} suggested $O(1)$-approximation mechanisms for a subclass of non-monotone submodular objectives, namely symmetric submodular objectives, however their approach does not seem to generalize beyond this subclass. For settings with additional combinatorial constraints, Amanatidis et al.~\citep{ABM16} and Leonardi et al.~\citep{LeonardiMSZ17} gave $O(1)$-approximation mechanisms for additive valuation functions subject to independent system constraints.
There is also a line of related work under the {\em large market} assumption (where no participant can significantly affect the market outcome), which allows for 
mechanisms with improved performance (see, e.g., \citep{SinglaK13,AnariGN14,GoelNS14,BalkanskiH16,JT17}).

The online version of the problem was introduced and studied by Badanidiyuru et al.~\citep{BadanidiyuruKS12} who give an $O(1)$-approximation mechanism for monotone submodular functions. This is closely related to the purely algorithmic version of the problem (i.e., without the incentives), namely the Submodular Knapsack Secretary Problem introduced by Bateni et al.~\citep{BateniHZ13} as a generalization of the Knapsack Secretary Problem \citep{BabaioffIKK07}. 
Bateni et al.~studied the problem for monotone and non-monotone submodular objectives, although their argument for the latter case is not sound (see Remark \ref{rem:SKS}). While the monotone submodular case has been improved \citep{FeldmanNS11b} and generalized  \citep{KesselheimT17}, there is no follow-up work on the non-monotone case to the best of our knowledge.

On maximization of submodular functions subject to knapsack or other type of constraints, there is a vast literature, going back several decades  (see, e.g., \citep{NemhauserWF78,Wolsey82}). 
Focusing on knapsack constraints, there is a rich line of recent work on developing algorithms on continuous relaxations of the problem (see, e.g., \citep{FeldmanNS11,KulikST13,ChekuriVZ14,EneN17} and references therein) achieving an $e$-approximation for non-monotone objectives.
However, the most relevant recent work to ours is that of Gupta et al.~\citep{GuptaRST10} who  proposed a deterministic $6$-approximation algorithm for the non-monotone case, related on a high level to our main approach. Gupta et al.~also gave algorithms for certain constrained secretary problems, although not with knapsack constraints.  
When $\ell$ knapsack constraints and a $p$-system constraint are both present, the algorithmic state-of-the-art is a $(p+2\ell+1)$-approximation algorithm for the monotone submodular case due to Badanidiyuru and Vondr{\'{a}}k \shortcite{BadanidiyuruV14} and a $(p+1)(2p+2\ell+1) / p$-approximation algorithm for the non-monotone submodular case due to Mirzasoleiman et al.~\shortcite{MirzasoleimanBK16}.

As mentioned above, there is a line of work that uses random greedy algorithms for maximizing non-monotone submodular objectives subject to other combinatorial constraints \citep{FeldmanZ18,CGQ15,BuchbinderFNS14,FeldmanHK17}. Although  not directly related to our work, there are underlying similarities as the algorithms developed are simple, greedy and often extend to online settings. Additionally, if one could resolve the issue of the payments being unbounded, a random greedy version of Singer's mechanism could lead to significantly improved approximation guarantees in our setting.

% % % % % % % % % % % % % % % % % %
% % % % % % % % % % % % % % % % % %
\section{Preliminaries}
\label{sec:prels}
% % % % % % % % % % % % % % % % % %
% % % % % % % % % % % % % % % % % %

We  use $A =[n]= \{1, 2, \dots, n\}$ to denote a set of $n$ agents. Each agent $i$ is associated with a private cost $c_i$, denoting the cost for participating in the solution. 
We consider a procurement auction setting, where the auctioneer is equipped with a valuation function $v:2^{A}\to \mathbb{Q}_{\ge 0}$ and a  budget $B>0$. For $S\subseteq A$, $v(S)$ is the value derived by the auctioneer if the set $S$ is selected (for singletons, we will often write $v(i)$ instead of $v(\{i\})$).  Therefore, the algorithmic goal in all the problems we study is to select a set $S$ that maximizes $v(S)$ subject to the constraint $\sum_{i\in S} c_i \leq B$.
We assume oracle access to $v$ via \emph{value queries}, i.e., we assume the existence of a polynomial time value oracle that returns $v(S)$ when given as input a set $S$.

A  function $v$ is \emph{non-decreasing} (often referred to as \emph{monotone}), if $v(S) \le v(T)$ for any $S \subseteq T \subseteq A$. We consider general (i.e., not necessarily monotone), normalized (i.e., $v(\emptyset)=0$), non-negative submodular valuation functions.
Since marginal values are extensively used, we adopt the shortcut $v(i\,|\,S)$ for the marginal value of agent $i$ with respect to the set $S$, i.e.,  $v(i\,|\,S) = v(S \cup \{i\}) - v(S)$.
The following three definitions of submodularity are equivalent. While definition \emph{(i)} is the most standard, the other two alternative definitions will be useful later on.

\begin{definition}\label{def:SM}
	A  function $v$, defined on $2^A$ for some set $A$, is \emph{submodular} if and only if
	\begin{enumerate}[label=\emph{(\roman*)},itemsep=4pt,topsep=4pt]
		\item $v(i\,|\,S) \geq v(i\,|\,T)$ for all $S\subseteq T \subseteq A$, and $i\not\in T$. %\ya{$T$ is fine here. Even with $S$ it is equivalent.}
		\item $v(S) + v(T) \ge v(S\cup T) + v(S\cap T)$ for all $S, T \subseteq A$. \label{def:sub2}
		\item $v(T) \le v(S) + \sum_{i \in T \mysetminus S} v(i\,|\,S) - \sum_{i \in S \mysetminus T} v(i\,|\,S\cup T \mysetminus \{i\})$ for all $S, T \subseteq A$. \label{def:sub3}
	\end{enumerate}
\end{definition}

In Section \ref{sec:lower_bounds} we also deal with valuation functions that come from a superclass of submodular functions, namely \emph{XOS}  or \emph{fractionally subadditive} functions. In particular, it is known that non-negative (monotone) submodular functions are a strict subset of (monotone) XOS functions \citep{LehmannLN06,GuptaNS17}. 

\begin{definition}
	A function $v$, defined on $2^A$ for some set $A$, is \emph{XOS} or \emph{fractionally subadditive}, if there exist additive functions $\alpha_1, \ldots,\alpha_r$, for some finite $r$, such that 
	$v(S) = \max_{i\in[r]} \alpha_i(S)$.
\end{definition}

We often need to argue about optimal solutions of sub-instances of the original instance $(A, v, \mathbf{c}, B)$.
Given a cost vector $\mathbf{c}$, and a subset $X\subseteq A$, we denote by $\mathbf{c}_X$ the projection of $\mathbf{c}$ on $X$, and by $\mathbf{c}_{-X}$ the projection of $\mathbf{c}$ on $A\mysetminus X$. 
By $\opt(X, v, \mathbf{c}_X, B)$ we denote 
the value of an optimal solution to the problem restricted 
on $X$.
Similarly, $\opt(X, v,  \infty)$ denotes the value of an optimal solution to the unconstrained version of the problem restricted 
on $X$.
For the sake of readability, we usually drop the valuation function and the cost vector, 
and  write $\opt(X, B)$ and $\opt(X, \infty)$, respectively.

\medskip

\noindent{\bf Mechanism Design.} 
In the strategic version that we consider here, every agent $i\in A$ only has his true cost $c_i$ as private information. Hence, this is a \emph{single-parameter environment}.
A mechanism $\mathcal{M}=(f,p)$ in our context consists of an outcome rule $f$ and a payment rule $p$. Given a vector of cost declarations, 
$\mathbf{b} = (b_i)_{i \in A}$, where $b_i$ denotes the cost reported by agent 
$i$, the outcome rule of the mechanism selects the set $f(\mathbf{b})\subseteq A$. At the same time, it computes payments $p(\mathbf{b}) = (p_i(\mathbf{b}))_{i \in A}$ 
where $p_i(\mathbf{b})$ denotes the payment issued to agent $i$. Hence, the final utility of agent $i$ is $p_i(\mathbf{b}) - c_i$.

Unless stated otherwise,  our mechanisms run in polynomial time in the value query model. Further properties we want to enforce in our mechanism design problem are the following. 
\begin{definition}
	A mechanism $\mathcal{M}=(f,p)$ is 
	\begin{itemize}[itemsep=2pt,topsep=2pt]
		\item \emph{truthful}, if reporting $c_i$ is a dominant strategy for every agent $i$.
		\item \emph{individually rational}, if $p_i(\mathbf{b})\geq 0$ for every $i\in A$, and $p_i(\mathbf{b}) \geq c_i$, for every $i\in f(\mathbf{b})$.
		\item \emph{budget-feasible}, if $\sum_{i\in A} p_i(\mathbf{b}) \leq B$ for every $\mathbf{b}$.
	\end{itemize}
\end{definition}

For our randomized mechanisms we use the strong notion of \textit{universal truthfulness}, which means that the mechanism is a probability distribution over deterministic truthful mechanisms. As all the mechanisms we suggest are universally truthful, we will consistently use  $\mathbf{c} = (c_i)_{i \in A}$ rather than $\mathbf{b} = (b_i)_{i \in A}$ for the declared costs in their description and analysis.

To design truthful mechanisms for single-parameter environments, we  use a characterization by Myerson \shortcite{Myerson81}. 
We say that an outcome rule $f$ is {\em monotone}, if for every agent $i\in A$, and any vector of cost declarations $\mathbf{b}$, if $i\in f(\mathbf{b})$, then $i\in f(b_i', \mathbf{b}_{-i})$ for $b_i' \leq b_i$. That is, if an agent $i$ is selected  
by declaring cost $b_i$, then  he should still be selected by declaring a lower cost.
Myerson's lemma, below, implies that monotone algorithms admit truthful payment schemes 
(often referred to as {\it threshold payments}). 
This greatly simplifies the design of truthful mechanisms,  as one may focus  on constructing monotone algorithms rather than having to worry about the payment scheme. 
For all of our mechanisms, we assume that the underlying payment scheme is given by Myerson's lemma.

\begin{lemma}[\textnormal{ Myerson \citep{Myerson81}}]\label{lem:myerson}
	Given a monotone algorithm $f$, there is a unique payment scheme $p$, such that $(f, p)$ is a truthful and individually rational  mechanism, given by
	\begin{displaymath}
	p_i(\mathbf{b})= \left\{ \begin{array}{ll}
	\sup_{b_i'\in [c_i, \infty)} \{b_i': i\in f(b_i', \mathbf{b}_{-i})\}\,, & \textrm{\emph{ if\ \  }} i\in f(\mathbf{b}),\\
	0\,, & \textrm{\emph{ otherwise.}}
	\end{array} \right.
	\end{displaymath}
\end{lemma}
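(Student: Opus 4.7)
The plan is to prove the lemma in two parts: first, that the stated payment rule together with the monotone algorithm $f$ yields a truthful and individually rational mechanism (sufficiency); second, that no other payment rule achieves this (uniqueness).

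For sufficiency, I would fix an agent $i\in A$ and the declarations $\mathbf{b}_{-i}$ of the remaining agents, and define the threshold $\theta_i = \sup\{b_i' : i \in f(b_i', \mathbf{b}_{-i})\}$ (with the convention $\theta_i = 0$ if the set is empty). By monotonicity of $f$, any bid strictly below $\theta_i$ causes $i$ to be selected, while any bid strictly above $\theta_i$ causes $i$ to be excluded. Crucially, whenever $i$ is selected, the payment $p_i$ depends only on $\mathbf{b}_{-i}$ and equals $\theta_i$, not on $b_i$ itself. Truthfulness is then shown by case analysis on the relation between the true cost $c_i$ and $\theta_i$: if $c_i < \theta_i$, reporting $c_i$ yields utility $\theta_i - c_i > 0$, and any deviation is either still winning (same utility) or losing (utility $0$); if $c_i > \theta_i$, reporting $c_i$ yields utility $0$, and any deviation that wins yields negative utility $\theta_i - c_i < 0$. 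Individual rationality follows because $p_i \geq c_i$ for every selected agent $i$ (as $i\in f(c_i,\mathbf{b}_{-i})$ forces $\theta_i \geq c_i$), and $p_i = 0 \geq 0$ otherwise.

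For uniqueness, I would invoke a taxation-principle style argument. Fix any truthful and individually rational mechanism $(f,p')$. Since $f$ is unchanged, every agent $i$ faces a bid-to-payment function that may depend on $b_i$ only through whether $i\in f(b_i,\mathbf{b}_{-i})$: if $p'_i(b_i,\mathbf{b}_{-i})$ varied between two bids $b_i^1, b_i^2$ with $i$ selected under both, the agent with higher payment would always be preferred, contradicting truthfulness. Hence there is a constant $\tau_i(\mathbf{b}_{-i})$ paid whenever $i$ is selected. Truthfulness at the boundary forces $\tau_i = \theta_i$: if $\tau_i < \theta_i$, an agent with cost $c_i \in (\tau_i, \theta_i)$ bidding truthfully is selected and receives negative utility, violating individual rationality; if $\tau_i > \theta_i$, an agent with $c_i \in (\theta_i, \tau_i)$ is not selected under truthful bidding (utility $0$), but by bidding just below $\theta_i$ he wins and earns $\tau_i - c_i > 0$, violating truthfulness. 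Thus $p' = p$.

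The main delicate point lies in the supremum versus maximum at the boundary $b_i = \theta_i$: whether $i$ is selected exactly at this bid depends on the internal tie-breaking of $f$, which is why the payment formula uses $\sup$ rather than $\max$. I would resolve this by observing that deviations that strictly improve an agent's utility can always be analyzed via the strict inequalities $b_i < \theta_i$ or $b_i > \theta_i$; the zero-measure boundary case does not affect the truthfulness and IR conditions, since any profitable deviation at $b_i = \theta_i$ could be translated into a profitable deviation at a nearby bid, contradicting monotonicity or the definition of $\theta_i$.
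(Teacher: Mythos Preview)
The paper does not actually prove this lemma: it is stated as a classical result attributed to Myerson~\citep{Myerson81} and used as a black box throughout. Your proposal supplies the standard textbook argument (threshold characterization via monotonicity, case analysis for truthfulness and individual rationality, taxation-principle reasoning for uniqueness), and it is essentially correct. There is therefore nothing in the paper to compare against; your proof simply fills in what the paper leaves as a citation.

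One minor remark: the lemma as stated in the paper writes the supremum over $b_i'\in[c_i,\infty)$ rather than $[b_i,\infty)$ or $[0,\infty)$. This is harmless because, by monotonicity, whenever $i\in f(\mathbf{b})$ the set $\{b_i' : i\in f(b_i',\mathbf{b}_{-i})\}$ is a down-set containing $b_i$, so all three suprema coincide. Your definition of $\theta_i$ as $\sup\{b_i' : i\in f(b_i',\mathbf{b}_{-i})\}$ (implicitly over all nonnegative bids) is the cleaner formulation and avoids the apparent dependence of the payment on the private cost $c_i$.
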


\begin{remark}\label{rem:bounded_costs}
We may assume, without loss of generality, that in any given instance all the costs are upper bounded by the budget. To see this notice that neither our mechanisms nor the optimal offline solution will ever consider any agent with cost higher than $B$. Furthermore, no agent has an incentive to misreport a very high true cost. Indeed, due to budget-feasibility, if agent $i$ reports a cost $b_i \le B$ instead of his true cost $c_i > B$ and is selected, then he has utility $p_i(\mathbf{b}) - c_i < B-B = 0$.
Thus, in all of our mechanisms we implicitly assume a preprocessing step that removes all the agents with declared costs exceeding $B$. The resulting instance (given as input to the corresponding mechanism) has the same set of optimal solutions subject to the budget constraint as the original one. Note that in the case of the online mechanism \nameref{fig:Main-Submodular-Online} rejecting such agents as they arrive suffices. 
\end{remark}

\begin{remark}\label{rem:tie-breaking}
We should stress that wherever tie-breaking is needed (e.g., in lines \ref{line:SG_argmax} and \ref{line:SG_max_soln} of \nameref{fig:Simultaneous Greedy}, during the execution of the auxiliary algorithms $\alg_1$, $\alg_2$ and $\alg_3$, etc.), we assume the consistent use of a tie-breaking rule that is \emph{independent of the declared costs}. An obvious such choice would be a deterministic lexicographic tie-breaking rule. 
\end{remark}

% % % % % % % % % % % % % % % % % %
% % % % % % % % % % % % % % % % % % 
\section{An Efficient Mechanism for Submodular Objectives}
\label{sec:offline}
% % % % % % % % % % % % % % % % % %
% % % % % % % % % % % % % % % % % %
The main result of  this section is the first 
$O(1)$-approximation mechanism (termed \nameref{fig:Main-Submodular}  below) for non-monotone submodular valuation functions.

\begin{theorem}\label{thm:Main-Submodular}
	\nameref{fig:Main-Submodular} is a  universally truthful, individually rational, budget-feasible, $O(1)$-approximation mechanism.
\end{theorem}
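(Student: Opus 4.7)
\medskip
\noindent\textbf{Proof plan.}
The plan is to establish the four claims by viewing \nameref{fig:Main-Submodular} as a probability distribution over deterministic mechanisms parametrized by its internal randomness (in particular the random subset $R\subseteq A$ used to estimate $\opt$ and, presumably, a coin flip choosing between the simultaneous-greedy output and the best singleton). For universal truthfulness and individual rationality, I would verify that each such deterministic sub-mechanism has a monotone allocation rule. Based on the discussion preceding the theorem, non-sampled agents are examined in a cost-\emph{independent} order (Remark~\ref{rem:tie-breaking}) and each is offered a take-it-or-leave-it price $\pi_i$ that depends only on the sampled values, on the estimate $\tau$, and on the state of the two candidate solutions up to that point; sampled agents are never selected. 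Since the inclusion test reduces to $c_i\le \pi_i$ and $\pi_i$ does not depend on $c_i$, lowering $c_i$ cannot deselect $i$, so the rule is monotone and Lemma~\ref{lem:myerson} yields truthful, individually rational threshold payments $p_i=\pi_i$ for each selected $i$.

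For budget-feasibility, I plan to argue that inside each invocation of \nameref{fig:Simultaneous Greedy} the two simultaneously built candidates each have total offered price at most a constant fraction of $B$ by construction, because the acceptance test incorporates a running budget check calibrated to $\tau$. The mechanism then outputs one of these candidates (or the best singleton, which trivially satisfies $c_i\le B$ by Remark~\ref{rem:bounded_costs}), so $\sum_i p_i\le B$ in every realization of the coin flips, implying budget-feasibility unconditionally.

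The main obstacle is the $O(1)$-approximation guarantee, which I would prove in two stages. First, a sampling argument---combining a standard concentration step with the non-monotone random-subset inequality $\mathbb{E}_{R}[v(T\cap (A\mysetminus R))]\ge \tfrac12 v(T)$ applied to an optimal set $T$---shows that, with constant probability over the draw of $R$, the estimate $\tau$ lies within a constant factor of $\opt(A,B)$ and simultaneously $\opt(A\mysetminus R, B)=\Omega(\opt(A,B))$. This is the same concentration route that worked in \citep{BeiCGL17,BadanidiyuruKS12,ABM17,LeonardiMSZ17}, but adapted to the non-monotone setting via the random-subset inequality. Second, conditioned on this good event, I would bound the value of the two candidates produced by \nameref{fig:Simultaneous Greedy} against $\opt(A\mysetminus R, B)$ using the ``sum of marginal losses'' form of submodularity in Definition~\ref{def:SM}\ref{def:sub3}: because the two candidates are built in parallel under the \emph{same} cost-independent ordering and the \emph{same} per-agent threshold derived from $\tau$, any optimum agent rejected by \emph{both} tracks must have had small marginal contribution on each side, so its absence can be charged directly to the values of the two candidates, while the best singleton handles the case where a single agent dominates $\opt$. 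This is the subtle point highlighted in the technical overview: the absence of a marginal-value-per-cost ordering forces the charging to go through the global threshold $\tau$ rather than through local marginals, but the parallel construction is precisely what prevents an optimum agent from ``jumping'' between tracks by perturbing its cost, and is also what keeps the allocation rule monotone. Combining the two stages and taking expectation over the random sample yields the claimed constant-factor approximation.
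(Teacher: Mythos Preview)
Your treatment of truthfulness, individual rationality, and budget-feasibility follows the paper's route and is essentially correct: cost-independent ordering plus posted prices yields monotonicity, and the per-track running budget bounds the sum of threshold payments by $B$.

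The approximation argument, however, has two genuine gaps.

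First, and most importantly, you never address what happens to optimal agents that \emph{are} accepted into $S_1$ or $S_2$. In the monotone case one would simply use $v(C^*\cap S_j)\le v(S_j)$, but here $v$ is non-monotone, so a subset of $S_j$ can have much larger value than $S_j$ itself. The paper's fix is that \nameref{fig:Simultaneous Greedy} also computes $T_j=\alg_2(S_j)$, a $2$-approximation to $\opt(S_j,\infty)$, and returns the best of $S_1,S_2,T_1,T_2$; this gives $v(C^*\cap S_j)\le 2\,v(T_j)$ in the analysis. Your proposal never mentions $T_1,T_2$ or any analogous step, so the charging of $C^*\cap(S_1\cup S_2)$ to the output is missing.

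Second, your claim that ``any optimum agent rejected by both tracks must have had small marginal contribution on each side'' is only half the story. An agent $i_k$ is rejected either because $c_{i_k}>\frac{\beta B}{x}v(i_k\,|\,S_{j_k}^{(k)})$ (the set $R_{\mathbf c}$) or because $\frac{\beta B}{x}v(i_k\,|\,S_{j_k}^{(k)})>B_{j_k}^{(k)}$ (the set $R_B$). Your charging works for $R_{\mathbf c}$ but not for $R_B$: there the marginal is \emph{large}, and the paper handles it by a separate Case~1 showing that if $R_B\neq\emptyset$ then some $S_{j_k}$ already has value $\Omega(x/\beta)$ directly from the budget telescoping. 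Without this case split the argument does not close.

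A smaller point: the inequality $\mathbb{E}_R[v(T\cap(A\setminus R))]\ge\tfrac12 v(T)$ gives only an expectation bound on one side, whereas the analysis needs both $\opt(A_1,B)$ and $\opt(A_2,B)$ to be $\Omega(\opt(A,B))$ \emph{simultaneously} with constant probability. The paper invokes the Bei--Leonardi lemma (Lemma~\ref{lem:Bei-Leonardi}), which delivers exactly this under the assumption $\opt(A,B)\ge k\cdot\max_i v(i)$; that assumption is precisely why the best-singleton branch is needed.
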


At the heart of our approach lies a novel greedy algorithm for non-monotone submodular maximization under a knapsack constraint (\nameref{fig:Simultaneous Greedy} below). 
As we mentioned in the \nameref{sec:intro}, all known mechanisms use the same greedy subroutine: sort all agents in decreasing order of marginal value per cost and pick as many agents as possible before hitting some threshold. While for monotone submodular objectives this gives a non-trivial approximation guarantee, for non-monotone objectives may result in arbitrarily bad solutions. 
Moreover, continuous algorithmic approaches for non-monotone submodular maximization under a knapsack constraint \citep{FeldmanNS11,KulikST13} seem very unlikely to yield monotone allocation rules, and thus truthful mechanisms. The only algorithm that is conceptually close to our approach is the two-pass greedy algorithm of Gupta et al.~\shortcite{GuptaRST10}, that runs Sviridenko's greedy algorithm twice and then maximizes without the knapsack constraint to get a deterministic $6$-approximate solution.
The intuition behind this approach is that submodularity prevents the greedy algorithm from getting stuck in consecutive ``bad'' local maxima.
Despite being significantly simpler, however, this two-pass greedy algorithm still suffers irreparably with respect to monotonicity, as it allows agents to jump from one solution to the other by changing their cost.   

Here we introduce \nameref{fig:Simultaneous Greedy}, a greedy mechanism that builds two candidate solutions \emph{simultaneously}. While the analysis of Gupta et al.~\shortcite{GuptaRST10} does not apply here (our solutions are neither built sequentially nor according to the standard greedy algorithm), 
	the way we obtain our approximation guarantee is of the same flavor: at least one of the solutions will  contain an approximately optimal set.
At the same time \nameref{fig:Simultaneous Greedy} prevents agents to choose their favorite candidate solution by misreporting their cost. To achieve that, we offer each agent a take-it-or-leave-it price based on an estimate $x$ of the optimal value which we obtain by sampling.  
This is the first mechanism for the problem where it is crucial that the agents are \emph{not} ordered with respect to their marginal value per cost. This will further allow us to appropriately modify \nameref{fig:Simultaneous Greedy} for the \emph{online} setting of Section \ref{sec:online} while maintaining all its desired properties. 

The parameter $\beta$ is later set to $9.185$ in order to get the approximation factor of Corollary \ref{cor:Main_approximation} but, otherwise, our analysis is independent of its value. $\alg_2$ in line \ref{line:SG_unconstrained} can be any
approximation algorithm for \emph{unconstrained} non-monotone submodular maximization. In particular, here we may use the deterministic 2-approximation algorithm of Buchbinder and Feldman \shortcite{BuchbinderF18}.
\medskip

\begin{algorithm}[H]
	\DontPrintSemicolon 
	\NoCaptionOfAlgo
	\algotitle{\textnormal{\textsc{Simultaneous \allowbreak Greedy}}}{fig:Simultaneous Greedy.title}
%{\small
	$S_1 =  S_2 = \emptyset$; $B_1 =  B_2 = B$; $U=D$   \tcc*{{\footnotesize each $S_j$ has its own budget $B_j$}}
	\While{$\max_{i\in U, j\in\{1,2\}} v(i|S_j) > 0$}{
		Let $(\hat{\imath},\hat{\jmath}) \in \argmax_{i\in U, j\in\{1,2\}} v(i|S_j)$ \label{line:SG_argmax}\;
		\If{$c_{\hat{\imath}}\le \frac{\beta  B}{x} v(\hat{\imath}|S_{\hat{\jmath}}) \le B_{\hat{\jmath}}$ \label{line:SG_condition}}{
			$S_{\hat{\jmath}} = S_{\hat{\jmath}} \cup \{\hat{\imath}\}$ \;
			$B_{\hat{\jmath}} = B_{\hat{\jmath}} - \frac{\beta  B}{x} v(\hat{\imath}|S_{\hat{\jmath}})$\;}
		$U = U\mysetminus\{\hat{\imath}\}$ \label{line:SG_rejection}\;
	}
	\For{$j\in \{1,2\}$}{
		$T_j=\alg_2(S_j)$ \tcc*{{\footnotesize a 2-approximate solution with respect to  $\opt(S_j, v, \mathbf{c}_{S_j}, \infty)$}} \label{line:SG_unconstrained}
	}
	Let $S$ be the best solution among $S_1, S_2, T_1, T_2$ \label{line:SG_max_soln}\;
	\Return $S$
%}
	\caption{\textsc{Simultaneous Greedy}$(D, v, \mathbf{c}_D, B, x)$} \label{fig:Simultaneous Greedy} 
\end{algorithm}\medskip

Ideally, we would like the rate parameter $x$ to be close to $\opt(A, B)$ and also to be robust in the sense that no single agent can significantly affect its value. 
To achieve that, \nameref{fig:Sample-then-Greedy} randomly partitions the set of agents into two sets $A_1$ and $A_2$, then approximately solves the problem on $A_1$ to obtain an estimate of $\opt(A_1, B)$, and finally uses this $x$ to set the threshold rate for \nameref{fig:Simultaneous Greedy} on $A_2$. 

$\alg_1$ in line \ref{line:StG_submod_knapsack} can be any 
approximation algorithm for non-monotone submodular maximization subject to a knapsack constraint. In particular, here we may use the $e$-approximation algorithm of Kulik et al.~\shortcite{KulikST13} (also see Remark \ref{rem:rand_approx_ratio}). \medskip

\begin{algorithm}[H]
	\DontPrintSemicolon 
	\NoCaptionOfAlgo
	\algotitle{\textnormal{\textsc{Sample-then-Greedy}}}{fig:Sample-then-Greedy.title}
%{\small
	Put each agent of $A$ in either $A_1$ or $A_2$ independently at random with probability $\frac{1}{2}$ \;
	$x=v(\alg_1(A_1))$ \tcc*{{\footnotesize an $e$-approximation of $\opt(A_1, v, \mathbf{c}_{A_1}, B)$}} \label{line:StG_submod_knapsack}
	\Return \text{\nameref{fig:Simultaneous Greedy}}$(A_2, v, \mathbf{c}_{A_2}, B, x)$ 
%}
	\caption{\textsc{Sample-then-Greedy}$(A, v, \mathbf{c}, B)$} \label{fig:Sample-then-Greedy} 
\end{algorithm}\medskip

\noindent Lemma \ref{lem:Bei-Leonardi} in Subsection \ref{subsec:Main-Submodular}, due to Bei et al.~\shortcite{BeiCGL12} and Leonardi et al.~\shortcite{LeonardiMSZ16}, guarantees that with high probability both $A_1$ and $A_2$ contain enough value subject to the budget constraint for things to work, as long as no agent is too valuable. The latter leads to the final mechanism \nameref{fig:Main-Submodular} (\underline{\textsc{Gen}}eral \underline{\textsc{S}}ub\underline{\textsc{m}}odular-\underline{\textsc{Main}}) that randomizes between all the above and just returning a best singleton.\medskip

\begin{algorithm}[H]
	\DontPrintSemicolon 
	\NoCaptionOfAlgo
	\algotitle{\textnormal{\textsc{GenSm-Main}}}{fig:Main-Submodular.title}
%	{\small
		\emph{With probability} $p=0.201$ \textbf{:} \Return $i^*\in \argmax_{i\in A} v(i)$ \;
		\emph{With probability} $1-p$ \textbf{:} \Return \nameref{fig:Sample-then-Greedy}$(A, v, \mathbf{c}, B)$ \;
%	}
	\caption{\textsc{GenSm-Main}$(A, v, \mathbf{c}, B)$} \label{fig:Main-Submodular} 
\end{algorithm}%\smallskip

\begin{remark}\label{rem:7-approx}
Here it is necessary that \nameref{fig:Simultaneous Greedy} uses thresholds in order to achieve the properties stated in Theorem \ref{thm:Main-Submodular}. While this goes beyond the point of this work, one can follow the same approach of greedily building two solutions at the same time (using a variant of Sviridenko's algorithm \shortcite{Sviridenko04}), in order to design a deterministic 7-approximation  algorithm for maximizing non-monotone submodular functions subject to a knapsack constraint. 
\end{remark}

% % % % % % % % % % % % % % % % % % 
\subsection{Proving the Properties of \nameref{fig:Main-Submodular}}
\label{subsec:Main-Submodular}
% % % % % % % % % % % % % % % % % %
We fix some additional notation to facilitate the presentation of the proofs. We use $(D, v, \mathbf{c}, B, x)$ for a generic instance given to \nameref{fig:Simultaneous Greedy} and $S$ for the set returned. By $i_1, i_2, \ldots, i_{t}$ we denote the sequence of agents of $D$ examined during this execution of the algorithm in this exact order. All the agents of $S$ clearly appear within this sequence, so for any particular $\ell\in S$ we have that $\ell=i_k$ for some $k$. By $j_k$ we denote the index $\hat{\jmath}$ picked during the $k$th execution of line \ref{line:SG_argmax} of \nameref{fig:Simultaneous Greedy}, while by $S_{j_k}^{(k)}$ and $B_{j_k}^{(k)}$ we denote the set $S_{j_k}$ and its remaining budget, respectively, at that time.
Conventionally, we  use notation like $S_{j_k}^{(k+1)}$ to denote $S_{j_k}$ right after the $k$th execution of line \ref{line:SG_rejection}, even if line \ref{line:SG_argmax} is never executed more than $k$ times.
Recall that we use a 
tie-breaking rule that is independent of the costs, as mentioned in Remark \ref{rem:tie-breaking}.

\begin{lemma}\label{lem:SG_monotone}
The allocation rule defined by \nameref{fig:Simultaneous Greedy} is monotone. Thus, using the threshold payments of Myerson's lemma, the resulting mechanism is truthful and individually rational. 
\end{lemma}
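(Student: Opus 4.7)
The plan is to show that the entire execution of \nameref{fig:Simultaneous Greedy} on the instance where agent $i$ declares $c_i$ is identical, step by step, to its execution where $i$ declares any $c_i' \le c_i$ (with all other declarations unchanged), whenever $i$ belongs to the set returned in the original run. Once this coupling is established, the returned set is the same set in both runs, and in particular it still contains $i$, giving monotonicity. Truthfulness and individual rationality then follow by invoking Myerson's lemma (Lemma~\ref{lem:myerson}) with the induced threshold payments.

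The first step is to localise precisely where costs can influence the algorithm. The $\argmax$ in line~\ref{line:SG_argmax} compares only marginal values $v(\cdot \mid S_j)$, and by Remark~\ref{rem:tie-breaking} the tie-breaking rule is cost-independent, so the chosen pair $(\hat{\imath},\hat{\jmath})$ is a deterministic function of the current state $(S_1,S_2,U)$. The budget decrement $\frac{\beta B}{x} v(\hat{\imath}\mid S_{\hat{\jmath}})$ in the update step is also cost-independent, and the rate $x$ is supplied by the outer \nameref{fig:Sample-then-Greedy} call and stays fixed throughout this invocation. The subroutine $\alg_2$ in line~\ref{line:SG_unconstrained} operates on $S_j$ and $v$ alone, and the final choice in line~\ref{line:SG_max_soln} is another cost-independent $\argmax$ by valuation with the same tie-breaking convention. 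So costs enter \nameref{fig:Simultaneous Greedy} only through the leftmost inequality of the threshold test $c_{\hat{\imath}} \le \frac{\beta B}{x} v(\hat{\imath}\mid S_{\hat{\jmath}}) \le B_{\hat{\jmath}}$ in line~\ref{line:SG_condition}.

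The main step is an induction on iterations. Let $k$ be the (unique) iteration at which $i$ is selected as $\hat{\imath}$ in the original run; uniqueness holds because $i$ is removed from $U$ at line~\ref{line:SG_rejection} whether or not the test passes. For each $k' < k$ the selected agent $\ell \neq i$ has an unchanged declared cost, so by induction the state entering iteration $k'$ is identical in the two runs, the same pair is chosen, the threshold test has the same outcome, and the state entering iteration $k'+1$ is again identical. At iteration $k$ itself the same pair $(i,j)$ is chosen under $c_i'$; since the assumption that $i$ is in the returned set forces it to belong to $S_1 \cup S_2$ (because $T_j \subseteq S_j$), the test $c_i \le \frac{\beta B}{x} v(i\mid S_j) \le B_j$ passed in the original run, and so $c_i' \le c_i$ implies it passes under $c_i'$ as well; $i$ is added to $S_j$ with the same cost-independent budget decrement. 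For all $k' > k$ agent $i$ is no longer in $U$, so the induction continues with other unchanged costs and the two runs remain synchronised until the loop terminates. Consequently $S_1, S_2, T_1, T_2$ are the same in both runs, and line~\ref{line:SG_max_soln} returns the same set, which contains $i$. The only subtle point is to account for all cost-independent pieces of the mechanism; once this is done the argument boils down to the trivial fact that lowering $c_i$ cannot cause the threshold test that $i$ already passed to fail.
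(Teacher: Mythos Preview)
Your proof is correct and follows essentially the same approach as the paper's own proof: both establish that the entire execution of \nameref{fig:Simultaneous Greedy} is unchanged when a winning agent lowers his declared cost, by observing that costs enter only through the left inequality of line~\ref{line:SG_condition} and then coupling the two runs iteration by iteration. Your version is somewhat more explicit in cataloguing the cost-independent pieces (the $\argmax$, the budget decrement, $\alg_2$, the final selection) and in noting that $T_j \subseteq S_j$ forces any winner to have passed the threshold test, but the underlying argument is the same.
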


\begin{proof}
By Lemma \ref{lem:myerson}, we just need to show that the allocation rule is monotone, i.e., a winning agent remains a winner if he decreases his cost. In fact, we show something stronger, namely that no winning agent can affect the output of \nameref{fig:Simultaneous Greedy} by lowering his bid. 

Let $S$ be the set returned when the input is $(D, v, \mathbf{c}, B, x)$ and fix some agent $i_k\in S$. That is, during the $k$th execution of line \ref{line:SG_argmax}, $(\hat{\imath},\hat{\jmath})=(i_k, j_k)$.
Fix the vector $\mathbf{c}_{-i_k}$ for the other agents, and suppose that agent $i_k$ declares $c_{i_k}'<c_{i_k}$. Clearly, the execution of \nameref{fig:Simultaneous Greedy}$(D, v, (\mathbf{c}_{-i_k},c_{i_k}'), \allowbreak B, x)$ will be exactly the same as before for agents  $i_1, \ldots, i_{k-1}$. Further, $\hat{\jmath}$ will again be $j_k$. Thus, $i_k$ will again be added to the $S_{j_k}^{(k)}$ since 
\[
c_{i_k}' < c_{i_k} \le \frac{\beta  B}{x} v\Big(i_k \, \big|\, S_{j_k}^{(k)}\Big) \le B_{j_k}.
\] 
After updating $B_{j_k}$ to $B_{j_k}^{(k)} - \frac{\beta  B}{x} v\big(i_k|S_{j_k}^{(k)}\big)$, everything is exactly the same as in the beginning of the $(k+1)$th iteration of the original execution of \nameref{fig:Simultaneous Greedy}$(D, v, \mathbf{c}, B, x)$ and, therefore, the algorithm will proceed in exactly the same way to produce the same output $S$. In particular, agent $i_k$ will still be a winner. 
\end{proof}

In all the following statements, when we refer to mechanisms, we always assume threshold payments.
Before we study the total payment, we should point out that enforcing budget-feasibility has been the main source of technical difficulties in the budget-feasible mechanism design literature.  A significant advantage of the take-it-or-leave-it approach used in threshold mechanisms like \nameref{fig:Simultaneous Greedy} is that the budget-feasibility becomes much more manageable. To some extent this comes at the expense of the approximation guarantee and its analysis, but also offers some additional flexibility that will be explored in Sections \ref{sec:online} and \ref{sec:combinatorial}. 

\begin{lemma}\label{lem:SG_budget-feasible}
	The mechanism \nameref{fig:Simultaneous Greedy} is budget-feasible.
\end{lemma}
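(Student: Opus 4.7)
By Lemma~\ref{lem:SG_monotone} the allocation rule is monotone, so the Myerson thresholds are the unique truthful payments. My plan is to identify each threshold explicitly: for every winner $i_k\in S$, the threshold will turn out to equal the ``virtual payment'' $\tau_{i_k}:=\frac{\beta B}{x}\,v(i_k\,|\,S_{j_k}^{(k)})$ that was already charged against $B_{j_k}$ during the greedy phase. Once this is established, the total payment can be related directly to the amount deducted from either $B_1$ or $B_2$, which is capped at $B$ by construction.

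\textbf{Pinning down the threshold, and the main obstacle.} Fix a winner $i_k\in S$, and recall that $j_k\in\{1,2\}$ and $S_{j_k}^{(k)}$ denote the set index selected in line~\ref{line:SG_argmax} at step $k$ and the state of $S_{j_k}$ at that moment. I would argue both directions of the threshold in turn. Since the $\argmax$ in line~\ref{line:SG_argmax} and the tie-breaking (Remark~\ref{rem:tie-breaking}) depend only on marginals and not on costs, the execution is unchanged for any declaration $b_{i_k}'\le\tau_{i_k}$: the sequence of considered agents, the memberships in $S_1, S_2$, the outputs of $\alg_2$ in line~\ref{line:SG_unconstrained}, and therefore the returned $S$ are all identical to the original run, so $i_k\in S$. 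For any $b_{i_k}'>\tau_{i_k}$, the execution up to step $k$ is still identical, but the check in line~\ref{line:SG_condition} now fails for $i_k$, which is removed from $U$ in line~\ref{line:SG_rejection}. The subtle point -- and the main obstacle -- is that the remainder of the execution can diverge arbitrarily from the original, so one has to rule out any circuitous reinsertion of $i_k$. This is handled by the single observation $\alg_2(S_j)\subseteq S_j$: once $i_k$ is removed from $U$ it cannot appear in $S_1\cup S_2$, and hence also not in $T_1\cup T_2$, so $i_k\notin S$. This pins the threshold down to exactly $\tau_{i_k}$.

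\textbf{Summing to the budget.} Because $T_j\subseteq S_j$ for $j\in\{1,2\}$, the returned set $S$ is contained in a single $S_{j^*}$ for some $j^*\in\{1,2\}$. Therefore
\[
\sum_{i\in S} p_i \;=\; \sum_{i_k\in S}\tau_{i_k} \;\le\; \sum_{i_k\in S_{j^*}}\tau_{i_k},
\]
and the rightmost sum equals the total amount deducted from $B_{j^*}$ during the greedy phase, namely $B - B_{j^*}^{\text{final}}\le B$, since the condition $\tau_{\hat{\imath}}\le B_{\hat{\jmath}}$ enforced in line~\ref{line:SG_condition} prevents the remaining budget from ever becoming negative. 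This yields the desired budget-feasibility.
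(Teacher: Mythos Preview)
Your proof is correct and follows essentially the same approach as the paper: you identify each winner's Myerson threshold as the ``virtual price'' $\tau_{i_k}=\frac{\beta B}{x}\,v(i_k\,|\,S_{j_k}^{(k)})$ using the cost-independence of the greedy order, and then bound the total payment by the amount deducted from a single budget $B_{j^*}$ via $T_j\subseteq S_j$. The only cosmetic difference is that the paper bounds the payments for each of $S_1,S_2,T_1,T_2$ separately rather than first observing $S\subseteq S_{j^*}$, but the computation is identical.
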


\begin{proof}
Let $S$ be the set returned given the instance $(D, v, \mathbf{c}, B, x)$ and fix $i_k\in S$. We claim that the payment $p_{i_k}(\mathbf{c})$ is exactly $\pi_k = \frac{\beta  B}{x} v\big(i_k|S_{j_k}^{(k)}\big)$, i.e., $i_k\in S$ if and only if he bids $c_{i_k}'\le\pi_k$. First note that $i_k$ cannot affect the time when he is examined by the mechanism or which agents come before him. So, since $\mathbf{c}_{-i_k}$ is fixed, during the $k$th execution of line \ref{line:SG_argmax}, he is always ``offered'' $\pi_k$; either he accepts, i.e., $c_{i_k}'\le\pi_k$, and the algorithm proceeds in the exact same way as with $c_{i_k}'=c_{i_k}$ (see also the proof of Lemma \ref{lem:SG_monotone}) or he rejects, i.e., $c_{i_k}'>\pi_k$, and he is removed from the active set of agents. Once an agent is removed, however, he is never reexamined and thus, if $c_{i_k}'>\pi_k$ then $i_k$ is not in the winning set.

Recall that $S$ can be any of $S_1, S_2, T_1, T_2$. We will show that all four sets are budget-feasible. Let $T_1= \{i_{a_1}, i_{a_2}, \ldots, i_{a_{|T_1|}}\}$ and $S_1= \{i_{b_1}, i_{b_2}, \ldots, i_{b_{|S_1|}}\}$,
where $(a_i)_{i=1}^{|T_1|}$ is a subsequence of $(b_i)_{i=1}^{|S_1|}$ which is a subsequence of $1, 2, \ldots, t$. Recall that the budget $B_1$ for $S_1$ is never exhausted. We have
\[\sum_{\tau=1}^{|T_1|} \pi_{a_{\tau}} \le \sum_{\tau=1}^{|S_1|} \pi_{b_{\tau}} = \sum_{\tau=1}^{|S_1|} \frac{\beta  B}{x} v\left(i_{b_{\tau}}|S_{j_{b_{\tau}}}^{({b_{\tau}})}\right) = B - B_1^{(|S_1|+1)} \le B \,. \]
The first and the second sum represent the total payment when $S=T_1$ and when $S=S_1$, respectively. The budget-feasibility of $T_2$ and $S_2$ is proved in the exact same way.
\end{proof}

\begin{corollary}\label{cor:Main_properties}
The mechanism \nameref{fig:Main-Submodular} is universally truthful, individually rational and budget-feasible.
\end{corollary}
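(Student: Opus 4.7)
The plan is to argue that \nameref{fig:Main-Submodular} is a probability distribution over deterministic truthful, individually rational, budget-feasible mechanisms, so that the three properties transfer pointwise to the randomized mechanism. There are essentially two kinds of outcomes to inspect: the ``best singleton'' branch, and each conditional realization of \nameref{fig:Sample-then-Greedy}.

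For the singleton branch, I would note that the allocation $i^*\in\argmax_{i\in A} v(i)$ depends only on $v(\cdot)$ and the tie-breaking rule (Remark \ref{rem:tie-breaking}), hence is independent of the declared cost vector. This makes the allocation trivially monotone: if $i^*$ wins at $c_{i^*}$, then $i^*$ still wins at any lower bid. Combined with Remark \ref{rem:bounded_costs} (declarations exceeding $B$ are discarded), Myerson's threshold payment (Lemma \ref{lem:myerson}) equals $B$, the remaining agents receive $0$, and this yields individual rationality (since $c_{i^*}\le B$) and budget-feasibility (payment equals $B$).

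For the \nameref{fig:Sample-then-Greedy} branch, I would condition on a realization of the random partition $(A_1, A_2)$. Under this conditioning, the rate $x = v(\alg_1(A_1))$ depends only on $\mathbf{c}_{A_1}$, and the allocation on the whole $A$ is obtained by returning $S\subseteq A_2$ produced by \nameref{fig:Simultaneous Greedy}$(A_2,v,\mathbf{c}_{A_2},B,x)$. I split the monotonicity check into two cases. Any agent $i\in A_1$ is never selected as a winner, so his allocation is the constant $0$ in his own bid (changing $b_i$ changes only $x$, which in turn affects only agents in $A_2$). Hence monotonicity for such agents is vacuous, and Myerson's payment is $0$. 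For agents $i\in A_2$, the parameter $x$ is independent of $\mathbf{c}_{A_2}$, so the allocation coincides with that of \nameref{fig:Simultaneous Greedy} run on $(A_2, v, \mathbf{c}_{A_2}, B, x)$; Lemma \ref{lem:SG_monotone} then yields monotonicity and truthfulness, Lemma \ref{lem:SG_budget-feasible} yields budget-feasibility, and individual rationality follows from Myerson's payment formula.

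Finally, I would combine the two branches: \nameref{fig:Main-Submodular} is a convex combination (with weights $p$ and $1-p$) of the singleton mechanism and of the distribution over deterministic mechanisms induced by the random partition inside \nameref{fig:Sample-then-Greedy}. Since every realization is a deterministic truthful, individually rational, budget-feasible mechanism, \nameref{fig:Main-Submodular} is universally truthful, and IR and BF carry over pointwise. I do not anticipate a genuine obstacle here; the only subtle point to make carefully is that the costs of agents in $A_1$ enter the conditional mechanism only through $x$ and therefore cannot influence their own allocation or payment, so that truthfulness on $A_1$ reduces to a vacuous statement rather than requiring any additional argument about the estimation subroutine $\alg_1$.
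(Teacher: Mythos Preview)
Your proposal is correct and follows essentially the same approach as the paper: decompose \nameref{fig:Main-Submodular} as a distribution over the ``best singleton'' mechanism and the conditional realizations of \nameref{fig:Sample-then-Greedy}, then invoke Lemmata \ref{lem:SG_monotone} and \ref{lem:SG_budget-feasible} together with the trivial analysis of the singleton branch. If anything, you are slightly more explicit than the paper in spelling out why agents landing in $A_1$ cannot affect their own allocation (they are never winners, and their bids enter only through $x$), which the paper leaves implicit.
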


\begin{proof}
Given that $\alg_1$ and $\alg_2$ run in polynomial time, and that it is straightforward to determine the payments (Lemma \ref{lem:SG_budget-feasible}), it is clear that \nameref{fig:Main-Submodular} is a polynomial-time mechanism.

Further, \nameref{fig:Main-Submodular} is a probability distribution over the mechanism that returns $i^*\in \linebreak \argmax_{i\in A} v(i)$ and 
\nameref{fig:Simultaneous Greedy}$(D, v, \mathbf{c}_{D}, B, v(\alg_1(A\mysetminus D)))$ for all $D\subseteq A$.
The simple mechanism that returns $i^*$ and pays him the threshold payment is truthful and individually rational. Further, it is clear that the threshold payment is exactly $B$, so this mechanism is budget-feasible as well.

The desired properties of \nameref{fig:Main-Submodular} now follow from Lemmata \ref{lem:SG_monotone} and \ref{lem:SG_budget-feasible} and the above observations. 
\end{proof}

\begin{lemma}\label{lem:SG_approximation}
If there is a positive integer $\ell$ such that  $\max_{i\in D} v(i) < \frac{x}{\ell\cdot\beta}$, then
\nameref{fig:Simultaneous Greedy}$(D, v, \mathbf{c}, B, x)$ outputs a set $S$ such that 
\[v(S)\ge \min\left\lbrace \frac{\ell x}{(\ell+1)\beta} \,,\,  \frac{1}{6} \left( \opt(D, B) - \frac{2x}{\beta}\right)  \right\rbrace  \,.\]
\end{lemma}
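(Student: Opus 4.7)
The plan is to do a case analysis based on how much of each per-solution budget the algorithm consumes. The key bookkeeping identity is that every element $i$ added to $S_j$ at step $k$ charges exactly $(\beta B/x)\, v(i|S_j^{(k)})$ against $B_j$, so by telescoping
\[
v(S_j^{\mathrm{final}}) \;=\; \tfrac{x}{\beta B}\bigl(B - B_j^{\mathrm{final}}\bigr).
\]

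\textbf{Case 1 (budget saturation).} If $B - B_j^{\mathrm{final}} \geq \tfrac{\ell B}{\ell+1}$ for some $j\in\{1,2\}$, then the identity above immediately gives $v(S_j^{\mathrm{final}}) \geq \tfrac{\ell x}{(\ell+1)\beta}$, and the first term of the minimum is achieved since $v(S)\geq v(S_j)$.

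\textbf{Case 2 (neither budget saturated).} Then $B_j^{\mathrm{final}} > \tfrac{B}{\ell+1}$ for both $j$, and the goal is to show $v(S)\geq \tfrac{1}{6}\bigl(\opt(D,B) - \tfrac{2x}{\beta}\bigr)$. Let $O\subseteq D$ attain $v(O)=\opt(D,B)$ and split $O = O_{\mathrm{in}} \sqcup O_{\mathrm{out}}$ with $O_{\mathrm{in}} := O \cap (S_1\cup S_2)$. Non-negative submodular subadditivity gives
\[
v(O) \;\leq\; v(O\cap S_1) + v(O\cap S_2) + v(O_{\mathrm{out}}),
\]
where $S_1\cap S_2=\emptyset$ is used. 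Since $O\cap S_j$ is a feasible candidate for unconstrained maximization on $S_j$, one has $v(O\cap S_j) \leq \opt(S_j, v, \mathbf{c}_{S_j}, \infty) \leq 2\,v(T_j) \leq 2\,v(S)$ by the $2$-approximation guarantee of $\alg_2$. This reduces the task to establishing
\[
v(O_{\mathrm{out}}) \;\leq\; 2\,v(S) + \tfrac{2x}{\beta}\,.
\]

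The technical heart is bounding $v(O_{\mathrm{out}})$. Classify each $o\in O_{\mathrm{out}}$ by the reason for its rejection at step $k_o$: either $c_o > (\beta B/x)\, v(o|S_{\hat\jmath_o}^{(k_o)})$ (\emph{cost-rejected}, set $O_{\mathrm{cost}}$) or $(\beta B/x)\, v(o|S_{\hat\jmath_o}^{(k_o)}) > B_{\hat\jmath_o}^{(k_o)}$ (\emph{budget-rejected}, set $O_{\mathrm{bud}}$). Summing the cost-rejection inequality over $O_{\mathrm{cost}}$ and using $\sum_{o\in O} c_o \leq B$ gives $\sum_{O_{\mathrm{cost}}} v(o|S_{\hat\jmath_o}^{(k_o)}) < x/\beta$. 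For $o\in O_{\mathrm{bud}}$ we combine the rejection inequality with $v(o)<x/(\ell\beta)$ to obtain $v(S_{\hat\jmath_o}^{(k_o)}) + v(o|S_{\hat\jmath_o}^{(k_o)}) > x/\beta$, and the Case~2 assumption forces each such marginal into the narrow range $(x/((\ell+1)\beta),\,x/(\ell\beta))$; this lets a charging argument attribute the marginal of each budget rejection to the growth of $v(S_{\hat\jmath_o})$ itself. Finally, apply Definition~\ref{def:SM}(\ref{def:sub3}) with $T=O_{\mathrm{out}}$ against both $S=S_1$ and $S=S_2$ and average, using the argmax property $v(o|S_j^{(k_o)})\leq v(o|S_{\hat\jmath_o}^{(k_o)})$ to bound the positive marginal sums, and the telescoping identity $\sum_{i\in S_j} v(i|S_j^{(k_i)}) = v(S_j)$ together with non-negativity of $v$ to control the negative correction terms. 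After these manipulations one concludes $v(O_{\mathrm{out}}) \leq v(S_1)+v(S_2)+\tfrac{2x}{\beta} \leq 2\,v(S)+\tfrac{2x}{\beta}$, completing Case 2.

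The main obstacle is the negative correction term $-\sum_{i\in S_j\mysetminus O_{\mathrm{out}}} v\bigl(i\,\big|\,S_j\cup O_{\mathrm{out}}\mysetminus\{i\}\bigr)$ in Definition~\ref{def:SM}(\ref{def:sub3}), which vanishes trivially in the monotone case but can be negative here; the charging of budget rejections to increments of $v(S_{\hat\jmath_o})$, enabled precisely by the Case~2 invariant, is what controls this term and closes the analysis.
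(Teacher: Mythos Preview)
Your case split on ``budget saturation'' is not the paper's split, and the difference is where the argument breaks. The paper splits on whether $R_B$ (the set of budget-rejected agents) is empty. The crucial feature of that split is that in Case~2 \emph{every} rejected agent is cost-rejected, so the marginal of each $o\in O_{\mathrm{out}}$ that was ever examined satisfies $v(o\mid S_j^{(k_o)}) < \tfrac{x}{\beta B}\,c_o$ for \emph{both} $j$, and summing gives $\le x/\beta$ immediately. Your Case~2 (``$B_j^{\mathrm{final}}>B/(\ell+1)$ for both $j$'') does not by itself rule out budget rejections, and you acknowledge this by introducing $O_{\mathrm{bud}}$. But your treatment of $O_{\mathrm{bud}}$ is only a sketch: saying the marginals lie in $(x/((\ell{+}1)\beta),\,x/(\ell\beta))$ gives no bound on $|O_{\mathrm{bud}}|$, and a ``charging argument to the growth of $v(S_{\hat\jmath_o})$'' does not close because many budget rejections can occur without any growth of $S_{\hat\jmath_o}$ between them.

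In fact your split \emph{can} be rescued, but only by importing the paper's Case~1 reasoning: if some $i_k\in R_B$ exists, the paper shows (via the decreasing-marginal ordering and an averaging step) that $v(S_{j_k}^{(k)})\ge \tfrac{\ell x}{(\ell+1)\beta}$, which by the telescoping identity is exactly your saturation threshold. So $R_B\neq\emptyset$ already forces your Case~1, and hence in your Case~2 one has $O_{\mathrm{bud}}=\emptyset$ automatically. You never establish this; it is the missing idea. Once you have it, $O_{\mathrm{bud}}$ disappears and no charging is needed.

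A second, smaller issue: applying Definition~\ref{def:SM}\ref{def:sub3} with $T=O_{\mathrm{out}}$ and $S=S_j$ introduces the negative correction term $-\sum_{i\in S_j} v(i\mid S_j\cup O_{\mathrm{out}}\setminus\{i\})$, which you then try to control. This detour is unnecessary. The paper instead first uses non-negative submodularity to get $v(O_{\mathrm{out}})\le v(S_1\cup O_{\mathrm{out}})+v(S_2\cup O_{\mathrm{out}})$ and then the one-sided bound $v(S_j\cup O_{\mathrm{out}})\le v(S_j)+\sum_{i\in O_{\mathrm{out}}} v(i\mid S_j)$, which follows from submodularity alone and has no negative term at all. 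Your telescoping identity and ``non-negativity of $v$'' do not give a usable lower bound on $\sum_{i\in S_j} v(i\mid S_j\cup O_{\mathrm{out}}\setminus\{i\})$; submodularity goes the wrong way here.
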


\begin{proof}
Let $t$ be the number of times line \ref{line:SG_argmax} was executed. At the end of the $t$th iteration, $U$ is the set of agents never examined. That is, $U$ only contains agents that have non-positive marginal utilities with respect to $S_{1}^{(t+1)}$ and $S_{2}^{(t+1)}$. For the sake of readability, we henceforth  use $S_{1}$ and $S_{2}$ to denote $S_{1}^{(t+1)}$ and $S_{2}^{(t+1)}$, respectively. Let $R=D\mysetminus (U\cup S_1\cup S_2)$ be the agents $i_k$ that were considered at some point by the mechanism but were \emph{rejected}, i.e., not added to either $S^{(k)}_1$ or $S^{(k)}_2$. 
We first partition $R$ into two sets depending on \emph{why} the corresponding agents were rejected. 
The set 
\[
R_{\mathbf{c}} = \bigg\{ i_k \,\Big |\, \frac{\beta  B}{x} v\Big( i_k \, \big | \, S_{j_k}^{(k)}\Big)  < c_{\hat{\imath}} \bigg\} 
\]
contains the agents rejected because the first inequality in line \ref{line:SG_condition} was violated during the corresponding iteration. 
Similarly, the set 
\[
R_B  = \bigg\lbrace i_k \,\Big |\, B_{j_k}^{(k)} < \frac{\beta  B}{x} v\Big( i_k\, \big|\, S_{j_k}^{(k)}\Big) \bigg\rbrace
\]
contains the agents rejected because the  second inequality in line \ref{line:SG_condition} was violated. 
Clearly, $R = R_{\mathbf{c}} \cup R_B$. We consider two cases, depending on whether $R_B$ is empty or not. \smallskip

\noindent \textbf{Case 1.} Assume that  $R_B\neq \emptyset$ and let $i_k\in R_B$. That is, during the $k$th execution of line \ref{line:SG_argmax}, $(\hat{\imath},\hat{\jmath})=(i_k, j_k)$, but $\frac{\beta  B}{x} v\big( i_k|S^{(k)}_{j_k}\big)  > B^{(k)}_{j_k}$. Let $S^{(k)}_{j_k}= \{i_{a_1}, i_{a_2}, \ldots, i_{a_{s}}\}$, where $(a_i)_{i=1}^{s}$ is a subsequence of $1, 2, \ldots, t$. Further, notice that, by its definition, $B^{(k)}_{j_k} = B - \sum_{\tau=1}^{s} \frac{\beta B}{x} v\big( i_{a_{\tau}} |S^{(a_{\tau})}_{j_{k}}\big)$.
We have
\begin{IEEEeqnarray}{rCl}\label{eq:SG_approximation_1}
 v\left( S^{(k)}_{j_k}\right)& = &\sum_{i=1}^{s} v\left( i_{a_i} |S^{(a_i)}_{j_{k}}\right) 
 =  \frac{x}{\beta B} \left( B- B^{(k)}_{j_k}\right) \nonumber\\
	& >  &   \frac{x}{\beta  B} \left( B- B^{(k)}_{j_k}\right) + \frac{x}{\beta  B} B^{(k)}_{j_k} - v\left( i_k|S^{(k)}_{j_k}\right) 
	 =  \frac{x}{\beta}  -  v\left( i_k|S^{(k)}_{j_k}\right) \,. 
\end{IEEEeqnarray}
By submodularity and the way the agents in $S^{(k)}_{j_k}$ are chosen, we have 
\[v(i_{a_1}) = v\left( i_{a_1} |S^{(a_1)}_{j_k}\right) \ge v\left( i_{a_2} |S^{(a_2)}_{j_k}\right) \ge \ldots \ge v\left( i_{a_s} |S^{(a_s)}_{j_k}\right) \ge v\left( i_{k} |S^{(k)}_{j_k}\right) \,.\]
Yet, each one of these values is at most $\max_{i\in D} v(i) < \frac{x}{\ell\cdot\beta}$.
Combining with \eqref{eq:SG_approximation_1}, we have
\begin{equation*}
	\ell\cdot\max_{i\in D} v(i) < \frac{x}{\beta}  \le  v\left( S^{(k)}_{j_k}\right) + v\left( i_k|S^{(k)}_{j_k}\right) 
	\le  \sum_{\tau=1}^{s} v\left( i_{a_{\tau}} |S^{(a_{\tau})}_{j_{k}}\right)  + v\left( i_k|S^{(k)}_{j_k}\right) 
	\le  \left( s + 1 \right) \cdot v\left( i_{a_1} \right) \,, 
\end{equation*}
and therefore, we conclude that $\big| S^{(k)}_{j_k} \big| = s \ge \ell$. Now we repeat the same argument for the average marginal value in the sum $\sum_{\tau=1}^{s} v\big( i_{a_{\tau}} |S^{(a_{\tau})}_{j_{k}}\big)$. 
Using the simple observation that the smallest term of a sum cannot exceed the average of the remaining terms, we get
\begin{equation}\label{eq:SG_approximation_2}
\frac{x}{\beta} \le v\left( S^{(k)}_{j_k}\right) + v\left( i_k|S^{(k)}_{j_k}\right) \le v\left( S^{(k)}_{j_k}\right) + \frac{1}{s} \sum_{\tau=1}^{s} v\left( i_{a_{\tau}} |S^{(a_{\tau})}_{j_{k}}\right) \le \frac{s + 1}{s} v\left( S^{(k)}_{j_k}\right) \le \frac{\ell +1}{\ell} v\left( S^{(k)}_{j_k}\right) \,,
\end{equation}
where the last inequality follows from the fact that $f(z)=\frac{z+1}{z}$ is decreasing.

Finally, to get the approximation guarantee for this case, we combine \eqref{eq:SG_approximation_2} with the fact that $S$ is at least as good as each greedy solution: 
\[v(S)  \ge  v(S_{j_k}) \ge v\left( S^{(k)}_{j_k}\right) \ge \frac{\ell}{\ell +1}\cdot \frac{x}{\beta} \,.\]
%\smallskip

\noindent \textbf{Case 2.} Now assume that  $R_B = \emptyset$, i.e., $R=R_{\mathbf{c}}$. Let $C^*$ be an optimal solution for the given instance and define $C_1= C^* \cap S_1$, $C_2= C^* \cap S_2$ and $C_3 = C^* \mysetminus (C_1\cup C_2)$. By subadditivity, we have 
\begin{equation}\label{eq:SG_approximation_3}
\opt(D, B) = v(C^*) \le v(C_1) + v(C_2) + v(C_3) \,.
\end{equation}
Recall that $T_j=\alg_2(S_j), j\in\{1,2\}$, is a 2-approximate solution with respect to  $\opt(S_j, \infty)$.
Thus, $v(C_j) \le \opt(S_j,  B) \le 2\cdot v(T_j)$, for $j\in\{1,2\}$, and inequality \eqref{eq:SG_approximation_3} gives
\begin{equation}\label{eq:SG_approximation_4}
\opt(D, B) \le 2v(T_1) + 2v(T_2) + v(C_3) \,.
\end{equation}
Upper bounding $v(C_3)$ in terms of $S_1, S_2, T_1, T_2$ and $r$ is somewhat more involved. We begin by invoking the non-negativity of $v$, as well as its submodularity (as defined in of Definition \ref{def:SM}\ref{def:sub2}) on $S_1\cup C_3$ and $S_2\cup C_3$. We have
\begin{equation}\label{eq:SG_approximation_5}
v(C_3) \le v(C_3) + v(S_1\cup C_3\cup S_2) \le v(S_1\cup C_3) + v(S_2\cup C_3)  \,.
\end{equation}
In order to upper bound $v(S_1\cup C_3)$ we again use the submodularity of $v$, together with a couple of facts about the marginal utilities of agents outside of $S_1$. Since the mechanism stopped after $t$ iterations, $\max_{i\in D\mysetminus(S_1\cup S_2\cup R)} v(i|S_1) \le 0$. 
Also, given that $R=R_{\mathbf{c}}$, for all agents that got rejected at some point, we  know that they had very low marginal value per cost ratio with respect to both $S_1$ and $S_2$. 
In particular, if $i_k\in R$, then $c_{i_k}> \frac{\beta  B}{x} v\big( i_k|S^{(k)}_j\big)$, for both $j\in\{1,2\}$. We may now rely on  Definition \ref{def:SM}\ref{def:sub3} to get
\begin{IEEEeqnarray*}{rCl's}
	v(S_1\cup C_3)& \le  & v(S_1) + \sum_{i_k\in C_3} v( i_{k} |S_1) & \\ 
	& \le & v(S_1) + \sum_{i_k\in C_3\cap R} v( i_{k} |S_1) & {\small (\ $v(i_k|S_1) \le 0$ for $i_k\in C_3\mysetminus R $\ )}  \\
	& \le  &  v(S_1) +  \sum_{i_k\in C_3\cap R} v\left( i_{k} |S^{(k)}_1\right) & {\small (\ by submodularity, $v( i_{k} |S_1)\le  v\left( i_{k} |S^{(k)}_1\right)$ for  $i_k\in D$\ )} \\ 
	& \le &  v(S_1) +  \sum_{i_k\in C_3\cap R} \frac{x}{\beta  B} c_{i_k}  &  {\small (\ $\frac{\beta  B}{x} v\big( i_k|S^{(k)}_1\big) <  c_{i_k}$ for $i_k\in R$\ )}
\end{IEEEeqnarray*}
Similarly, $v(S_2\cup C_3)\le   v(S_2) +  \sum_{i_k\in C_3\cap R} \frac{x}{\beta  B} c_{i_k}$.
Also, recall that $\sum_{i \in C^*} c_i \le B$ to get 
\begin{equation}\label{eq:SG_approximation_6}
v(S_j\cup C_3)\le   v(S_j) +  \frac{x}{\beta}\,, \text{\ \ for \ \ } j\in\{1,2\} \,.
\end{equation}
Finally, we may combine \eqref{eq:SG_approximation_4}, \eqref{eq:SG_approximation_5} and \eqref{eq:SG_approximation_6} to get 
\begin{IEEEeqnarray*}{rCl}
\opt(D, B) & \le & 2v(T_1) + 2v(T_2) + v(S_1\cup C_3) + v(S_2\cup C_3) \\
	& \le  &  2v(T_1) + 2v(T_2) + v(S_1) + v(S_2) +  \frac{2x}{\beta} 
	\le   6\cdot v(S) +  \frac{2x}{\beta} \,, 
\end{IEEEeqnarray*}
or, equivalently, 
$v(S)  \ge  \frac{1}{6} \big( \opt(D, B) - \frac{2x}{\beta}\big)$. % \,.\]

Combining Case 1 and Case 2, we obtain the claimed guarantee.
\end{proof}
So far, unless $x=\Theta(\opt(D, B))$, the approximation guarantee seems to be rather weak. In fact, the way \nameref{fig:Simultaneous Greedy} is used within \nameref{fig:Sample-then-Greedy} requires that both $x=v(\alg_1(A_1))$ and $\opt(A_2,  B)$ are $\Theta(\opt(A, B))$. The next technical lemma guarantees that this happens with high probability, unless there is an extremely valuable agent.

\begin{lemma}[\textnormal{Follows from Bei et al.~\citep{BeiCGL17} and Leonardi et al.~\citep{LeonardiMSZ16}}]\label{lem:Bei-Leonardi}
	Consider any submodular function $v(\cdot)$.  For any given subset $T \subseteq A$ and a positive integer $k$ assume that $v(T) \geq k \cdot \max_{i\in T} v(i)$. Further, suppose that $T$ is divided uniformly at random into two subsets $T_1$ and $T_2$. Then with probability at least $\frac{1}{2}$, we have that $v(T_1) \geq \frac{k-1}{4k} v(T)$ and $v(T_2) \geq \frac{k-1}{4k}v(T)$.
\end{lemma}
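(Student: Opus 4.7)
The plan is to combine a deterministic pointwise inequality from submodularity, a disjointness observation that reduces the two-sided statement to a one-sided tail bound, and a concentration argument that exploits the hypothesis $v(T) \ge k \cdot \max_i v(i)$. Throughout, set $\alpha := (k-1)/(4k)$.

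\textit{Step 1 (Pointwise inequality and expectation).} Applying Definition~\ref{def:SM}\ref{def:sub2} to $T_1, T_2$ (whose union is $T$ and whose intersection is empty),
\[
v(T_1) + v(T_2) \;\ge\; v(T_1 \cup T_2) + v(T_1 \cap T_2) \;=\; v(T) + v(\emptyset) \;=\; v(T)\,,
\]
pointwise. Since the random partition is symmetric, $T_1$ and $T_2$ are identically distributed, so $\mathbb{E}[v(T_1)] = \mathbb{E}[v(T_2)] \ge v(T)/2$.

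\textit{Step 2 (Reduction to a one-sided bound).} Since $2\alpha = (k-1)/(2k) < 1$, the events $\{v(T_1) < \alpha v(T)\}$ and $\{v(T_2) < \alpha v(T)\}$ cannot co-occur: summing would contradict Step~1. Setting $A := \{v(T_1) \ge \alpha v(T)\}$ and $B := \{v(T_2) \ge \alpha v(T)\}$, this gives $\Pr[A \cup B] = 1$, and by symmetry $\Pr[A] = \Pr[B]$, so
\[
\Pr[A \cap B] \;=\; \Pr[A] + \Pr[B] - \Pr[A \cup B] \;=\; 2\Pr[A] - 1\,.
\]
Hence it suffices to prove $\Pr[A] \ge 3/4$, i.e., $\Pr[v(T_1) < \alpha v(T)] \le 1/4$.

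\textit{Step 3 (Concentration).} By Step~1, $\mathbb{E}[v(T_1)] - \alpha v(T) \ge \tfrac{k+1}{4k}\, v(T)$, so Chebyshev's inequality gives
\[
\Pr\big[v(T_1) < \alpha v(T)\big] \;\le\; \frac{16 k^2}{(k+1)^2\, v(T)^2}\,\mathrm{Var}(v(T_1))\,.
\]
The task reduces to bounding $\mathrm{Var}(v(T_1))$. Since $v(T_1)$ is a function of $|T|$ independent Bernoulli(1/2) random variables, the Efron--Stein inequality yields
\[
\mathrm{Var}(v(T_1)) \;\le\; \tfrac{1}{4}\sum_{i \in T} \mathbb{E}\big[v(i \,|\, T_1 \setminus \{i\})^2\big]\,,
\]
and combining the submodularity bound $v(i\,|\,S) \le v(\{i\}) \le v(T)/k$ with the non-negativity of $v$ to control the negative side of each marginal yields a bound of the form $\mathrm{Var}(v(T_1)) = O(v(T)^2/k)$, which plugged into the Chebyshev estimate above gives the desired $1/4$ and hence the claim.

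The \emph{main obstacle} is precisely this variance bound: for non-monotone $v$ the marginal $v(i\,|\,S)$ can be negative, and while submodularity immediately controls its positive part by $v(\{i\})$, bounding $|v(i\,|\,S)|$ requires a more delicate argument combining non-negativity of $v$ with the hypothesis $v(T) \ge k \cdot \max_i v(i)$ to ensure that no single element contributes too much to the variance. The detailed calculation is the one carried out by Bei et al.~\citep{BeiCGL17} and Leonardi et al.~\citep{LeonardiMSZ16}.
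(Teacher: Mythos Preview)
The paper does not give its own proof of this lemma; it merely cites Bei et al.\ and Leonardi et al.\ and uses the statement as a black box. So there is no ``paper's proof'' to compare against, and the relevant question is whether your argument stands on its own.

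Steps~1 and~2 are correct and clean. The pointwise inequality $v(T_1)+v(T_2)\ge v(T)$ from submodularity, the symmetry giving $\Pr[A]=\Pr[B]$, and the disjointness of the two bad events reducing everything to the one-sided bound $\Pr[v(T_1)<\alpha v(T)]\le 1/4$ are all fine, and this reduction is indeed the standard first move.

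Step~3, however, is not a proof: you set up Chebyshev plus Efron--Stein and then explicitly stop at exactly the point that matters, deferring the variance bound to the cited references. That is a genuine gap. The bound $\mathrm{Var}(v(T_1))=O(v(T)^2/k)$ does not follow from what you have written: Efron--Stein gives you $\tfrac14\sum_{i\in T}\mathbb{E}\big[v(i\,|\,T_1\setminus\{i\})^2\big]$, and submodularity only bounds the marginals from above by $v(i)\le v(T)/k$. For non-monotone $v$ the marginals can be negative with magnitude not controlled by $v(i)$, and even in the monotone case you still need to control $\sum_i v(i)^2$ (which can involve $|T|\gg k$ terms). The arguments in the cited papers do not proceed via Efron--Stein; the usual route is to fix an ordering of $T$, write $a_i=v(t_i\,|\,\{t_1,\dots,t_{i-1}\})$ so that $\sum_i a_i=v(T)$, observe by submodularity that $v(T_1)\ge\sum_{i:t_i\in T_1}a_i$, and then apply concentration to the \emph{linear} functional $\sum_i a_i\mathbf{1}[t_i\in T_1]$. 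That linearization is the missing idea in your Step~3, and without it (or an equivalent device) the variance estimate you need is not available.
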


We are now ready to lower bound the approximation guarantee of \nameref{fig:Sample-then-Greedy}. 

\begin{lemma}\label{lem:StG_approximation}
Assume that for some positive integer $k$, $\opt(A, B) > k \cdot \max_{i\in A}v(i)$. Then with probability at least $\frac{1}{2}$, \nameref{fig:Sample-then-Greedy}$(A, v, \mathbf{c}, B)$ outputs a set $S$ such that 
\[v(S)\ge \min\left\lbrace \frac{\big\lfloor \frac{k-1}{4 e \beta} \big\rfloor  (k-1)}{e  \left( \big\lfloor \frac{k-1}{4 e \beta} \big\rfloor + 1\right) } \,,\,  \frac{\beta (k-1)- 8k}{6}   \right\rbrace \cdot \frac{1}{4  \beta k} \cdot \opt(A, B)  \,.\]
\end{lemma}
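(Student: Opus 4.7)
\medskip

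\noindent\textbf{Proof Proposal.} The plan is to combine Lemma~\ref{lem:Bei-Leonardi}, the $e$-approximation guarantee of $\alg_1$, and Lemma~\ref{lem:SG_approximation}, after tracking how the hypothesis $\opt(A,B) > k\cdot\max_{i\in A}v(i)$ propagates through the random split in \nameref{fig:Sample-then-Greedy}. Let $C^{*}\subseteq A$ be an optimal solution on $A$, so $v(C^{*})=\opt(A,B)$. First I would observe that since each agent is placed in $A_1$ or $A_2$ uniformly and independently, the restriction to $C^{*}$ is still a uniformly random bipartition of $C^*$. By hypothesis $v(C^*) > k\cdot\max_{i\in C^{*}}v(i)$, so Lemma~\ref{lem:Bei-Leonardi} (applied with $T=C^{*}$) gives, with probability at least $\tfrac{1}{2}$, that $v(C^{*}\cap A_{j}) \ge \tfrac{k-1}{4k}\opt(A,B)$ for both $j\in\{1,2\}$. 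Since $C^{*}\cap A_{j}$ is budget-feasible and contained in $A_{j}$, this yields $\opt(A_{j},B) \ge \tfrac{k-1}{4k}\opt(A,B)$ for both $j$.

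Next I would convert the bound on $\opt(A_1,B)$ into a bound on the sampled estimate $x$. Using that $\alg_{1}$ is an $e$-approximation for non-monotone submodular maximization under a knapsack constraint,
\[
x \;=\; v(\alg_{1}(A_{1})) \;\ge\; \tfrac{1}{e}\opt(A_{1},B) \;\ge\; \tfrac{k-1}{4ek}\opt(A,B),
\]
while trivially $x\le\opt(A,B)$ since $\alg_{1}$ returns a budget-feasible subset of $A_{1}\subseteq A$. Because every $i\in A_{2}$ satisfies $v(i)\le\max_{i\in A}v(i) < \opt(A,B)/k$, these bounds together give
\[
\frac{x}{\beta\cdot\max_{i\in A_{2}}v(i)} \;>\; \frac{(k-1)/(4ek)\cdot\opt(A,B)}{\beta\cdot \opt(A,B)/k} \;=\; \frac{k-1}{4e\beta}.
\]
Hence, the positive integer $\ell = \lfloor\tfrac{k-1}{4e\beta}\rfloor$ (which is $\ge 1$ whenever the claimed bound is non-trivial) satisfies $\max_{i\in A_{2}} v(i) < \tfrac{x}{\ell\beta}$, exactly the hypothesis of Lemma~\ref{lem:SG_approximation}.

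Finally I would invoke Lemma~\ref{lem:SG_approximation} on the execution of \nameref{fig:Simultaneous Greedy} on $(A_{2},v,\mathbf{c}_{A_2},B,x)$ to obtain
\[
v(S) \;\ge\; \min\!\left\{\frac{\ell\,x}{(\ell+1)\beta},\; \frac{1}{6}\!\left(\opt(A_{2},B) - \frac{2x}{\beta}\right)\right\},
\]
and plug in the quantitative bounds derived above. For the first term, $\tfrac{\ell\,x}{(\ell+1)\beta} \ge \tfrac{\ell(k-1)}{e(\ell+1)}\cdot\tfrac{1}{4\beta k}\opt(A,B)$. For the second term, using $\opt(A_{2},B)\ge \tfrac{k-1}{4k}\opt(A,B)$ and $x\le\opt(A,B)$,
\[
\opt(A_{2},B) - \tfrac{2x}{\beta} \;\ge\; \left(\tfrac{k-1}{4k} - \tfrac{2}{\beta}\right)\opt(A,B) \;=\; \tfrac{\beta(k-1)-8k}{4\beta k}\opt(A,B),
\]
so the second term is at least $\tfrac{\beta(k-1)-8k}{6}\cdot\tfrac{1}{4\beta k}\opt(A,B)$. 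Combining these gives exactly the claimed bound, conditional on the event in step two, which has probability at least $\tfrac{1}{2}$.

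The main technical step is step two---verifying that the independent random assignment of each agent to $A_{1}$ or $A_{2}$ induces a valid application of Lemma~\ref{lem:Bei-Leonardi} to the optimal set $C^{*}$, and then being careful that $C^{*}\cap A_{j}$ is itself budget-feasible so that $\opt(A_{j},B)$ inherits the bound. Everything else is bookkeeping: tracking a chain of inequalities through the $e$-approximation of $\alg_{1}$ and feeding the resulting lower and upper bounds on $x$ into Lemma~\ref{lem:SG_approximation}.
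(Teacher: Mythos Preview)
Your proposal is correct and follows essentially the same route as the paper: apply Lemma~\ref{lem:Bei-Leonardi} to the optimal set $C^*$ to lower-bound $\opt(A_j,B)$ for both parts, sandwich $x$ between $\tfrac{k-1}{4ek}\opt(A,B)$ and $\opt(A,B)$ via the $e$-approximation of $\alg_1$, deduce the hypothesis of Lemma~\ref{lem:SG_approximation} with $\ell=\big\lfloor\tfrac{k-1}{4e\beta}\big\rfloor$, and substitute. If anything, you are slightly more explicit than the paper in justifying that $C^*\cap A_j$ is budget-feasible and in noting that the bound is vacuous when $\ell=0$.
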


\begin{proof}
Let $C^*$ be an optimal solution for the given instance. By applying Lemma \ref{lem:Bei-Leonardi} with $T = C^*$ we have that with probability at least $\frac12$ it holds that $v(A_i\cap C^*) \geq \frac{k-1}{4k} v(C^*)$ for both $i\in \{1,2\}$. In what follows we assume that this is indeed the case.
Thus, 
\[\opt(A, B) \ge x=v(\alg_1(A_1))\ge \frac{1}{e} \opt(A_1,  B) \ge \frac{k-1}{4ek} \opt(A, B)\]
and also $\opt(A_2,  B) \ge \frac{k-1}{4k} \opt(A, B)$.\footnote{For the sake of presentation, here we write $x=v(\alg_1(A_1))\ge \frac{1}{e} \opt(A_1,  B)$ rather than the technically correct $\mathbb{E}(x)\ge \frac{1}{e} \opt(A_1,  B)$. However, as discussed in Remark \ref{rem:rand_approx_ratio}, one can formally deal with this issue with a negligible effect on the expected approximation guarantee while keeping the running time polynomial.}

The lower  bound on $x$ paired with the upper bound on $\max_{i\in A}v(i)$, imply that
\[\max_{i\in A}v(i) < \frac{1}{k} \cdot \opt(A, B) \leq \frac{1}{k}\cdot\frac{4ek\beta}{k-1}\cdot \frac{x}{\beta} \leq \frac{1}{\big\lfloor \frac{k-1}{4 e \beta} \big\rfloor} \cdot \frac{x}{\beta} \,. \]

Thus, we can use Lemma \ref{lem:SG_approximation} with $D = A_2$, $x=v(\alg_1(A_1))$ and $\ell=\big\lfloor \frac{k-1}{4 e \beta} \big\rfloor$. Therefore, 
\nameref{fig:Simultaneous Greedy}$(A_2, v, \mathbf{c}_{A_2}, \allowbreak B, x)$ outputs an $S$ such that 
\begin{IEEEeqnarray*}{+rCl+x*}
	v(S) & \ge & \min\left\lbrace \frac{\big\lfloor \frac{k-1}{4 e \beta} \big\rfloor  x}{\left( \big\lfloor \frac{k-1}{4 e \beta} \big\rfloor + 1\right)\beta} \,,\,  \frac{1}{6} \left( \opt(A_2,  B) - \frac{2x}{\beta}\right)  \right\rbrace  \\
	& \ge  &  \min\left\lbrace \frac{\big\lfloor \frac{k-1}{4 e \beta} \big\rfloor (k-1)}{4ek\left( \big\lfloor \frac{k-1}{4 e \beta} \big\rfloor + 1\right) \beta}  \opt(A, B) \,,\,  \frac{1}{6} \left( \frac{k-1}{4k} \opt(A, B) - \frac{2}{\beta}  \opt(A, B)\right)  \right\rbrace  \\
	& \ge  &  \min\left\lbrace \frac{\big\lfloor \frac{k-1}{4 e \beta} \big\rfloor (k-1)}{e  \left( \big\lfloor \frac{k-1}{4 e \beta} \big\rfloor + 1\right) } \,,\,  \frac{\beta (k-1)- 8k}{6}   \right\rbrace \cdot \frac{1}{4  \beta k} \cdot \opt(A, B) \,. & \qedhere %\,. \\* &&& \qedhere
\end{IEEEeqnarray*}
\end{proof}
\vspace{1pt}

\begin{corollary}\label{cor:Main_approximation}
The set $S$ returned by \nameref{fig:Main-Submodular}$(A, v, \mathbf{c}, B)$ satisfies 
\[ 505 \cdot \mathbb{E}(v(S)) \ge \opt(A, B)  \,.\]
\end{corollary}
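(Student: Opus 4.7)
The plan is to perform a case analysis based on whether the optimum value is dominated by a single most valuable agent or not. Let $M = \max_{i\in A} v(i)$ and let $k$ be a parameter to be tuned (I will argue that $k = 101$ works). Since on the first branch of \nameref{fig:Main-Submodular} the mechanism deterministically outputs $i^* \in \argmax_i v(i)$, the total expectation decomposes as
\[
\mathbb{E}(v(S)) \;=\; p \cdot M \;+\; (1-p) \cdot \mathbb{E}\!\left(v(S) \,\big|\, \text{the \nameref{fig:Sample-then-Greedy} branch is taken}\right),
\]
and I can lower bound the two summands separately according to the case.

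In the first case, $\opt(A,B) \le k \cdot M$, only the singleton branch is needed: $\mathbb{E}(v(S)) \ge p \cdot M \ge (p/k) \cdot \opt(A, B)$. In the second case, $\opt(A,B) > k \cdot M$, the hypothesis of Lemma \ref{lem:StG_approximation} is satisfied with this very $k$ and $\beta = 9.185$, so with probability at least $1/2$ the inner call returns a set of value at least $\alpha(k,\beta) \cdot \opt(A,B)$, where $\alpha(k,\beta)$ is the explicit multiplier from the lemma. Using non-negativity on the complementary event, I get $\mathbb{E}(v(S)) \ge (1-p) \cdot \tfrac{1}{2} \cdot \alpha(k,\beta) \cdot \opt(A,B)$. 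Combining the two cases yields
\[
\mathbb{E}(v(S)) \;\ge\; \min\!\left\{\tfrac{p}{k},\; \tfrac{(1-p)\,\alpha(k,\beta)}{2}\right\} \cdot \opt(A,B).
\]

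The remaining step is purely numerical. Setting $k = 101$ and $\beta = 9.185$ makes $\lfloor (k-1)/(4e\beta) \rfloor = 1$, and both arguments of the inner $\min$ in Lemma \ref{lem:StG_approximation} evaluate to roughly $18.4$ (one is $50/e$, the other $110.5/6$), so $\alpha(101, 9.185) \approx 18.4 / (4 \cdot 9.185 \cdot 101) \approx 4.96 \cdot 10^{-3}$; this gives $(1-p)\alpha/2 \approx 1.98 \cdot 10^{-3} \ge 1/505$, while at the same time $p/k = 0.201/101 \approx 1.99 \cdot 10^{-3} \ge 1/505$. The main (and only) obstacle is this tuning: one must pick $k$ large enough that the singleton bound alone is not sufficient to beat $1/505$, yet small enough that Lemma \ref{lem:StG_approximation} still yields a meaningful $\alpha(k,\beta)$ (in particular, one needs $\lfloor (k-1)/(4e\beta) \rfloor \ge 1$ and $\beta(k-1) - 8k$ comparable to the first argument of the inner $\min$), and simultaneously ensure $p/k \ge 1/505$. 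The values $k = 101$, $\beta = 9.185$, $p = 0.201$ are precisely the ones that align all three quantities, which is exactly why the corollary obtains the (seemingly arbitrary) factor $505$.
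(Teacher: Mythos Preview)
Your proof is correct and follows essentially the same approach as the paper: a case split on whether $\max_i v(i) \ge \opt(A,B)/101$, using the singleton branch in the first case and Lemma~\ref{lem:StG_approximation} with $k=101$, $\beta=9.185$ in the second, followed by the same numerical evaluation (yielding $\min\{50/e,\,110.5/6\}/(4\beta k)$ and the factor $(1-p)/2$). The paper's write-up is slightly terser but the logic and the constants are identical.
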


\begin{proof}
Suppose that $\max_{i\in A}v(i) \ge \frac{1}{101} \cdot \opt(A, B)$. Then, with probability $p$ at least $1/101$ of the optimal value is returned. Hence,
\[\mathbb{E}(v(S)) \ge p\cdot \max_{i\in A}v(i) \ge \frac{0.201}{101} \cdot \opt(A, B) \ge \frac{1}{505} \cdot \opt(A, B) \,. \]

Next suppose that $\max_{i\in A}v(i) < \frac{1}{101} \cdot \opt(A, B)$. We may apply Lemma \ref{lem:StG_approximation} with $k=101$. As discussed before the description of mechanism \nameref{fig:Simultaneous Greedy}, the parameter $\beta$ is is equal to $9.185$. This implies that $\big\lfloor \frac{k-1}{4 e \beta} \big\rfloor =1$. By substituting the values of $k$ and $\beta$ to the bound of  Lemma \ref{lem:StG_approximation}, we get that with probability at least $(1-p)/2$ 
\[v(S) \ge   \min\left\lbrace \frac{50}{e } \,,\,  \frac{110.5}{6}   \right\rbrace \cdot \frac{1}{3710.74} \cdot \opt(A, B) \ge    \frac{1}{201.7367} \cdot \opt(A, B)  \,,\]
and thus,
\[\mathbb{E}(v(S)) \ge \frac{1-0.201}{2}\cdot\frac{1}{201.7367} \cdot \opt(A, B) \ge \frac{1}{505} \cdot \opt(A, B) \,.\qedhere \]
\end{proof}

\smallskip\begin{remark}\label{rem:rand_approx_ratio}
	In our mechanisms  we often use randomized approximation algorithms as subroutines. In particular, $\alg_1$ in \nameref{fig:Sample-then-Greedy} and \nameref{fig:Main-Submodular-Online}, $\alg_3$ in \nameref{fig:Main-Constrained}, and $\alg_4$ in \nameref{fig:nMain-Constrained} are all randomized. Yet, in the description of our mechanisms---and more importantly in our analyses---we treat them as if there were deterministic, e.g., we assume that $x$ in line \ref{line:StG_submod_knapsack} of \nameref{fig:Sample-then-Greedy} is at least $\frac{1}{e} \cdot \opt(A_1, v, \mathbf{c}_{A_1}, B)$. While this is not technically accurate, we do so for the sake of presentation. However, we may instead use the fact that for any constants $\delta, \eta >0$, $\alg_1$ can be modified so that \emph{with probability at least $1-\delta$ it returns a solution of value at least $\big(\frac{1}{e}-\eta\big) \cdot \opt(A_1, v, \mathbf{c}_{A_1}, B)$ in polynomial time}, using standard arguments.\footnote{For instance, even without going into the mechanics of $\alg_1$, running the algorithm 
		$\frac{2\log_2{(1-\delta)}}{\log_2{(e-1)} - \log_2{((1+\eta)e -1)}}$ times and keeping the best solution suffices. } Thus, for any constant $\varepsilon>0$ the analysis of \nameref{fig:Main-Submodular} can be adjusted to hold for an approximation factor $505+\varepsilon$ instead. Similarly, the analyses of our other mechanisms can be adjusted accordingly. Actually, since there is some slack in the approximation ratios derived in this work, this extra $\varepsilon$ due to the  randomized subroutines can, in fact, be ``hidden'' in the current ratios.
\end{remark}

% % % % % % % % % % % % % % % % % %
% % % % % % % % % % % % % % % % % % 
\section{Online Procurement}
\label{sec:online}
% % % % % % % % % % % % % % % % % %
% % % % % % % % % % % % % % % % % %
Note that the mechanism presented in the last section already bares some resemblance to online algorithms for the \emph{secretary model} (although truthfulness is rarely a requirement there). Namely, a part of the input is only used to estimate the quality of the optimal solution and then, based on that estimation, some threshold is set for the remaining instance. On a high level, this is straightforward to adjust for the secretary model; we use the first (roughly) half of the stream of agents to find an estimate of $\opt(A, B)$ and then set a threshold similar to the one in \nameref{fig:Simultaneous Greedy}. However, there are a few issues one has to deal with.

First, \nameref{fig:Simultaneous Greedy} goes through the agents in a specific order (in decreasing order of the maximum marginal value with respect to either one of the two constructed sets). Even though this fact is indeed used in the proof of Lemma \ref{lem:SG_approximation}, we show that even examining agents in arbitrary order works well, albeit with a somewhat worse approximation factor. Note that this is not true when there are other constraints on top of the budget-feasibility requirement, as in Section \ref{sec:combinatorial}.

Second, towards the end, in line \ref{line:SG_unconstrained}, \nameref{fig:Simultaneous Greedy} runs an unconstrained submodular maximization algorithm on $S_1$ and $S_2$ to possibly reveal a subset of them with much higher value. While this is a critical step, we rely on a very elegant result of Feige et al.~\shortcite{FeigeMV11}: a uniformly random set gives a 4-approximation for the unconstrained problem. Thus, every agent that passes the threshold and is added to $S_j$ is only accepted to $T_j$ with probability $1/2$. The actual output of the mechanism is a random choice $S$ between $S_1$, $S_2$, $T_1$ and $T_2$, made \emph{before} the arrival of the first agent. So, while the four sets are built obliviously with respect to the choice of $S$, the agents added to $S$ are irrevocably chosen while everyone else is irrevocably discarded.

One last issue is that we want the mechanism to occasionally return the single most valuable agent. This, however, is easily resolved by running \emph{Dynkin's algorithm} \citep{Dynkin1963} with constant probability instead. 
This mechanism samples the first $n/e$ agents and then it picks the first agent $i'$, among the remaining agents, who is at least as good as the best agent in the sample, i.e., $v(i')\ge\max_{k\le n/e}v(i_k)$. This guarantees that $\mathbb{E}(v(i'))\ge\frac{1}{e} \max_{i\in A}v(i)$, where the expectation is over the order of the agents.

The mechanism \nameref{fig:Main-Submodular-Online} below incorporates all these adjustments, yet maintains all the good properties of \nameref{fig:Main-Submodular}.  
We assume a secretary setting, where the agents arrive \emph{uniformly at random}. In particular, agents have no control over their arrival time, so this is still a single-parameter environment and truthfulness still means \emph{universal truthfulness}, i.e., if we fix the random bits of the mechanism, then \emph{for any arrival order} no agent has an incentive to lie.  
Moreover, note that \nameref{fig:Main-Submodular-Online} is \emph{order oblivious}. That is, Theorem \ref{thm:Main-Submodular-Online} holds even when the order of samples and the order of values are (separately) chosen by an adversary.

Again, $\alg_1$ is the $e$-approximation algorithm of Kulik et al.~\shortcite{KulikST13}. The parameter $\beta$ is set to $8.725$ and, like the parameter in \nameref{fig:Simultaneous Greedy}, is only relevant for the approximation factor. 

\begin{theorem}\label{thm:Main-Submodular-Online}
	\nameref{fig:Main-Submodular-Online} is a  universally truthful, individually rational, budget-feasible online mechanism  
	and achieves an $O(1)$-approximation in the secretary model.
\end{theorem}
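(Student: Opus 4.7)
The plan is to verify each of the four properties in turn, treating the approximation guarantee as the main obstacle. The first three follow the offline template (Lemmas \ref{lem:SG_monotone} and \ref{lem:SG_budget-feasible}) with small adjustments for the streaming setting, while the approximation analysis requires adapting Lemma \ref{lem:SG_approximation} to agents arriving in arbitrary order and then combining it with a Feige--Mirrokni--Vondr\'ak-style bound in place of $\alg_2$.

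\emph{Truthfulness, IR, and budget-feasibility.}
In the secretary model agents cannot control their arrival position, so this is still a single-parameter environment. Fix an agent $i$ together with the bids of all agents arriving before him; then the sample estimate $x$ and the state of $S_1, S_2$ upon $i$'s arrival (and hence the threshold $\frac{\beta B}{x} v(i\,|\,S_{\hat{\jmath}})$ offered to $i$) do not depend on $c_i$. If $i$ is admitted at bid $c_i$, he is also admitted at any lower bid, so the allocation rule is monotone and Myerson's lemma (Lemma \ref{lem:myerson}) gives universal truthfulness and individual rationality. For budget-feasibility, the crucial point is that the mechanism commits to returning one of $S_1, S_2, T_1, T_2$ before any agent arrives, and the independent $1/2$-coin placing each accepted agent into the corresponding $T_{\hat{\jmath}}$ is bid-independent. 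Hence the threshold payment for an agent who ends up in $T_j$ coincides with his threshold for entry into $S_j$, and the telescoping calculation from the proof of Lemma \ref{lem:SG_budget-feasible} gives $\sum_{i \in S_j} \frac{\beta B}{x} v(i\,|\,S_j^{(k)}) \le B$; since $T_j \subseteq S_j$, the same bound holds for $T_j$. Dynkin's branch pays at most $B$ to its one winner by Remark \ref{rem:bounded_costs}.

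\emph{Approximation --- case split and sample quality.}
I follow the high-level split from Corollary \ref{cor:Main_approximation}. If some agent has value at least $\frac{1}{k}\opt(A,B)$ for a suitable constant $k$, then invoking Dynkin's algorithm with constant probability yields expected value at least $\frac{1}{e} \max_i v(i) = \Omega(\opt(A,B))$, so we are done. Otherwise, applying Lemma \ref{lem:Bei-Leonardi} to the optimal set yields that, with probability at least $\frac{1}{2}$, both $\opt(A_1, B)$ and $\opt(A_2, B)$ are $\Theta(\opt(A,B))$, and hence $x = v(\alg_1(A_1)) = \Theta(\opt(A,B))$ as well.

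\emph{The main obstacle: adapting Lemma \ref{lem:SG_approximation}.}
The hard part is that examining agents in the arbitrary order imposed by the stream (rather than in decreasing marginal-value-per-cost order) breaks Case~1 of the proof of Lemma \ref{lem:SG_approximation}, which relied on the ordering to argue that the first $s$ accepted agents each dominate the obstructing agent $i_k$ in marginal value. I would replace it by the simpler observation that, if some budget $B_{\hat{\jmath}}$ is ever exhausted by accepted agents, the telescoping identity directly gives $v(S_{\hat{\jmath}}) \ge \tfrac{x}{\beta B}\bigl(B - B_{\hat{\jmath}}^{\mathrm{final}}\bigr) = \Omega(x/\beta) = \Omega(\opt(A,B))$, and we are done. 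Case~2 of the proof is order-independent: it uses only submodularity (Definition \ref{def:SM}\ref{def:sub2} and Definition \ref{def:SM}\ref{def:sub3}) and the per-agent rejection inequality $c_{i_k} > \tfrac{\beta B}{x} v(i_k\,|\,S_j^{(k)})$ for $i_k \in R$, so it goes through essentially verbatim and yields $\opt(A_2,B) \le 2 v(T_1) + 2 v(T_2) + v(S_1) + v(S_2) + O(x/\beta)$. To control $v(T_j)$ online---where we cannot afford to run an unconstrained optimizer on $S_j$---I invoke the Feige--Mirrokni--Vondr\'ak bound: since each $T_j$ is obtained by an independent $1/2$-coin over $S_j$, $\mathbb{E}[v(T_j)\,|\,S_j] \ge \tfrac{1}{4}\opt(S_j, \infty) \ge \tfrac{1}{4} v(C^* \cap S_j)$, for $C^*$ an optimal solution. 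Combining this with the adapted Case~2 bound, and using that $S$ is drawn uniformly from $\{S_1,S_2,T_1,T_2\}$ before any arrival, we get $\mathbb{E}[v(S)] \ge \tfrac{1}{4} \mathbb{E}\bigl[\max\{v(S_1),v(S_2),v(T_1),v(T_2)\}\bigr] = \Omega(\opt(A,B))$. Summing the two probabilistic branches with their respective weights yields the claimed $O(1)$-approximation.
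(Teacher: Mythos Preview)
Your approach matches the paper's: same case split on $\max_i v(i)$, same use of Lemma \ref{lem:Bei-Leonardi} for the sample, same simplification of Case~1 (dropping the ordering argument and absorbing the lost term into $\max_i v(i)$), and the same replacement of $\alg_2$ by the Feige--Mirrokni--Vondr\'ak random-subset bound. Two inaccuracies to fix. First, in Case~2 you cannot carry over the offline inequality $\opt(A_2,B)\le 2v(T_1)+2v(T_2)+\cdots$ verbatim: the factor $2$ came from $T_j=\alg_2(S_j)$ being a $2$-approximation, whereas online $T_j$ is a random subset and Theorem~\ref{thm:Feige_et_al} only gives $v(C_j)\le\opt(S_j,\infty)\le 4\,\mathbb{E}[v(T_j)]$. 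So the correct inequality is $\opt(A_2,B)\le 4\,\mathbb{E}[v(T_1)]+4\,\mathbb{E}[v(T_2)]+v(S_1)+v(S_2)+2x/\beta$; your subsequent ``to control $v(T_j)$ I invoke FMV'' is then redundant, since that is precisely where FMV already entered. Second, $S$ is \emph{not} drawn uniformly from $\{S_1,S_2,T_1,T_2\}$: the mechanism uses weights $1/10,1/10,2/5,2/5$, chosen exactly so that $4\,\mathbb{E}[v(T_1)]+4\,\mathbb{E}[v(T_2)]+v(S_1)+v(S_2)=10\,\mathbb{E}[v(S)]$. Neither slip is fatal for the $O(1)$ claim, but both should be corrected.
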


\begin{proof}
	Fix any particular arrival order $i_1, i_2, \ldots, i_n$ of the agents.	
	
	By fixing the sequence $\rho$ of the random bits of the mechanism, we get a deterministic allocation rule $\nameref{fig:Main-Submodular-Online}(\rho)$. 
	In the case where this is Dynkin's algorithm, it is straightforward that---coupled with the threshold payment of $B$ to the possible winner---it is truthful, individually rational and budget-feasible.
	Otherwise, i.e., if lines \ref{line:MSO_initial}-\ref{line:MSO_output} are executed, the proof of monotonicity, and thus of truthfulness and individual rationality, of $\nameref{fig:Main-Submodular-Online}(\rho)$ is virtually identical to the proof of Lemma \ref{lem:SG_monotone}. 
	Similarly, the budget-feasibility is proved exactly like the budget-feasibility of \nameref{fig:Simultaneous Greedy} in Lemma \ref{lem:SG_budget-feasible}.
	
\medskip

\begin{algorithm}[H]
	\DontPrintSemicolon 
	\NoCaptionOfAlgo
	\algotitle{\textnormal{\textsc{GenSm-Online}}}{fig:Main-Submodular-Online.title}
	%{\small
	\WP{With probability $q=0.4$}{Run Dynkin's algorithm and \Return the winner }
	\WP{With probability $1-q$}{  
		$S_1 =  S_2 = T_1 = T_2 = \emptyset$; $B_1 =  B_2 = B$  \label{line:MSO_initial}\; 
		$S=\begin{cases}
		S_j \,, & \text{with probability 1/10, for each } j\in\{1,2\}\\
		T_j \,, & \text{with probability 2/5, for each } j\in\{1,2\}
		\end{cases}$ \label{line:MSO_random_s}\;
		Draw $\xi$ from the binomial distribution $\mathcal B(n,0.5)$ \; 
		Let $A_1$ be the set of the first $\xi$ agents, and $A_2=A\mysetminus A_1$\;
		Reject all the agents in $A_1$ and calculate $x=v(\alg_1(A_1))$ \;
		\For{each $i\in A_2$ as he arrives}{
			Let $\hat{\jmath} \in \argmax_{j\in\{1,2\}} v(i|S_j)$\label{line:MSO_j} \;
			\If{$c_{i}\le \frac{\beta B}{x} v(i|S_{\hat{\jmath}}) \le B_{\hat{\jmath}}$ \label{line:MSO_condition}}{
				$S_{\hat{\jmath}} = S_{\hat{\jmath}} \cup \{i\}$ \;
				$B_{\hat{\jmath}} = B_{\hat{\jmath}} - \frac{\beta B}{x} v(i|S_{\hat{\jmath}})$\;
				With probability $1/2$, $T_{\hat{\jmath}} = T_{\hat{\jmath}} \cup \{i\}$; otherwise, $T_{\hat{\jmath}} = T_{\hat{\jmath}}$ \label{line:MSO_random_t} \;
				Update $S$ \tcc*{{\footnotesize the update is consistent to the choice made in line \ref{line:MSO_random_s}}}
			}
		}
		\Return $S$ \label{line:MSO_output}
	}
	%}
	\caption{\textsc{GenSm-Online}$(A, v, \mathbf{c}, B)$} \label{fig:Main-Submodular-Online} 
\end{algorithm}\medskip

Since \nameref{fig:Main-Submodular-Online} is a probability distribution over %Dynkin's algorithm and 
\nameref{fig:Main-Submodular-Online}$(\rho)$ for all possible $\rho$, we conclude that it is universally truthful, individually rational and budget-feasible.
Also, given that Dynkin's algorithm and $\alg_1$ run in polynomial time and that the payments are easily determined, \nameref{fig:Main-Submodular-Online} runs in polynomial time.

	It remains to show that the solution returned by the mechanism is a constant approximation of the offline optimum. It is not hard to see that when the most valuable agent is comparable to the optimal solution, then Dynkin's algorithm suffices to guarantee an overall good performance. In particular, suppose that $\max_{i\in A}v(i) \ge \frac{1}{250} \cdot \opt(A, B)$. Then, with probability $q$ at least $1/e$ of the $1/250$ of the optimal value is returned in expectation (with respect to  the arrival order). Hence, if $X$ is the (possibly empty) set returned by \nameref{fig:Main-Submodular-Online}
	\[\mathbb{E}(X) \ge \frac{q}{e}  \cdot \max_{i\in A}v(i) \ge \frac{q}{e} \cdot \frac{1}{250} \cdot \opt(A, B) \ge \frac{1}{1710} \cdot \opt(A, B)\,. \]
	
	For the case where $\max_{i\in A}v(i) < \frac{1}{250} \cdot \opt(A, B)$, we are going to prove the analog of Lemma \ref{lem:StG_approximation}. 
	First, notice that randomly ordering the elements of $A$ and then picking the first $\xi$, where $\xi$ follows the binomial distribution $\mathcal{B}(n,0.5)$, is equivalent to just picking each element of $A$ with probability $1/2$. This simple observation is crucial, because it allows to still use Lemma \ref{lem:Bei-Leonardi}. So, assume  it is the case that $\opt(A_i,  B) \ge \frac{k-1}{4k} \opt(A, B)$ for $i\in\{1,2\}$, where $k=250$. Unless otherwise stated, all expectations below are conditioned on this fact. Recall that this happens with probability at least $1/2$ as discussed in the beginning of the proof of Lemma \ref{lem:StG_approximation}.
	
	We will follow a similar case analysis as in the proof of Lemma \ref{lem:SG_approximation}, depending on whether the set $R_B$, defined below, is empty or not.  
	Similarly to the notation used in Section \ref{sec:offline}, let $i_1, i_2, \ldots, i_{n-\xi}$ be the agents of $A_2$ ordered according to their arrival. Also, let  $S_{1}^{(k)}$, $B_{1}^{(k)}$, $S_{2}^{(k)}$, $B_{2}^{(k)}$ denote $S_1$, $B_1$, $S_2$, $B_2$, respectively, at the time $i_k$ arrives. We will use $S_{1}$ and $S_{2}$ exclusively for their final versions. Let $R=A_2\mysetminus (S_1\cup S_2)$ be the agents $i_k$ that were rejected, i.e., not added to either $S^{(k)}_1$ or $S^{(k)}_2$. 
	We again partition $R$ depending on why the agents where rejected, i.e., $R_{\mathbf{c}}$ (resp.~$R_B$) contains everyone rejected because the first (resp.~the second) inequality in line \ref{line:MSO_condition} was violated. 
	\smallskip
	
	\noindent \textbf{Case 1.} Assume that  $R_B\neq \emptyset$ and let $i_k\in R_B$. If $j_k$ is the value of $\hat{\jmath}$ chosen in line \ref{line:MSO_j}, then $\frac{\beta  B}{x} v\left( i_k|S^{(k)}_{j_k}\right)  > B^{(k)}_{j_k}$. Using the exact same argument leading to \eqref{eq:SG_approximation_1} (see proof of Lemma \ref{lem:SG_approximation}), we get
	\begin{equation*}
	v\left( S^{(k)}_{j_k}\right) \ge  \frac{x}{\beta}  -  v\left( i_k|S^{(k)}_{j_k}\right) \ge \frac{x}{\beta}  -  \max_{i\in A}v(i)\,. 
	\end{equation*}
	Given the known lower bound on $x$ and upper bound on $\max_{i\in A}v(i)$, this leads to 
	\begin{equation}\label{eq:MSO_approximation_1}
	v(S_{j_k})  \ge \left( \frac{k-1}{4 e k \beta}  -  \frac{1}{k}\right) \cdot \opt(A, B)\,. 
	\end{equation}
	Before we lower bound $\mathbb{E}(v(S))$, it not hard to see that $\mathbb{E}(v(T_j)) \ge \frac{1}{2} v(S_j)$, where the expectation is over the random choices made in line \ref{line:MSO_random_t}. In fact, this is a direct corollary of the non-negativity of $v$ and the following well-known probabilistic property of submodular functions.
	\begin{lemma}[\textnormal{Feige et al.~\shortcite{FeigeMV11}}]
		Let $g :2^X \to \mathbb{R}$ be submodular. Denote by $A[p]$ a random subset of $A$ where each element appears with probability $p$. Then $\mathbb{E}(g(A[p])) (1 - p)\cdot g(\emptyset) + p\cdot g(A)$.
	\end{lemma} 
	By taking the expectation of $v(S)$ over the random choices made in lines \ref{line:MSO_random_t} and \ref{line:MSO_random_s}, we get
	\begin{IEEEeqnarray}{rCl}\label{eq:MSO_approx_exp_1}
		\mathbb{E}(v(S)) & =  & \frac{1}{10}\cdot v(S_1) +\frac{1}{10}\cdot v(S_2) + \frac{2}{5}\cdot \mathbb{E}(v(T_1)) + \frac{2}{5}\cdot \mathbb{E}(v(T_2)) \nonumber\\ 
		& \ge & \frac{1}{10}\cdot v(S_{j_k}) + \frac{2}{5}\cdot \frac{1}{2}\cdot v(S_{j_k}) \nonumber\\
		& \ge  &  \frac{3}{10}\cdot\left( \frac{k-1}{4 e k \beta}  -  \frac{1}{k}\right)\cdot  \opt(A, B) \nonumber\\ 
		& \ge & \frac{1}{512}\cdot  \opt(A, B)  \,.
	\end{IEEEeqnarray}
	\smallskip

	\noindent \textbf{Case 2.} Assume that  $R_B = \emptyset$. Let $C^*$ be an optimal solution for the instance $(A_2, v, \mathbf{c}_{A_2}, B)$ and $C_1= C^* \cap S_1$, $C_2= C^* \cap S_2$, $C_3 = C^* \mysetminus (C_1\cup C_2)$. Recall inequality \eqref{eq:SG_approximation_3} (see proof of Lemma \ref{lem:SG_approximation}):
	\begin{equation}
	\opt(A_2, B) = v(C^*) \le v(C_1) + v(C_2) + v(C_3) \,.  \tag{\ref{eq:SG_approximation_3}}
	\end{equation}
	To upper bound the value of $C_1$ and $C_2$ we need the following result by Feige et al.~\shortcite{FeigeMV11}.
	\begin{theorem}[\textnormal{Feige et al.~\shortcite{FeigeMV11}}]\label{thm:Feige_et_al}
		Let $v:2^A\to \mathbb{R}_{\ge 0}$ be a submodular function and let $T$ denote 
		a random subset of $A$, where each element is sampled independently with probability $1/2$.
		Then $\mathbb{E}(v(T))\ge \frac{1}{4} \opt(A, v, \infty)$.
	\end{theorem}
	By the definition of $T_1, T_2$ and Theorem \ref{thm:Feige_et_al}, we get 
	\begin{equation}\label{eq:MSO_approximation_2}
	v(C_j) \le \opt(S_j,  B) = \opt(S_j, \infty) \le 4 \cdot \mathbb{E}(v(T_j))\,, \text{\ \ for\ \ } j\in\{1,2\} \,. 
	\end{equation}
	For upper bounding $v(C_3)$ recall inequality \eqref{eq:SG_approximation_5} (see proof of Lemma \ref{lem:SG_approximation}):
	\begin{equation}
	v(C_3) \le  v(S_1\cup C_3) + v(S_2\cup C_3)  \,.\tag{\ref{eq:SG_approximation_5}}
	\end{equation}
	Using the same arguments leading to \eqref{eq:SG_approximation_6} (see proof of Lemma \ref{lem:SG_approximation}), we get
	\begin{equation}\label{eq:MSO_approximation_3}
	v(S_j\cup C_3)\le   v(S_j) +  \frac{x}{\beta}\,, \text{\ \ for \ \ } j\in\{1,2\} \,.
	\end{equation}
	We may now combine \eqref{eq:SG_approximation_3}, \eqref{eq:MSO_approximation_2}, \eqref{eq:SG_approximation_5} and \eqref{eq:MSO_approximation_3}. Note that $\mathbb{E}(v(T_j)), j\in\{1,2\}$, below is over the random choices in line \ref{line:MSO_random_t}, while $\mathbb{E}(v(S_j)), j\in\{1,2\}$, is over the random choices in both line \ref{line:MSO_random_t} and line \ref{line:MSO_random_s}.
	\begin{IEEEeqnarray*}{rCl}
		\frac{k-1}{4k} \opt(A, B) & \le &	\opt(A_2, B)\\
		& \le & 4 \cdot \mathbb{E}(v(T_1)) + 4 \cdot \mathbb{E}(v(T_2)) + v(S_1) + v(S_2) +  \frac{2x}{\beta} \\
		& =  &  10 \cdot \mathbb{E}(v(S)) +  \frac{2x}{\beta} \\
		& \le & 10 \cdot \mathbb{E}(v(S)) +  \frac{2}{\beta} \cdot\opt(A, B)\,, 
	\end{IEEEeqnarray*}
	or, equivalently, 
	\begin{IEEEeqnarray}{rCl}\label{eq:MSO_approx_exp_2}
		\mathbb{E}(v(S)) & \ge  & \frac{1}{10} \left(\frac{k-1}{4k}  - \frac{2}{\beta}\right) \cdot \opt(A, B) \nonumber\\ 
		& \ge & \frac{1}{513}\cdot  \opt(A, B)  \,.
	\end{IEEEeqnarray}
	
	Therefore, given that both $A_1$ and $A_2$ contain a good fraction of the optimal budget-feasible solution, the expectation of $v(S)$ is always at least $\frac{1}{513}\cdot  \opt(A, B)$. Coupled with Lemma \ref{lem:Bei-Leonardi}, this means that the unconditional expectation of $v(S)$ is  at least $\frac{1}{2}\cdot \frac{1}{513}\cdot  \opt(A, B)$.
	
	Hence, if $X$ is the set returned by \nameref{fig:Main-Submodular-Online}, by the law of total expectation, we have
	\[\mathbb{E}(X) \ge (1-p)  \cdot \frac{1}{2}\cdot \frac{1}{513}\cdot  \opt(A, B) = \frac{1}{1710} \cdot \opt(A, B)\,. \]
	We conclude that \nameref{fig:Main-Submodular-Online} achieves, in expectation, an $1710$-approximation.
\end{proof}

One immediate consequence of Theorem \ref{thm:Main-Submodular-Online} is the existence of an $O(1)$-approximation algorithm for the non-monotone \emph{Submodular Knapsack Secretary Problem}  (SKS). To the best of our knowledge \nameref{fig:Main-Submodular-Online} is the first such algorithm (see also Remark \ref{rem:SKS}).

Formally, an instance of SKS consists  of a ground  set $A=[n]$,  a non-negative  submodular  objective $v: A \to \mathbb{R}_+$ and  a  given budget $B$. The elements of $A$ arrive in a uniformly random order and each element must be accepted or rejected  immediately upon arrival. 
An algorithm for SKS has access to $n=|A|$, to the costs of items that have arrived (i.e., each cost is revealed upon arrival)
and to a value oracle that, given a subset $S\subseteq A$ of elements that have already arrived, returns $v(S)$.
The objective is to accept a set of elements maximizing $v$ without exceeding the budget.

It is straightforward to see that the only difference of SKS with the online procurement problem studied in this section is the information about the costs.  In SKS there is no notion of misreporting a cost and thus it can be seen as a special case of our online  problem where agents are guaranteed to always reveal their true costs.

\begin{corollary}\label{cor:SKS}
There is an $O(1)$-approximation algorithm for the non-monotone SKS.
\end{corollary}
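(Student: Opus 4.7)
The plan is to observe that \nameref{fig:Main-Submodular-Online} can be used directly as an algorithm for SKS, and then simply import the guarantees already proved in Theorem \ref{thm:Main-Submodular-Online}. As noted in the paragraph preceding the corollary, the only substantive difference between SKS and our online procurement setting is that in SKS the cost of each element is public upon arrival and there is no strategic behaviour, whereas in the procurement setting costs are private and agents may misreport. Since Theorem \ref{thm:Main-Submodular-Online} asserts that \nameref{fig:Main-Submodular-Online} is universally truthful, running it on an SKS instance with the revealed (i.e., truly reported) costs produces exactly the same distribution over outputs as if the agents were strategic and reporting truthfully in the procurement version.

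The steps I would carry out are as follows. First, I would formally define the reduction: given an SKS instance $(A, v, \mathbf{c}, B)$, feed it as input to \nameref{fig:Main-Submodular-Online}, treating each element as an agent that, upon arrival, ``reports'' its actual cost. Because the algorithm only accesses the costs of arrived elements (it queries $c_i$ when $i$ appears in lines involving the threshold condition) and the value oracle on subsets of already-arrived elements (in the computation of $v(\alg_1(A_1))$ and in line \ref{line:MSO_j}), it respects the online information model required by SKS. Second, I would check budget-feasibility: by Theorem \ref{thm:Main-Submodular-Online} the \emph{payments} satisfy $\sum_i p_i \leq B$, and since payments are threshold prices that dominate the declared costs of winners, the total \emph{cost} of the selected set is also at most $B$, as required by SKS.

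Third, and most importantly, I would transfer the approximation guarantee. Theorem \ref{thm:Main-Submodular-Online} gives $\mathbb{E}(v(S)) \ge \frac{1}{1710}\cdot \opt(A, B)$ with the expectation taken over the mechanism's internal randomness and the uniformly random arrival order of the agents; this is precisely the objective and model of SKS, so the same bound is an $O(1)$-approximation guarantee for the SKS algorithm.

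I do not anticipate any technical obstacle: the corollary is essentially a restatement of Theorem \ref{thm:Main-Submodular-Online} under the observation that truthfulness makes the strategic setting at least as hard as the non-strategic one. The only subtle point worth verifying is that \nameref{fig:Main-Submodular-Online} never queries the value oracle on elements that have not yet arrived (so that it is a legitimate online algorithm in the SKS sense), which can be checked by a quick inspection of its pseudocode.
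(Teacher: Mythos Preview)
Your proposal is correct and follows exactly the paper's own reasoning: the paper does not give a separate proof of the corollary but simply observes, in the paragraph preceding it, that SKS is the special case of the online procurement problem in which agents always report their true costs, so the guarantee of Theorem~\ref{thm:Main-Submodular-Online} carries over directly. Your additional checks (budget-feasibility of the selected set via $c_i\le p_i$, and that the value oracle is only queried on already-arrived elements) are sound and just make explicit what the paper leaves implicit.
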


\begin{remark}\label{rem:SKS}
Bateni et al.\shortcite{BateniHZ13} give an $O(1)$-approximation algorithm for the \emph{monotone} SKS. They claim, without a proof, that their result extends to the non-monotone SKS as well, using the same ideas that work for the Submodular Secretary problem with a cardinality or  a matroid constraint (rather than a knapsack constraint). However, this does not seem to be the case. It is indeed true that these ideas do pair well with cardinality and, more generally, (intersection of) matroid constraints and they have been recently used to obtain fast randomized algorithms with good approximation guarantees \citep{BuchbinderFNS14,FeldmanHK17}.
Unfortunately, to the best of our knowledge, there are no examples in the literature where this \emph{random greedy} approach does work for non-monotone submodular maximization with knapsack constraints.
\end{remark}

% % % % % % % % % % % % % % % % % %
% % % % % % % % % % % % % % % % % % 
\section{Adding Combinatorial Constraints}
\label{sec:combinatorial}
% % % % % % % % % % % % % % % % % %
% % % % % % % % % % % % % % % % % %
To illustrate the applicability of our approach, we turn to the case where the solution has to satisfy some additional combinatorial constraint. With the exception of additive valuation functions \citep{ABM16,LeonardiMSZ17}, even for \emph{monotone} submodular objectives no polynomial-time 
mechanisms using only value queries 
are known. Here we show that the general approach of \nameref{fig:Main-Submodular} can be utilized to achieve an $O(p)$-approximation for $p$-systems, i.e., for independence systems with \emph{rank quotient} at most $p$. In particular, as stated in Corollary \ref{cor:Constrained-Submodular}, this implies constant factor approximation for cardinality, matroid and matching constraints. As it is shown in Section \ref{sec:lower_bounds}, going beyond independence systems (e.g., require that the solution forms a spanning tree) is hindered by strong impossibility results. 

\begin{definition}
An \emph{independence system} is a pair $(U,\mathcal{I})$, where $U$ is a finite set and $\mathcal{I}\subseteq 2^U$ is a family of subsets, whose members are called the \emph{independent sets} of $U$ and satisfy: 
\begin{enumerate}[label=\emph{(\roman*)},itemsep=4pt,topsep=4pt]
	\item $\emptyset \in \mathcal{I}$, and
	\item if $B\in \mathcal{I}$ and $A\subseteq B$, then $A\in \mathcal{I}$.
\end{enumerate}
Given a set $S\subseteq U$, a maximal independent set contained in $S$ is called a \emph{basis} of $S$. The \emph{upper rank} $\mathrm{ur}(S)$ (resp.~the \emph{lower rank} $\mathrm{lr}(S)$) is defined as the cardinality of a largest (resp.~smallest) basis of $S$. A \emph{$p$-system} $(U,\mathcal{I})$ is an independence system such that $\max_{S\subseteq U}\frac{\mathrm{ur}(S)}{\mathrm{lr}(S)} \le p$.
\end{definition}

For the sake of readability, we present the case of \emph{monotone} submodular objectives here; the non-monotone case is deferred to Appendix \ref{app:sec_combinatorial}. A technical highlight of our analysis, later used for the non-monotone case as well,  is Claim \ref{claim:p-system}. The claim crucially depends on the order we consider the agents, in order to bound the value lost because of the $p$-system constraint.

As usual, we assume the existence of an independence oracle. In particular, when we write that $\mathcal{I}$ is part of the input of the mechanism, we mean that the mechanism has access to a membership oracle for $\mathcal{I}$. 
The parameter $\beta$ is later set to $13/3$. $\alg_3$ in line \ref{line:MC_submod_knapsack_p-system} can be any polynomial time approximation algorithm for monotone submodular maximization subject to a knapsack and a $p$-system constraint. Here we assume the $(p+3)$-approximation algorithm of Badanidiyuru and Vondr{\'{a}}k \shortcite{BadanidiyuruV14}.

\medskip

\begin{algorithm}[H]
	\DontPrintSemicolon 
	\NoCaptionOfAlgo
	\algotitle{\textnormal{\textsc{MonSm-Con\-strained}}}{fig:Main-Constrained.title}
	%{\small
	\WP{With probability $q=0.2$}{\Return $i^*\in \argmax_{i\in A} v(i)$ }
	\WP{With probability $1-q$}{  
		Put each agent of $A$ in either $A_1$ or $A_2$ independently at random with probability $\frac{1}{2}$ \;
		$x=v(\alg_3(A_1))$  \tcc*{{\footnotesize a $(p+3)$-approximation of $\opt(A_1, v, \mathbf{c}_{A_1}, B)$}}  \label{line:MC_submod_knapsack_p-system}  
		$S= \emptyset$; $B_R = B$; $U=A_2$  \;
		\While{$U\neq \emptyset$}{
			Let $\hat{\imath} \in \argmax_{i\in U} v(i|S)$ \label{line:MC_argmax}\;
			\If{$c_{\hat{\imath}}\le \frac{\beta  B}{x} v(\hat{\imath}|S) \le B_R$  {\rm{\textbf {and}}} $S \cup \{\hat{\imath}\} \in \mathcal{I}$ \label{line:MC_condition}}{
				$S = S \cup \{\hat{\imath}\}$ \;
				$B_R = B_R - \frac{\beta  B}{x} v(\hat{\imath}|S)$\;
			}
			$U = U\mysetminus\{\hat{\imath}\}$ \label{line:MC_rejection}\;
		}
		\Return $S$
	} 
	%}
	\caption{\textsc{MonSm-Constrained}$(A, \mathcal{I}, v, \mathbf{c}, B)$} \label{fig:Main-Constrained} 
\end{algorithm}%\medskip 

\begin{theorem}\label{thm:Constrained-Submodular}
	Assuming that the solution has to be an independent set of a $p$-system, there is a  universally truthful, individually rational, budget-feasible, $O(p)$-approximation mechanism that runs in polynomial time for (non-monotone) submodular objectives.
\end{theorem}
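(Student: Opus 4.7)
My plan is to prove Theorem \ref{thm:Constrained-Submodular} by closely following the template developed in Section \ref{sec:offline} for \nameref{fig:Main-Submodular}, handling first the monotone case via \nameref{fig:Main-Constrained} and then indicating the modifications for the non-monotone case (deferred to the appendix). I would begin with the mechanism-design properties, which are direct adaptations of Lemmata \ref{lem:SG_monotone} and \ref{lem:SG_budget-feasible}: a winning agent $\hat\imath$ is effectively offered a take-it-or-leave-it threshold price $\tfrac{\beta B}{x} v(\hat\imath|S)$ at the moment of consideration, and since the agent's bid enters only the acceptance test in line \ref{line:MC_condition}, lowering a winning bid cannot change the outcome. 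Truthfulness and individual rationality then follow from Myerson's lemma, while the threshold payments telescope to at most $B - B_R^{\text{final}} \leq B$, giving budget-feasibility. The outer coin flip that occasionally returns the best singleton (paid $B$) is trivially truthful and budget-feasible.

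For the $O(p)$-approximation I would dichotomize on $\max_{i \in A} v(i)$ versus $\opt(A,\mathcal{I},B)$. If some singleton captures $\Omega(\opt/p)$ value, the probability-$q$ branch gives the desired ratio. Otherwise, applying Lemma \ref{lem:Bei-Leonardi} with $k = \Theta(p)$ guarantees that with probability at least $1/2$ both $A_1$ and $A_2$ retain an $\Omega(1)$ fraction of $\opt$; hence $x = v(\alg_3(A_1)) = \Omega(\opt/p)$ since $\alg_3$ is a $(p+3)$-approximation. I would then analyze the greedy loop on $A_2$ by partitioning the rejected agents into $R_B$, $R_{\mathbf{c}}$, and $R_{\mathcal{I}}$ (violating the budget, cost, and independence checks respectively). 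If $R_B \neq \emptyset$, the same telescoping argument that led to \eqref{eq:SG_approximation_1} yields $v(S) \geq \tfrac{x}{\beta} - \max_i v(i) = \Omega(\opt/p)$ immediately. The remaining case requires bounding $v(C^*)$ for an optimal $C^* \subseteq A_2$ via monotonicity and submodularity as
\begin{equation*}
v(C^*) \leq v(S \cup C^*) \leq v(S) + \sum_{i \in (C^*\setminus S)\cap R_{\mathbf{c}}} v(i|S) + \sum_{i \in (C^*\setminus S)\cap R_{\mathcal{I}}} v(i|S),
\end{equation*}
where never-considered agents contribute non-positively by the termination condition, and the $R_{\mathbf{c}}$ sum is bounded by $x/\beta$ using the knapsack constraint on $C^*$ together with the first inequality in line \ref{line:MC_condition}.

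The main obstacle, and where the novelty of the analysis sits, is the $R_{\mathcal{I}}$ term, which I plan to isolate as Claim \ref{claim:p-system}. Crucially, the greedy considers agents in decreasing-marginal-value order, so whenever $i \in C^* \cap R_{\mathcal{I}}$ is rejected there is a set of ``blockers'' in $S$ forming a basis of the relevant circuit closure, each added with marginal value at least $v(i\,|\,S^{(k_i)})$. The defining rank-quotient bound of a $p$-system then permits a Jenkyns/Korte-Hausmann-style charging argument, adapted to marginals: each element of $S$ is charged at most $p$ times, giving $\sum_{i \in C^* \cap R_{\mathcal{I}}} v(i\,|\,S^{(k_i)}) \leq p \cdot v(S)$. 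Combining all terms yields $v(C^*) \leq (p+1)\,v(S) + x/\beta$, and choosing $\beta = 13/3$ closes the bound to $v(S) = \Omega(\opt/p)$ with the required constants.

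For the non-monotone case, I would modify the algorithm to build two candidate solutions $S_1, S_2$ in parallel, in the spirit of \nameref{fig:Simultaneous Greedy}, and append a constrained unconstrained-maximization step on each so that the two-pass trick of Gupta et al.~\shortcite{GuptaRST10} can absorb the non-monotonicity. The threshold-price structure is preserved verbatim (so the mechanism-design properties carry over with the same proofs), and Claim \ref{claim:p-system} applies separately to each $S_j$. The analysis splits as in Section \ref{sec:offline} into the $R_B \neq \emptyset$ and $R_B = \emptyset$ subcases, with the $p$-system loss handled by Claim \ref{claim:p-system} and the rest by submodular exchange inequalities; the overall asymptotic $O(p)$ guarantee is preserved, only the constants changing.
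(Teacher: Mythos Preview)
Your proposal is correct and follows essentially the same route as the paper: same dichotomy on $\max_i v(i)$, same use of Lemma~\ref{lem:Bei-Leonardi} with $k=\Theta(p)$, same three-way partition $R_{\mathbf c},R_B,R_{\mathcal I}$, and the same two-solution variant with $\alg_2$ for the non-monotone case. One minor point of divergence worth flagging is your sketch of Claim~\ref{claim:p-system}. You phrase the charging in terms of local ``blockers forming a basis of the relevant circuit closure'', but general $p$-systems need not have a clean circuit structure, so that local picture is a bit delicate; the paper instead builds a single global map $f(i_{b_j})=i_{a_{\lceil j/p\rceil}}$ and shows by contradiction (using the rank-quotient bound on the prefix sets $T=\{i_{a_1},\dots\}$ and $Q=\{i_{b_1},\dots\}$) that every $i_{b_j}$ is considered after its image, which together with the greedy ordering gives property~\emph{(i)}. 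Both arguments are in the Jenkyns/Korte--Hausmann spirit and yield the same inequality; the paper's version is just more explicit about why the $p$-to-$1$ charging is valid for arbitrary $p$-systems. Also, your parenthetical about ``never-considered agents'' is vacuous in \nameref{fig:Main-Constrained} since the loop runs until $U=\emptyset$, but this is harmless.
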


\begin{proof}
	The proof of the theorem for the non-monotone case is deferred to Appendix \ref{app:sec_combinatorial}. Here we prove that \nameref{fig:Main-Constrained} above has all the stated properties for monotone submodular objectives.
	First, we observe that $S$ starts as an independent set, namely the empty set, and it is expanded only if it remains an independent set. Hence, at the end \nameref{fig:Main-Constrained} does return a feasible solution, i.e.,  $S$ is in $\mathcal{I}$.

At this point, following the same reasoning used for \nameref{fig:Main-Submodular} and \nameref{fig:Main-Submodular-Online}, it should be easy to see that \nameref{fig:Main-Constrained} is universally truthful, individually rational, budget-feasible, and runs in polynomial time.

%\sloppy	
	
Next we show that the solution returned by the mechanism is an $O(p)$-approximation of the optimum. 
First, suppose that $\max_{i\in A}v(i) \ge  \opt(A, B) / ({26(p + 10)})$. Then, for the set $S$ returned by \nameref{fig:Main-Constrained}, we have
$\mathbb{E}(v(S)) \ge q\cdot \max_{i\in A}v(i) \ge \frac{1}{5} \cdot \opt(A, B) / ({26(p + 10)}) \ge \opt(A, B)/({138(p+10)})$.

%\fussy

For the case where $\max_{i\in A}v(i) <  \opt(A, B)/({26(p + 10)})$, we follow the same notation and the same high level approach as with the approximation guarantees of \nameref{fig:Main-Submodular} and \nameref{fig:Main-Submodular-Online}. So, 
$i_1, i_2, \ldots, i_{|A_2|}$ are the agents of $A_2$ in the order considered by the mechanism. By $S^{(k)}$ and $B_{R}^{(k)}$ we denote $S$ and $B_R$, respectively, at the time $i_k$ arrives, and we only use $S$ for the final set returned. The set $R=A_2\mysetminus S$ contains the agents $i_k$ that were not added to $S^{(k)}$ and it is further partitioned to 
\[
R_{\mathbf{c}} = \bigg\lbrace i_k \,\Big|\, \frac{\beta  B}{x} v\Big( i_k\,|\,S^{(k)}\Big)  < c_{\hat{\imath}}\bigg\rbrace, \quad
R_B  = \bigg\lbrace i_k \,\Big|\, B_R^{(k)} < \frac{\beta  B}{x} v\Big( i_k\,|\,S^{(k)}\Big) \bigg\rbrace 
\quad\text{and}\quad 
R_{\mathcal{I}}  = R\mysetminus (R_{\mathbf{c}}\cup R_B) \,.
\]

Assume that $\opt(A_i,  B) \ge \frac{k-1}{4k} \opt(A, B)$ for $i\in\{1,2\}$, where $k=26(p + 10)$. Thus, $x=v(\alg_1(A_1))\ge \frac{k-1}{4(p+3)k} \opt(A, B)$. Recall that this does happen with probability at least $\frac12$, as discussed in the beginning of the proof of Lemma \ref{lem:StG_approximation}.
\smallskip

\noindent \textbf{Case 1.} Assume that  $R_B\neq \emptyset$. Let $i_k\in R_B$, i.e., $\frac{\beta  B}{x} v\left( i_k|S^{(k)}\right)  > B^{(k)}_{R}$. Using the same argument as in the proof of Lemma \ref{lem:SG_approximation}, we get
$v\big( S^{(k)}\big) \ge  \frac{x}{\beta}  -  \max_{i\in A}v(i)$
and, given the known bounds on $x$ and $\max_{i\in A}v(i)$, this leads to $v(S)  \ge \big( \frac{k-1}{4(p+3)k \beta} -  \frac{1}{k}\big) \cdot \opt(A, B)$.

By substituting  $k=26(p + 10)$ and $\beta=\frac{13}{3}$, it is a matter of simple calculations to get
\begin{equation}\label{eq:MC_approximation_1}
v(S)  \ge \frac{5}{276(p+10)} \cdot \opt(A, B)\,. 
\end{equation}
\smallskip

\noindent \textbf{Case 2.} Assume that  $R_B = \emptyset$ and let $C^*$ be an optimal solution for the instance $(A_2, v, \mathbf{c}_{A_2}, B)$. By monotonicity, we have
\begin{equation}\label{eq:MC_opt_ub}
\opt(A_2, B) = v(C^*) \le v(S\cup C^*) \,.
\end{equation}
Because of the $p$-system constraint, however, deriving the analog of inequality \eqref{eq:SG_approximation_6} needs some extra work. 
By Definition \ref{def:SM}\ref{def:sub3}, we have
\begin{IEEEeqnarray}{rCl}
	v(S\cup C^*)& \le  & v(S) + \sum_{i_k\in C^*\mysetminus S} v( i_{k} |S) 
	 \le  v(S) + \sum_{i_k\in C^*\cap R_{\mathbf{c}}} v( i_{k} |S) + \sum_{i_k\in C^*\cap R_{\mathcal{I}}} v( i_{k} |S) \,. \label{eq:MC_approximation_2}
\end{IEEEeqnarray}
We may upper bound the first sum using the fact that all agents involved got rejected because they had very low marginal value per cost ratio. That is,
\begin{equation}\label{eq:MC_approximation_3}
 \sum_{i_k\in C^*\cap R_{\mathbf{c}}} v( i_{k} |S) \le    \sum_{i_k\in C^*\cap R_{\mathbf{c}}} v\left( i_{k} |S^{(k)}\right) <  \frac{x}{\beta  B} \sum_{i_k\in C^*\cap R_{\mathbf{c}}}  c_{i_k} \le \frac{x}{\beta} \le \frac{\opt(A, B)}{\beta}\,.
\end{equation}
For the second sum we prove the following result that crucially relies on the fact that agents are examined in decreasing marginal value.
\begin{claim}\label{claim:p-system}
$\sum_{i_k\in C^*\cap R_{\mathcal{I}}} v( i_{k} |S) \le p \cdot v(S)\,.$
\end{claim}

\begin{proof}[Proof of Claim \ref{claim:p-system}] \renewcommand{\qedsymbol}{{\large $\triangleleft$}}
Recall that when we index agents we  follow the ordering imposed by the mechanism, i.e., $i_k$ is always the agent picked at the $k$th execution of line \ref{line:MC_argmax} of \nameref{fig:Main-Constrained}.

Suppose that there is a mapping $f:C^*\cap R_{\mathcal{I}}\to S$ such that
\begin{enumerate}[label=\emph{(\roman*)}]
	\item if $f(i_k)=i_{\ell}$, then $v\left(  i_k | S^{(k)}\right) \le v\left( i_{\ell}| S^{(\ell)}\right) $ for all $i_{k}\in C^*\cap R_{\mathcal{I}}$, and
	\item $\left| f^{-1}(i_{\ell})\right|\le p$ for all $i_{\ell}\in S$.
\end{enumerate}
We slightly abuse the notation and write $S^{f(i_{k})}$ instead of $S^{(\ell)}$ when $f(i_k)=i_{\ell}$. The existence of $f$ implies that
\begin{IEEEeqnarray*}{rCl}
\sum_{i_k\in C^*\cap R_{\mathcal{I}}} \!v( i_{k} | S) &\le& \!\sum_{i_k\in C^*\cap R_{\mathcal{I}}} \!v\left(  i_{k} | S^{(k)}\right) 
 \le  \!\sum_{i_k\in C^*\cap R_{\mathcal{I}}} \!v\left(  f(i_{k}) \,|\, S^{f(i_{k})}\right) 
\le   p\cdot  \!\sum_{i_{\ell}\in S} v\left(  i_{\ell} | S^{(\ell)}\right) =  p\cdot  v(S) \,.
\end{IEEEeqnarray*}
The first inequality follows from the submodularity of $v$, while the second and third inequalities follow from \emph{(i)} and \emph{(ii)}, respectively.

%\sloppy
Next, we are going to construct such an $f$. Let $S= \{i_{a_1}, i_{a_2}, \allowbreak \ldots, \allowbreak i_{a_{s}}\}$ and $C^* \cap R_{\mathcal{I}}= \{i_{b_1}, i_{b_2}, \allowbreak \ldots, \allowbreak i_{b_{t}}\}$, where both $(a_i)_{i=1}^{s}$ and $(b_i)_{i=1}^{t}$ are  subsequences of $1, 2, \ldots, |A_2|$. We are going to map the first $p$ elements of $C^*\cap R_{\mathcal{I}}$, $i_{b_1},\ldots, i_{b_p}$, to $i_{a_1}$, the next $p$ elements $i_{b_{p+1}},\ldots, i_{b_{2p}}$, to $i_{a_2}$, and so on. That is,
$f(i_{b_{j}}) = i_{a_{\left\lceil j/p \right\rceil}}$. 

%\fussy
It is straightforward that $f$ satisfies property \emph{(ii)}. In order to prove property  \emph{(i)}, it suffices to show that for all $j\in \{1,2,\ldots, t\}$, agent $i_{b_{j}}$ is considered by \nameref{fig:Main-Constrained} after agent $f(i_{b_{j}})$. Indeed, if that was the case, by the definition of $\hat{\imath}$ in line \ref{line:MC_argmax} and submodularity, we would get
\[ v\left( i_{a_{\left\lceil j/p \right\rceil}} \,|\, S^{(a_{\left\lceil j/p \right\rceil})}\right) \ge v\left( i_{b_{j}} \,|\, S^{(a_{\left\lceil j/p \right\rceil})}\right) \ge  v\left( i_{b_{j}} \,|\, S^{(b_{j})}\right) \,, \]
for all $i_{b_{j}}\in C^*\cap R_{\mathcal{I}}$, as desired. Suppose, towards a contradiction, that there is some $k\in \{1,2,\ldots, t\}$, such that $b_{k}<a_{\left\lceil k/p \right\rceil}$; in fact, suppose $k$ is the smallest such index. Consider the sets $T=\{i_{a_1}, i_{a_2}, \ldots,i_{a_{\left\lceil k/p \right\rceil - 1}}\}\subseteq S$ and $Q = \{i_{b_1}, i_{b_2}, \ldots,i_{b_{k}}\}\subseteq C^*\cap R_{\mathcal{I}}$. 
By construction, $T\in\mathcal{I}$. Moreover, we claim that $T$ is maximally independent in $T\cup Q$. Indeed, each $i_{b_{\tau}}\in Q$ was rejected because $S^{(b_{\tau})}\cup\{i_{b_{\tau}}\}\notin \mathcal{I}$, and since $S^{(b_{\tau})} \subseteq T$ we get $T\cup\{i_{b_{\tau}}\}\notin \mathcal{I}$.
This implies that $\mathrm{lr}(T\cup Q)\le |T|$. On the other hand, $Q\in\mathcal{I}$ because $Q\subseteq C^*\in\mathcal{I}$. As a result $\mathrm{ur}(T\cup Q)\ge |Q|$.
However, notice that 
\[p\cdot |T| = p \left( \left\lceil k/p \right\rceil -1\right)  <   p \left(  k/p +1 -1\right) = k = |Q|\,. \]
Thus, $\frac{\mathrm{ur}(T\cup Q)}{\mathrm{lr}(T\cup Q)} \ge \frac{|Q|}{|T|} >p$, contradicting the fact that $(A,\mathcal{I})$ is a $p$-system.
We conclude that $f$ satisfies both \emph{(i)} and \emph{(ii)}, and therefore, $\sum_{i_k\in C^*\cap R_{\mathcal{I}}} v( i_{k} |S) \le p \cdot v(S)$.
\end{proof}
Now, combining \eqref{eq:MC_opt_ub}, \eqref{eq:MC_approximation_2}, \eqref{eq:MC_approximation_3}, and Claim \ref{claim:p-system}, we have
$\opt(A_2, B) \le (p+1)\cdot v(S) +  \frac{\opt(A, B)}{\beta}$,
and using the lower bound on $\opt(A_2, B)$,
$v(S) \ge \frac{1}{p+1} \cdot  \big( \frac{k-1}{4k} - \frac{1}{\beta}\big)  \opt(A, B)$.
Again, by substituting  $k$ and $\beta$, it is a matter of calculations to get
\begin{equation}\label{eq:MC_approximation_4}
v(S)  \ge \frac{5}{276(p+10)} \cdot \opt(A, B)\,. 
\end{equation}

By Lemma \ref{lem:Bei-Leonardi},  both \eqref{eq:MC_approximation_1} and \eqref{eq:MC_approximation_4} hold with probability at least $1/2$. 
Hence,
\[\mathbb{E}(v(S)) \ge (1-q)\cdot\frac{1}{2}\cdot\frac{5}{276(p+10)} \cdot \opt(A, B)  = \frac{1}{138(p+10)} \cdot \opt(A, B) \,. \qedhere\]
\end{proof}

For matroid constraints we have $p=1$ and for matching constraints $p=2$. Since cardinality constraints are a special case of matroid constraint, we directly get the following.

\begin{corollary}\label{cor:Constrained-Submodular}
For cardinality, matroid and matching constraints, there is a  universally truthful, budget-feasible $O(1)$-approximation mechanism for (non-monotone) submodular objectives.
\end{corollary}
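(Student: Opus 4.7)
The plan is to design a mechanism \textsc{nMain-Constrained} that extends \nameref{fig:Main-Constrained} to non-monotone submodular objectives, in the same spirit that \nameref{fig:Main-Submodular} extends the monotone case to the non-monotone one. Specifically, after randomly partitioning $A$ into $A_1, A_2$, I would use an algorithm $\alg_4$ for non-monotone submodular maximization subject to a knapsack and $p$-system constraint (e.g., the $(p+1)(2p+3)/p$-approximation of Mirzasoleiman et al.~\shortcite{MirzasoleimanBK16}) on $A_1$ to obtain an estimate $x$ of $\opt(A_1,\mathcal{I}, B)$. Then, on $A_2$, I would build \emph{two} candidate independent sets $S_1,S_2 \in \mathcal{I}$ simultaneously via a greedy rule identical to \nameref{fig:Simultaneous Greedy} but with the extra requirement $S_{\hat{\jmath}}\cup\{\hat{\imath}\}\in\mathcal{I}$ in the acceptance condition, and finally compute $T_j$ as an approximately optimal independent subset of $S_j$ for \emph{unconstrained} (i.e., not budget-constrained) non-monotone submodular maximization under the $p$-system restricted to $S_j$, using e.g.\ the $(p+1)$-approximation of Gupta et al.~\shortcite{GuptaRST10}. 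The final output is the best of $S_1,S_2,T_1,T_2$; with constant probability we again fall back to the single most valuable agent.

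Truthfulness, individual rationality and budget-feasibility would follow essentially verbatim from the arguments for \nameref{fig:Simultaneous Greedy}: lowering a winner's bid never alters the order of examination, the index $\hat{\jmath}$ chosen for that agent, the threshold offered to him, or the $p$-system test (which depends only on the identities of already-chosen agents), so the entire run is unchanged; budget-feasibility of all four output candidates $S_1,S_2,T_1,T_2$ follows because $T_j\subseteq S_j$ and the total of the threshold prices paid on $S_j$ is bounded by $B-B_j^{\text{final}} \le B$, exactly as in Lemma~\ref{lem:SG_budget-feasible}.

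For the approximation guarantee, I would mimic the two-case argument of Lemma~\ref{lem:SG_approximation}. After conditioning on the Bei--Leonardi event from Lemma~\ref{lem:Bei-Leonardi} (handled as usual by rejecting the regime where a single agent holds too much value and falling back on returning $i^*$), $x$ and $\opt(A_2,\mathcal{I},B)$ are both $\Theta(\opt(A,\mathcal{I},B))$. In Case~1, if some agent is rejected for exhausting the budget of $S_{\hat\jmath}$, the same telescoping as in inequality~\eqref{eq:SG_approximation_1} gives $v(S_{\hat\jmath}) = \Omega(x/\beta)$, which already yields an $O(p)$-approximation. In Case~2, where $R_B=\emptyset$, writing $C^*$ for an optimal independent budget-feasible set in $A_2$, $C_j = C^*\cap S_j$, and $C_3=C^*\setminus(C_1\cup C_2)$, I bound $v(C^*)\le v(C_1)+v(C_2)+v(C_3)$ by subadditivity; the terms $v(C_j)$ are bounded by $(p+1)\,v(T_j)$ via the unconstrained-under-$p$-system approximation of $\alg_2$-analog, and $v(C_3)\le v(S_1\cup C_3)+v(S_2\cup C_3)$ by non-negativity and Definition~\ref{def:SM}\ref{def:sub2}, exactly as in~\eqref{eq:SG_approximation_5}.

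The main obstacle is bounding $v(S_j\cup C_3)$, because each $i\in C_3$ may have been rejected for either the cost reason (giving the $x/\beta$ loss as in~\eqref{eq:SG_approximation_6}) or the $p$-system reason, and the latter was absent in Section~\ref{sec:offline}. Splitting $C_3\setminus S_{3-j}$ into $R_{\mathbf{c}}$ and $R_{\mathcal{I}}$ pieces relative to $S_j$, and invoking Definition~\ref{def:SM}\ref{def:sub3}, I would reduce the $R_{\mathcal{I}}$ contribution to a sum $\sum_{i_k\in C_3\cap R_{\mathcal{I}}} v(i_k \mid S_j^{(k)})$. The key step is a two-set analog of Claim~\ref{claim:p-system}: because agents are still examined in decreasing marginal value (over \emph{both} partial solutions), if I restrict attention to one of the two indices $j\in\{1,2\}$ and to those rejections in $R_{\mathcal{I}}$ attributable to infeasibility in $S_j$, I can reuse the same counting/mapping $f$ as in Claim~\ref{claim:p-system}, since inside $S_j$ alone the ordering of accepted agents is still monotone in marginal value and the $p$-system witness argument goes through unchanged. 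This yields $\sum_{i_k\in C_3\cap R_{\mathcal{I}}} v(i_k\mid S_j)\le p\cdot v(S_j)$, hence $v(S_j\cup C_3)\le (p+1)v(S_j)+x/\beta$, so that $\opt(A_2,\mathcal{I},B)\le 2(p+1)v(T_1)+2(p+1)v(T_2)+(p+1)v(S_1)+(p+1)v(S_2)+2x/\beta = O(p)\cdot v(S)+O(\opt/\beta)$, giving an $O(p)$ expected approximation after tuning $\beta$ and the mixing probability $q$.
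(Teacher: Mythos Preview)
Your sketch is essentially the paper's proof: the corollary itself is immediate from Theorem~\ref{thm:Constrained-Submodular} by taking $p=1$ (cardinality, matroid) or $p=2$ (matching), and what you have written is a faithful outline of the non-monotone case of that theorem (the paper's \textsc{GenSm-Constrained} in the appendix), including the two-solution greedy with the extra feasibility check, the $R_B\neq\emptyset$ vs.\ $R_B=\emptyset$ case split, and the two-set analogue of Claim~\ref{claim:p-system}.

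One simplification you are missing: because $S_j\in\mathcal{I}$ and $\mathcal{I}$ is downward closed, \emph{every} subset of $S_j$ is independent, so the ``$p$-system restricted to $S_j$'' is vacuous. Hence computing $T_j$ requires only unconstrained non-monotone submodular maximization, and the paper uses Buchbinder--Feldman's deterministic $2$-approximation, obtaining $v(C_j)\le 2\,v(T_j)$ rather than your $(p+1)\,v(T_j)$. This tightens the Case~2 bound to $\opt(A_2,B)\le 2v(T_1)+2v(T_2)+(p+1)v(S_1)+(p+1)v(S_2)+2x/\beta$, i.e.\ $(2p+6)\,v(S)+2x/\beta$, but asymptotically both routes give $O(p)$. (Relatedly, your displayed coefficients $2(p+1)$ on the $v(T_j)$ terms do not match your own bound $v(C_j)\le (p+1)v(T_j)$; it should be $(p+1)$.) Also, there is no need to ``attribute'' $R_{\mathcal{I}}$ rejections to a particular $j$: by construction every $i\in R_{\mathcal{I}}$ satisfies $S_j\cup\{i\}\notin\mathcal{I}$ for \emph{both} $j$, so the mapping argument of Claim~\ref{claim:p-system} applies to each $S_j$ over the full set $C_3\cap R_{\mathcal{I}}$.
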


% % % % % % % % % % % % % % % % % %
% % % % % % % % % % % % % % % % % %
\section{Lower Bounds}
\label{sec:lower_bounds}
% % % % % % % % % % % % % % % % % %
% % % % % % % % % % % % % % % % % %
In the value query model there is a strong lower bound on the number of queries for deterministic algorithms for monotone XOS objectives due to Singer \shortcite{Singer10}.  This result is based on a lower bound of Mirrokni et al.~\citep{MirrokniSV08} on welfare maximization in combinatorial auctions. As the latter also holds for randomized algorithms, so does Singer's result as well, essentially with the same proof. We restate it here for completeness. Note that it holds even when the costs are public knowledge.

\begin{theorem}[\textnormal{Singer \shortcite{Singer10}}]\label{thm:xos_Singer}
	For any fixed $\varepsilon > 0$, any (randomized) $n^{\frac{1}{2}-\varepsilon}$-appro\-xi\-mation algorithm for monotone XOS function maximization subject to a budget constraint requires exponentially many value queries (in expectation). 
\end{theorem}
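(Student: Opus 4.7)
The plan is to reduce the lower bound to a corresponding construction for welfare maximization with XOS bidders due to Mirrokni, Schapira, and Vondr{\'a}k~\citep{MirrokniSV08}, and then argue as in Singer~\citep{Singer10}. The high-level approach uses Yao's minimax principle: to prove a randomized query lower bound, it suffices to exhibit a distribution over monotone XOS budget-feasibility instances on which every deterministic algorithm using subexponentially many value queries achieves, in expectation, only an $o(n^{1/2-\varepsilon})$-approximation.

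I would set up the hard distribution using unit-cost agents and budget $B = \lfloor \sqrt{n} \rfloor$, so that every feasible solution contains at most $\sqrt{n}$ agents. The distribution is a mixture of two families: a ``flat'' baseline whose optimum is $\Theta(\sqrt{n})$, realized for example by the additive (hence XOS) function $v_0(S) = |S|$; and a planted family indexed by a uniformly random hidden set $T \subseteq A$ of size $\sqrt{n}$, whose optimum is $\Omega(n^{1-\varepsilon})$ and is witnessed by $T$ itself. The planted XOS valuation $v_T$ is constructed so that, crucially, $v_T(S)$ and the baseline value $v_0(S)$ agree for every query set $S$ unless $S$ has an unusually large correlation with $T$, an event of probability $\exp(-n^{\Omega(1)})$ by standard hypergeometric tail bounds. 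This is the technical heart of the MSV construction: the additive components of $v_T$ are layered so that no individual component reveals a single element of $T$, and the ``witness'' set $T$ can only be detected by queries that already nearly coincide with $T$.

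Once such a distribution is in place, the remainder of the argument is routine. Fix any deterministic algorithm that makes at most $2^{n^{o(1)}}$ adaptive value queries. A union bound over its queries shows that, with probability $1 - o(1)$ over $T$, the entire query-and-answer transcript under the planted instance is identical to the transcript under the baseline, so the algorithm returns the same feasible set $S^*$ in both cases. The same concentration estimate, applied to $S^*$ itself, bounds $v_T(S^*)$ by $O(\sqrt{n})$, whereas $\opt$ on the planted instance is $\Omega(n^{1-\varepsilon})$, yielding an approximation ratio of $\Omega(n^{1/2-\varepsilon})$. Yao's principle then lifts this to randomized algorithms.

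The main obstacle is importing the MSV construction, which was designed for two-bidder welfare maximization, into the single-function budget-feasibility setting while simultaneously preserving monotonicity, the XOS property, the size of the $n^{1/2-\varepsilon}$ gap, and indistinguishability under subexponentially many queries. Naive attempts such as $v_T(S) = \max(|S|, M \cdot |S \cap T|)$ fail immediately: singleton queries reveal membership in $T$ with probability $|T|/n = 1/\sqrt{n}$ each, so polynomially many such queries identify $T$ with constant probability. The MSV-style construction avoids this by using a more delicate family of additive components whose supports are themselves pseudorandom subsets large enough that no fixed query correlates noticeably with any individual component. Carrying out this adaptation---and checking that the resulting function is really a maximum of additive functions with the desired gap---is the only nontrivial step; the concentration-plus-union-bound argument that follows is essentially mechanical.
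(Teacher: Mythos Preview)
The paper does not actually prove this theorem. It is stated as a citation to Singer, accompanied only by the remark that Singer's argument rests on the Mirrokni--Schapira--Vondr\'ak welfare-maximization lower bound and that the latter (and hence the former) extends to randomized algorithms ``essentially with the same proof.'' So there is no in-paper argument to compare against; your outline is being measured against a pointer to the literature.

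That said, your plan is compatible with what the paper asserts: you correctly trace the result back to an MSV-style indistinguishability construction, you pick sensible parameters (unit costs, budget $\lfloor\sqrt{n}\rfloor$, a planted set of size $\sqrt{n}$ giving a gap of $n^{1/2-\varepsilon}$), and you correctly diagnose why the naive planted function $\max(|S|, M\cdot|S\cap T|)$ leaks through singleton queries. The Yao-plus-union-bound wrapper is standard and fine.

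The gap in the proposal is precisely the one you flag yourself: you never write down the ``more delicate family of additive components,'' nor do you verify that such a family simultaneously (a) is monotone XOS, (b) achieves $v_T(T)=\Omega(n^{1-\varepsilon})$, and (c) satisfies $v_T(S)=v_0(S)$ except with probability $\exp(-n^{\Omega(1)})$ for every fixed $S$. Since that construction is the entire substance of the theorem, what you have is a correct plan rather than a proof. To make it self-contained you would need either to spell out the clause family explicitly, or to state exactly which construction from \cite{MirrokniSV08} you are invoking and how it is repackaged as a single monotone XOS function under a knapsack constraint (this is what Singer does, and what the paper is pointing to).
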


When one moves to non-monotone objectives, as it is the case in this work, it is possible to prove even stronger lower bounds. Below we show that for general XOS objectives, exponentially many value queries are needed  for \emph{any} non-trivial approximation even without the budget constraint. As this result applies to the purely  algorithmic setting, it is of independent interest. 

It is known that in many settings there is a  separation between the power of value and demand queries of polynomial size, see, e.g., \citep{BlumrosenN09}. To stress this difference in our setting, recall that in the demand query model, the class of XOS objectives admits a truthful $O(1)$-approximation mechanism with a polynomial number of queries. 

\begin{theorem}\label{thm:xos_nonmonotone}
	For any fixed $\varepsilon > 0$, any (randomized) $n^{1-\varepsilon}$-approximation algorithm for XOS function maximization requires exponentially many value queries (in expectation). 
\end{theorem}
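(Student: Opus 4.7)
The plan is to apply Yao's minimax principle: it suffices to exhibit a distribution over XOS instances on $[n]$ such that every deterministic algorithm using subexponentially many value queries fails to be an $n^{1-\varepsilon}$-approximation on a constant fraction of the instances.

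The hard distribution plants many hidden ``witness'' sets. Fix $m = 2^{n^{\varepsilon/4}}$ and draw $T_1, \dots, T_m$ i.i.d.\ uniformly from the subsets of $[n]$ of size $n/2$. For each $j$, let $\alpha_j(i) = 1$ if $i \in T_j$ and $\alpha_j(i) = -n$ otherwise, and define
\[ v(S) \;=\; \max\!\Bigl\{0,\; \max_{j\in[m]} \alpha_j(S)\Bigr\}. \]
As a pointwise max of additive functions together with the zero function, $v$ is XOS. A short calculation shows $v(S) = |S|$ whenever $S \subseteq T_j$ for some $j$, and $v(S) = 0$ otherwise; in particular $\opt = \max_j |T_j| = n/2$.

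For the analysis, an $n^{1-\varepsilon}$-approximation must output a set $S^*$ with $v(S^*) \ge n^{\varepsilon}/2$, which by the structure of $v$ forces $|S^*| \ge n^{\varepsilon}/2$ \emph{and} $S^* \subseteq T_j$ for some $j$. The core probabilistic estimate is that for any fixed $S$ with $|S| = s$,
\[ \Pr\bigl[\exists j : S \subseteq T_j\bigr] \;\le\; m \cdot \Pr_T\bigl[S \subseteq T\bigr] \;\le\; m \cdot 2^{-s}, \]
which for $s \ge n^{\varepsilon}/2$ is $2^{-\Omega(n^{\varepsilon})}$ by our choice of $m$. I would then argue in two steps: first, by a union bound over $Q$ queries, the probability that any queried $S_i$ of size at least $n^\varepsilon / 2$ has positive value is at most $Q \cdot 2^{-\Omega(n^\varepsilon)}$; second, for an output $S^*$ that was not queried, a decision-tree argument exploiting the fact that each value query returns only the single bit ``$v(S_i)=|S_i|$ or $0$'' limits the dependence of $S^*$ on the hidden sets and yields essentially the same bound. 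Combining the two cases, the success probability is $o(1)$ unless $Q = 2^{\Omega(n^{\varepsilon})}$, which is superpolynomial as required; Yao's principle then transfers the bound to randomized algorithms.

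The main obstacle is the adaptive subcase. When $m$ is superpolynomial, small-set queries almost always return nonzero answers, so the algorithm accumulates many ``yes/no'' bits about whether a queried set lies inside \emph{some} $T_j$. The crucial point to make rigorous is that these bits are essentially information about the union $\bigcup_j T_j$ (which, in our regime, is all of $[n]$ with high probability) rather than about any individual $T_j$: each single $T_j$ carries $\Theta(n)$ bits of entropy and there are $m$ of them, so with $Q \ll m$ queries the marginal distribution of any particular $T_j$ stays close to uniform, and the final output $S^*$ behaves essentially as an oblivious set. Formalizing this via a decision-tree / compression argument—so that one can rule out the possibility that the algorithm ``triangulates'' a large subset of some hidden $T_j$ from many small-set answers—is where the bulk of the technical work lies.
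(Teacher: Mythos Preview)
Your construction has a fatal gap: a deterministic algorithm with only $n$ queries solves your hard distribution exactly. Run greedy: start with $S=\emptyset$ and, for $i=1,\dots,n$, add $i$ to $S$ iff $v(S\cup\{i\})>0$. The set $J=\{j:S\subseteq T_j\}$ starts as $[m]$ and is never emptied (a negative answer leaves $S$, hence $J$, unchanged; a positive answer certifies the new $J$ is nonempty). Fix any $j^*\in J$ at termination. Every intermediate set is a subset of the final $S\subseteq T_{j^*}$, so when processing an $i\in T_{j^*}$ the query is positive and $i$ is added; conversely no $i\notin T_{j^*}$ can end up in $S$. Hence $S=T_{j^*}$ and $v(S)=n/2=\opt$. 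Your intuition that small-set answers only reveal $\bigcup_j T_j$ is wrong: each positive answer roughly halves $J$ until a single $T_{j^*}$ is isolated, after which every further query tests membership in $T_{j^*}$ directly. The adaptive subcase is not merely technical---it is where the construction breaks.

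The paper avoids this by planting a \emph{single} hidden set $R$ of size $n/4$ and masking it with a baseline $v_1(S)=\min(|S|,\tau)$, $\tau=n^{\varepsilon/2}/4$: the hard function is $v_2(S)=\max\{v_1(S),\beta(S)\}$ where $\beta$ assigns $1$ to elements of $R$ and $-|R|$ to the rest. Crucially, $v_2(S)=v_1(S)$ unless $|S|>\tau$ \emph{and} $S\subseteq R$, so small queries reveal nothing about $R$ and each large query is distinguishing with probability at most $(e/4)^{\tau}$. Since non-distinguishing queries are answered identically by $v_1$ and $v_2$, any adaptive algorithm can be simulated against the fixed function $v_1$; its query sequence and output are then determined independently of $R$, and a union bound over those $q(n)+1$ fixed sets gives success probability $o(1)$ unless $q(n)$ is exponential. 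The masking baseline that hides $R$ from all small queries is precisely the idea you are missing.
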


\begin{proof}
	We follow, on a high level, the approach in \citep{MirrokniSV08}.
	Recall that $A=[n]$ and choose a set $R$ of size $|R| = \rho=n/4$ uniformly at random amongst all the subsets of $A$ of size $\rho$. We are going to construct two XOS functions, $v_1$ and $v_2$, that are hard to tell apart, i.e., to distinguish between them with constant probability, an exponential number of value queries will be required. 
	
	For any $T\subseteq A$, let $\alpha_T$ be the additive function that assigns the value $1$ to each $i\in T$ and the value $0$ to each $i\notin T$. For $\tau = n^{\varepsilon/2}/4$, we define $v_1$ as the maximum over all such additive functions on sets of size $\tau$:
	\[v_1(S)=\max_{T\subseteq A : |T| = \tau} \alpha_T(S), \text{ \ for all } S\subseteq A \,. \]
	Further, let $\beta$ be the additive function that assigns the value $1$ to each $i\in R$ and the value $-\rho$ to each $i\notin R$. We define $v_2$ as the maximum between $v_1$ and $\beta$:
	\[v_2(S)=\max\left\lbrace v_1(S), \beta(S) \right\rbrace, \text{ \ for all } S\subseteq A \,. \]
	Clearly, both $v_1$ and $v_2$ are XOS functions since each of them is defined as the maximum of a finite number of additive functions. Also notice that for any $S\nsubseteq R$ we have $v_2(S) = v_1(S)$.
	However, $\opt(A, v_1, \infty) = \tau$ and $\opt(A, v_2, \infty) = \rho = n^{1-\varepsilon/2}\cdot \tau > n^{1-\varepsilon} \cdot \tau$. Hence, any (possibly randomized) algorithm that achieves an approximation ratio smaller or equal to $n^{1-\varepsilon}$ can distinguish between the two functions.
	
	Consider a value query for some set $S$. This query can distinguish between $v_1$ and $v_2$ if and only if $S\subseteq R$ and $|S| > \tau$, and otherwise it will reveal no information about $R$. We will call such an $S$ a \emph{distinguishing set}. For a given $S$ with $|S| > \tau$, the probability that $S\subseteq R$, over the random choice of $R$, is 
	\begin{equation}\label{eq:lb_probability_1}
	\frac{{\rho \choose |S|}}{{n \choose |S|}}\le \frac{\left( \frac{e\rho}{|S|}\right)^{|S|}}{\left( \frac{n}{|S|}\right)^{|S|}} =   \left( \frac{e}{4}\right)^{|S|} < \left( \frac{e}{4}\right)^{\frac{n^{\varepsilon/2}}{4}} ,
	\end{equation}
	using the well-known fact that for $1 \leq k \leq m$ we have
	$\left(\frac{m}{k}\right)^k \leq \binom{m}{k} \leq \left(\frac{em}{k}\right)^k.$
	
	Now, let $q(\cdot)$ be a polynomial and $p\in\left( 0,1\right]$ be a constant. Suppose first that there is a deterministic algorithm that asks queries $S_1, S_2, \ldots, S_{q(n)}$  and distinguishes between $v_1$ and $v_2$ with  probability at least  $p$. Note that the choice for $S_{j}$ can depend on all previous queries $S_1,\dots,S_{j-1}$ as well as the answers of the value query oracle obtained for those sets. Also, the choices made by the algorithm are the same for all non-distinguishing queries regardless of whether we present $v_1$ or $v_2$ to the algorithm. 
	Using a union bound, it then follows that the probability that we distinguish between $v_1$ and $v_2$ is at most
	\begin{equation*}
	\sum_{i=1}^{q(n)}\frac{{\rho \choose |S_i|}}{{n \choose |S_i|}} < q(n)\left( \frac{e}{4}\right)^{\frac{n^{\varepsilon/2}}{4}} = o(1) .
	\end{equation*}
	which contradicts $p$ being constant. In case of a randomized algorithm, we can condition on the random bits of the algorithm. Averaging over the choices of the random bits, we are still only able to distinguish between $v_1$ and $v_2$ with exponentially small probability.
\end{proof}

One immediate consequence of Theorem \ref{thm:xos_nonmonotone} is that when we care for constant approximation ratios, the result of Theorem \ref{thm:Main-Submodular} is (asymptotically) the \emph{best possible} for budget-feasible mechanism design. General submodular objectives is the broadest class of well studied non-monotone functions one could hope for, even for randomized mechanisms.
\medskip

\noindent {\bf Combinatorial Constraints.}
We now turn to the problem of maximizing subject to additional constraints on top of the budget constraint.
To further motivate our restriction to $p$-system constraints, we restate here a lower bound of Badanidiyuru and Vondr{\'{a}}k \shortcite{BadanidiyuruV14}: for independence system constraints one cannot achieve an approximation factor better than $\max_{S\subseteq U}\frac{\mathrm{ur}(S)}{\mathrm{lr}(S)}$ with a polynomial number of queries. Thus, the result of Theorem \ref{thm:Constrained-Submodular} is asymptotically optimal.

\begin{theorem}[\textnormal{Badanidiyuru and Vondr{\'{a}}k \shortcite{BadanidiyuruV14}}]\label{thm:p-systems_BV}
	For any fixed $\varepsilon > 0$, any (randomized) $(p+\varepsilon)$-appro\-xi\-mation algorithm for additive function maximization 
	subject to $p$-system constraints requires exponentially many independence oracle queries (in expectation).
\end{theorem}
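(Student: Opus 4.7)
My plan is to prove the lower bound by an indistinguishability argument in the spirit of the proof of Theorem~\ref{thm:xos_nonmonotone}, now adapted to the independence-oracle model. I would construct two $p$-systems on the same ground set whose optimal values differ by (essentially) a factor of $p$, and yet whose independence oracles agree on every set an algorithm using polynomially many queries is likely to examine. A Yao-style averaging argument then lifts the lower bound to randomized algorithms.

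Concretely, take $A = [n]$ with the additive valuation $v(\cdot) = |\cdot|$, fix an integer $k = n^{\Omega(1)}$ (say $k = \lceil n^{1/2}\rceil$), and draw a hidden set $R \subseteq A$ of size $|R| = pk$ uniformly at random. The baseline is the $k$-uniform matroid
\[
\mathcal{I}_0 \;=\; \{ S \subseteq A : |S| \le k\},
\]
and the hidden system is
\[
\mathcal{I}_R \;=\; \mathcal{I}_0 \cup \{ S \subseteq A : S \subseteq R\}.
\]
As a warm-up step I would check that $\mathcal{I}_R$ is downward closed (immediate) and that it is indeed a $p$-system: for every $S \subseteq A$ one verifies that the smallest maximal independent subset of $S$ has size $\min(|S|,k)$ and the largest has size $\max(\min(|S|,k),\,|S \cap R|) \le pk$, so the rank quotient never exceeds $p$.

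The technical heart of the argument is the indistinguishability step. A query on a set $Q$ yields different answers on $\mathcal{I}_0$ and $\mathcal{I}_R$ only when $Q \subseteq R$ and $|Q| > k$, and for any fixed such $Q$ of size $q > k$ we have
\[
\Pr_R[\,Q \subseteq R\,] \;=\; \frac{\binom{pk}{q}}{\binom{n}{q}} \;\le\; \left(\frac{e\, pk}{n}\right)^q,
\]
which is exponentially small in $n$ for the chosen scaling. Following the template of Theorem~\ref{thm:xos_nonmonotone}, I would fix the random bits of the algorithm to reduce to a deterministic adaptive decision tree of polynomial depth, apply a union bound over its queries to argue that with probability $1-o(1)$ over $R$ every query returns the $\mathcal{I}_0$-answer, and then average over the random bits to handle the randomized case. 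Since the maximum independent set of $\mathcal{I}_0$ has cardinality $k$ and that of $\mathcal{I}_R$ has cardinality $pk$, any algorithm that fails to distinguish the two systems outputs a set of expected value at most $k + o(1)$ on $\mathcal{I}_R$, yielding an approximation ratio of at least $p - o(1)$.

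The main obstacle is bridging between this nominal ratio of $p$ and the $(p+\varepsilon)$ in the theorem statement, because the structural constraint $\mathrm{ur}(A)/\mathrm{lr}(A) \le p$ intrinsic to a $p$-system prevents simply inflating $|R|$ beyond $pk$ in the single-block construction. The sharpening comes from either taking the disjoint union of $\Theta(1/\varepsilon)$ independent copies of the construction (preserving the $p$-system property by the direct-sum closure while magnifying the additive slack) or by re-scaling the parameters so that the $o(1)$ loss in the analysis above falls strictly below $\varepsilon$ for all sufficiently large $n$. Either way, the most delicate step is re-verifying that the rank quotient remains at most $p$ after the modification; everything else follows the indistinguishability template above.
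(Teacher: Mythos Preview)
The paper does not prove Theorem~\ref{thm:p-systems_BV}; it is quoted from Badanidiyuru and Vondr\'ak and used as a black box to justify the restriction to $p$-systems and to derive Corollary~\ref{cor:combinatorial_lb_2}. There is therefore no in-paper proof to compare against.

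That said, your construction is the right one and correctly yields the intended lower bound of~$p$. Both $\mathcal I_0$ and $\mathcal I_R$ are $p$-systems, your rank-quotient check is accurate, and the indistinguishability/union-bound argument is sound: with probability $1-o(1)$ over~$R$ the execution on $\mathcal I_R$ is identical to that on $\mathcal I_0$, so the (necessarily $\mathcal I_0$-feasible) output has size at most $k$ while $\opt=pk$, giving ratio at least $p-o(1)$.

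Where you go wrong is the last paragraph. You try to push the bound strictly above~$p$ to match the printed ``$(p+\varepsilon)$'', and propose disjoint unions or parameter rescaling. This cannot work: the classical greedy algorithm (sort by weight, add if still independent) is a $p$-approximation for additive objectives over any $p$-system using only $O(n)$ independence queries, so a polynomial-query $(p+\varepsilon)$-approximation \emph{does} exist. Your disjoint-union idea in particular just multiplies both $\opt$ and the algorithm's output by the number of copies, leaving the ratio at~$p$; the rank-quotient obstruction you identify is not a slack in the analysis but a genuine barrier. The statement as printed should read ``$(p-\varepsilon)$'' (one cannot beat~$p$), which is exactly what your argument establishes and exactly what is needed downstream for Corollary~\ref{cor:combinatorial_lb_2}.
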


As we mentioned in the beginning of Section \ref{sec:combinatorial}, we cannot really go beyond independence systems and have any non-trivial approximation guarantee in polynomial time. This is illustrated in Theorem \ref{thm:combinatorial_lb_1} and Corollary \ref{cor:combinatorial_lb_2} below.  
Theorem \ref{thm:combinatorial_lb_1} generalizes Singer's \shortcite{Singer10} strong impossibility result for deterministically ``hiring a team of agents'' to \emph{any} constraint that is not downward closed below. Note that it holds even for super-constant approximation ratios, even for the special case of additive objectives, irrespectively of any complexity assumptions. 

\begin{theorem}\label{thm:combinatorial_lb_1}
	Let $\mathcal{F}\subseteq 2^A$ be any collection of feasible sets that is not downward closed. Then there is no deterministic, truthful, individually rational, budget-feasible mechanism achieving a bounded approximation when restricted on $\mathcal{F}$, even for additive objectives.
\end{theorem}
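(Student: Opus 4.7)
The plan is to show, for every candidate mechanism $M$ and every finite approximation ratio $\alpha$, the existence of a cost profile on which $M$ must fail one of its required properties. I split into two cases according to whether the empty set is feasible.

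\textbf{Case 1: $\emptyset\notin\mathcal{F}$.} Every output of $M$ must be a non-empty set in $\mathcal{F}$. On the profile with $c_i = B+1$ for every $i\in A$, individual rationality forces $p_i \geq B+1$ for each of the (at least one) winners, so the total payment already exceeds $B$ and breaks budget-feasibility—independently of $\alpha$, of $v$, and of the remaining structure of $\mathcal{F}$.

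\textbf{Case 2: $\emptyset\in\mathcal{F}$.} I invoke the strengthening of the hypothesis that is made explicit in the paragraph preceding the theorem, the ``not downward closed below'' form: there is some $i^*\in A$ with $\{i^*\}\notin\mathcal{F}$ that nevertheless lies in some feasible set. Let $S\in\mathcal{F}$ be a feasible set containing $i^*$ of minimum cardinality; then $k:=|S|\geq 2$ and no proper subset of $S$ containing $i^*$ is in $\mathcal{F}$. Restrict to the ground set $A=S$ (all other agents effectively priced out), use the additive valuation $v(i^*)=1$ and $v(j)=0$ for $j\in S\setminus\{i^*\}$, and consider the $k+1$ cost profiles $\mathbf{c}^0 = (0,\dots,0)$ and, for each $j\in S$, $\mathbf{c}^j$ with $c_j=B$ and $c_i=0$ for $i\neq j$.

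Since the only feasible subset of $S$ carrying positive value is $S$ itself (by minimality), any mechanism with finite $\alpha$ must output $S$ whenever $S$ is budget-feasible. Under each $\mathbf{c}^j$ the set $S$ has total cost exactly $B$ and is thus budget-feasible; the mechanism outputs $S$, individual rationality gives $p_j\geq B$, and budget-feasibility forces $p_i=0$ for $i\neq j$. Myerson's lemma then yields the threshold $\theta_j(\mathbf{c}^0_{-j}) \geq B$ for every $j\in S$—the threshold is in fact exactly $B$, since raising $c_j$ above $B$ makes $S$ infeasible, and by minimality no cheaper feasible alternative contains $i^*$. Reading these thresholds back at $\mathbf{c}^0$, the mechanism must pay each of the $k$ agents at least $B$, yielding a total payment of at least $kB \geq 2B > B$ and a budget-feasibility violation.

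The crux of the argument is isolating in Case 2 a feasible set $S$ no proper feasible subset of which contains $i^*$: it is precisely this minimality that forces $M$ into outputting $S$ on the uniform-cost profile, and without it—for instance when every singleton is already feasible—the mechanism could retreat to a cheaper feasible alternative, and no Singer-style three-profile collision among the thresholds would occur. This is exactly where the ``below'' flavor of non-downward-closedness enters substantively, and is the step that carries the real content of the impossibility.
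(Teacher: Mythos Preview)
Your Case~2 rests on a misreading: in the sentence ``any constraint that is not downward closed below,'' the word \emph{below} is not a technical modifier---it just points to the theorem that follows. There is no strengthened ``not downward closed below'' hypothesis in the paper; the statement is about any $\mathcal{F}$ that is not downward closed. Your extracted condition (some $i^*$ with $\{i^*\}\notin\mathcal{F}$ lying in a feasible set) does \emph{not} follow from that hypothesis. For instance, take $A=\{1,2,3\}$ and $\mathcal{F}=\{\emptyset,\{1\},\{2\},\{3\},\{1,2,3\}\}$: this family is not downward closed (since $\{1,2\}\notin\mathcal{F}$), yet every singleton is feasible, so neither your Case~1 nor your Case~2 applies. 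In fact, for this $\mathcal{F}$ the mechanism ``return the highest-value affordable singleton and pay it $B$'' is deterministic, truthful, individually rational, budget-feasible, and a $3$-approximation---so the gap is real.

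Setting that aside, once you grant your structural assumption the argument is correct and tracks the paper's closely. The paper's proof asserts the existence of a \emph{minimally feasible} $F$ with $|F|\ge 2$ (no proper subset of $F$ is in $\mathcal{F}$), puts the symmetric additive valuation $v(i)=1/|F|$ on $F$ and negligible value outside, and argues that any $\alpha$-approximate mechanism must return $F$; raising one agent's cost to roughly $B$ keeps $F$ the unique good answer, so every threshold is near $B$ and the total payment is about $|F|\cdot B$. Your version concentrates all value on a single $i^*$ and needs only that no \emph{proper} feasible subset of $S$ contains $i^*$---a strictly weaker minimality than the paper's, which your asymmetric valuation is exactly what enables. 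Both proofs share the same threshold-collision skeleton, and---worth noting---the paper's claim that a minimally feasible $F$ of size $\ge 2$ exists also fails on the family above (if $\emptyset\in\mathcal{F}$ then only $\emptyset$ is minimally feasible), so the two arguments have the same conceptual gap relative to the stated hypothesis.
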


\begin{proof}
	Since $\mathcal{F}$ is not downward closed, there is some $F\in \mathcal{F}$ with $|F|\ge 2$ which is minimally feasible, i.e., if $S\subseteq F$ and $S\in \mathcal{F}$, then $S=F$. 
	
	Towards a contradiction, suppose that there is a deterministic, truthful, budget-feasible, $\alpha$-appro\-xi\-ma\-tion mechanism $\alg$ for additive objectives, where  $\alpha = \alpha(n) >1$. Consider the following instance on $A$ where $v$ is additive: for each agent $i\in F$, $v(i) = 1/|F|, c_i = \varepsilon \ll B/|F|$, while for each agent $i \in A\mysetminus F$, $v(i) = \delta < 1/\alpha, c_i = B$. All the $\mathcal{F}$-feasible and budget-feasible solutions are $F$ and, possibly, some of the singletons outside of $F$. 
	If $\alg$  returns any solution other than $F$, then 
	$v(\alg(A, v, \mathbf{c}, B)) \le \delta  <\frac{1}{\alpha} = \frac{1}{\alpha} \cdot \opt(A, B)$,	which contradicts the approximation guarantee of $\alg$. So, $\alg$ should return $F$. 
	
	However, the latter is true even if we slightly modify the instance, so that for a specific $j\in F$, $c_j = B - (|F|-1)\cdot\varepsilon$. Therefore, in the original instance, the threshold payment for $j$ is at least $B - (|F|-1)\cdot\varepsilon$. In fact, due to symmetry, all the threshold payments in the original instance should be  at least $B - (|F|-1)\cdot\varepsilon$. Since $|F|\ge 2$ and $B - (|F|-1)\cdot\varepsilon \approx B$, this contradicts the budget-feasibility of $\alg$.
\end{proof}

The next corollary of Theorem \ref{thm:p-systems_BV} states that under general combinatorial constraints it is not possible to achieve any non-trivial approximation with polynomially many queries. While it is not hard to prove it directly, given Theorem \ref{thm:p-systems_BV} it suffices to notice that such a lower bound holds even for general independence systems. Indeed, there are cases where $\frac{\mathrm{ur}(U)}{\mathrm{lr}(U)}$ is $\Theta(n)$ like the $(n-1)$-systems of independent sets of star graphs.

\begin{corollary}\label{cor:combinatorial_lb_2}
	For any fixed $\varepsilon > 0$, any (randomized) $n^{1-\varepsilon}$-appro\-xi\-mation algorithm for additive function maximization 
	subject to general feasibility constraints requires exponentially many queries (in expectation). 
\end{corollary}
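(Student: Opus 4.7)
The plan is to appeal directly to Theorem \ref{thm:p-systems_BV}, exploiting the fact that ``general feasibility constraints'' subsume independence systems, and that there exist natural independence systems whose rank quotient is as bad as $\Theta(n)$. Since Theorem \ref{thm:p-systems_BV} is stated for $p$-systems with arbitrary $p$, pushing $p$ up to $n-1$ immediately yields the desired lower bound; the only real content is choosing the right family of constraints.

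Concretely, I would first fix the family of ground-set structures. Following the authors' hint, for each $n$ I would take the star graph $K_{1,n-1}$ with center $v_0$ and leaves $v_1,\dots,v_{n-1}$, and let $\mathcal{I}_n$ be its independent sets (as a graph). The maximal independent sets are $\{v_0\}$ of size $1$ and $\{v_1,\dots,v_{n-1}\}$ of size $n-1$, so evaluating the rank quotient on $S = [n]$ yields $\mathrm{ur}(S)/\mathrm{lr}(S) = n-1$. Hence $(A,\mathcal{I}_n)$ is an $(n-1)$-system, and it is in particular a bona fide feasibility constraint, so any lower bound for it applies to ``general feasibility constraints.''

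Next, I would invoke Theorem \ref{thm:p-systems_BV} with $p = n-1$ and any fixed constant $\varepsilon'>0$: any randomized $(n-1+\varepsilon')$-approximation for additive maximization subject to an $(n-1)$-system needs exponentially many independence-oracle queries in expectation. Since an algorithm with access to both value and independence oracles is at least as informed as one using independence queries alone, this lower bound carries over to the combined query model considered in the corollary. Finally I would observe that for any fixed $\varepsilon > 0$ and all sufficiently large $n$,
\begin{equation*}
n^{1-\varepsilon} \;=\; n \cdot n^{-\varepsilon} \;\le\; n - 1 + \varepsilon',
\end{equation*}
because $n - n^{1-\varepsilon} = n(1 - n^{-\varepsilon}) \to \infty$. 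Therefore a hypothetical $n^{1-\varepsilon}$-approximation running in polynomially many queries would also be an $(n-1+\varepsilon')$-approximation on $(A,\mathcal{I}_n)$, contradicting Theorem \ref{thm:p-systems_BV}.

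There is no real obstacle to this argument, since all the technical machinery is already in Theorem \ref{thm:p-systems_BV}; the only points that need a sentence of care are (i) exhibiting a concrete $(n-1)$-system, for which the star graph suffices, and (ii) checking that the approximation gap $n^{1-\varepsilon}$ fits comfortably inside $p+\varepsilon' = n-1+\varepsilon'$ asymptotically, which is an elementary inequality. The sole mild subtlety, worth stating explicitly, is that Theorem \ref{thm:p-systems_BV} lower-bounds independence-oracle queries whereas the corollary speaks of queries in general, but this strengthens rather than weakens the conclusion, since additional access to a value oracle cannot help an algorithm distinguish instances that already disagree on the constraint structure.
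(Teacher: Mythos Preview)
Your proposal is correct and follows exactly the route the paper sketches: exhibit the independent sets of the star $K_{1,n-1}$ as an $(n-1)$-system and invoke Theorem~\ref{thm:p-systems_BV} with $p=n-1$, noting that $n^{1-\varepsilon}\le p+\varepsilon'$ for large $n$. One small wording issue: the sentence ``an algorithm with access to both oracles is at least as informed \ldots\ so the lower bound carries over'' has the implication backwards (more information could only \emph{weaken} a lower bound), but you correctly fix this in your final paragraph by observing that the hard instances differ only in the constraint structure, so value queries are useless.
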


% % % % % % % % % % % % % % % % % %
% % % % % % % % % % % % % % % % % % 
\section{Discussion}
\label{sec:disc}
% % % % % % % % % % % % % % % % % %
% % % % % % % % % % % % % % % % % %
We already discussed in the \nameref{sec:intro}  that designing \emph{deterministic} budget-feasible mechanisms has been elusive. Positive results are only known for specific well-behaved objectives \citep{Singer10,ChenGL11,Singer12,ABM16,HorelIM14,DobzinskiPS11,ABM17} and, even worse, beyond monotone submodular  valuation functions no deterministic $O(1)$-approximation mechanism is known, irrespectively of time or query complexity.
We consider obtaining deterministic, budget-feasible, $O(1)$-approximation  mechanisms---or showing that they do not exist---the most intriguing related open problem.

While our results provide a proof of concept with respect to what is asymptotically possible with polynomial-time, truthful mechanisms, the constants involved are very far from being practical. Although we do not claim that the different parameters appearing in the description and the analysis of our mechanisms are optimized, they had to be carefully chosen and we suspect there is not much room for improvement. Bringing down these approximation factors is another interesting direction.

Finally, it is mentioned in Remark \ref{rem:7-approx} that the high level approach of \nameref{fig:Simultaneous Greedy} can be turned into a deterministic 7-approximation  algorithm. We believe that it is worth exploring other possible applications of the high level approach of \nameref{fig:Simultaneous Greedy}, both in mechanism design and in constrained non-monotone submodular maximization.

\newpage

\appendix

% % % % % % % % % % % % % % % % % %
% % % % % % % % % % % % % % % % % %
\section{Proof of Theorem \ref{thm:Constrained-Submodular} for the Non-Monotone Case}
\label{app:sec_combinatorial}
% % % % % % % % % % % % % % % % % %
% % % % % % % % % % % % % % % % % %

\begin{rtheorem}{Theorem}{\ref{thm:Constrained-Submodular}}
	Assuming that the solution has to be an independent set of a $p$-system, there is a  universally truthful, individually rational, budget-feasible, $O(p)$-approximation mechanism that runs in polynomial time for non-monotone submodular objectives.
\end{rtheorem}

\begin{proof}

Here we move on to the case of \emph{non-monotone} submodular objectives. \nameref{fig:nMain-Constrained} is a modification of \nameref{fig:Main-Submodular}  that maintains a set $F$ of ``feasible'' pairs, i.e., of pairs $(i,j)$ such that  $S_j \cup \{i\}$ is an independent set. In each step, the best such pair $(\hat{\imath},\hat{\jmath})$ is chosen and, given that $v(\hat{\imath}|S_{\hat{\jmath}})$ is neither too high nor too low, $\hat{\imath}$  is added to $S_{\hat{\jmath}}$.
The parameter $\beta$ is  $8.5$ and $\alg_4$ in line \ref{line:nMC_submod_knapsack_p-system} can be any polynomial time approximation algorithm for non-monotone submodular maximization subject to a knapsack and a $p$-system constraint. Here we assume the $\frac{(1+\varepsilon)(p+1)(2p+3)}{p}$-approximation algorithm of Mirzasoleiman et al.~\shortcite{MirzasoleimanBK16} for $\varepsilon=10^{-3}$.
\medskip

\begin{algorithm}[H]
	\DontPrintSemicolon 
	\NoCaptionOfAlgo
	\algotitle{\textnormal{\textsc{GenSm-Con\-strained}}}{fig:nMain-Constrained.title}
%{\small
	\WP{With probability $q=1/3$}{\Return $i^*\in \argmax_{i\in A} v(i)$ }
	\WP{With probability $1-q$}{  
		Put each agent of $A$ in either $A_1$ or $A_2$ independently at random with probability $\frac{1}{2}$ \;
		$x=v(\alg_4(A_1))$  \tcc*{{\footnotesize a $(1+\varepsilon)(p+1)(2p+3)/p$-approximation of $\opt(A_1, v, \mathbf{c}_{A_1}, B)$}}  \label{line:nMC_submod_knapsack_p-system}  
		$S_1 =  S_2 = \emptyset$; $B_1 =  B_2 = B$; $U=A_2$  \;
		$F=\{(i,j)\,|\, i\in U, j\in\{1,2\} \text{\ and\ } S_j \cup \{i\} \in \mathcal{I}   \}$  \tcc*{{\footnotesize all ``feasible'' pairs}}  
		\While{$F\neq \emptyset$}{
			Let $(\hat{\imath},\hat{\jmath}) \in \argmax_{(i,j)\in F} v(i|S_j)$ \label{line:nMC_argmax}\;
			\If{$c_{\hat{\imath}}\le \frac{\beta  B}{x} v(\hat{\imath}|S_{\hat{\jmath}}) \le B_{\hat{\jmath}}$ \label{line:nMC_condition}}{
			$S_{\hat{\jmath}} = S_{\hat{\jmath}} \cup \{\hat{\imath}\}$ \;
			$B_{\hat{\jmath}} = B_{\hat{\jmath}} - \frac{\beta  B}{x} v(\hat{\imath}|S_{\hat{\jmath}})$\;}
		$U = U\mysetminus\{\hat{\imath}\}$ \label{line:nMC_rejection}\;
		Update $F$ \;
		}
		\For{$j\in \{1,2\}$}{
			$T_j=\alg_2(S_j)$ \tcc*{{\footnotesize a 2-approximate solution with respect to  $\opt(S_j, v, \mathbf{c}_{S_j}, \infty)$}} \label{line:nMC_unconstrained}
		}
		Let $S$ be the best solution among $S_1, S_2, T_1, T_2$ \label{line:nMC_max_soln}\;
		\Return $S$
	} 
%}
	\caption{\textsc{GenSm-Constrained}$(A, \mathcal{I}, v, \mathbf{c}, B)$} \label{fig:nMain-Constrained} 
\end{algorithm}\medskip

Clearly, $S_1, S_2$ start as independent sets and they are expanded only if they remain independent sets. As subsets of independent sets, $T_1, T_2$ are independent sets as well. Hence, \nameref{fig:nMain-Constrained} does return a solution $S\in\mathcal{I}$.
	
Like in the monotone case, following the reasoning used for \nameref{fig:Main-Submodular} and \nameref{fig:Main-Submodular-Online}, it is easy to prove universal truthfulness, individual rationality, budget-feasibility, and---given polynomial-time oracles---polynomial running time.
What is left to show is that $\mathbb{E}(v(S))$ is an $O(p)$-approximation of $\opt(A, B)$. 

First, suppose that $\max_{i\in A}v(i) \ge  \frac{1}{136(p + 6)}  \cdot \opt(A, B)$. Then, for the set $S$ returned by \nameref{fig:nMain-Constrained},
\[\mathbb{E}(v(S)) \ge q\cdot \max_{i\in A}v(i) \ge \frac{1}{3} \cdot \frac{1}{136(p + 6)}\cdot \opt(A, B) > \frac{1}{410(p+6)} \cdot \opt(A, B) \,. \]

When $\max_{i\in A}v(i) <  \frac{1}{136(p + 6)}  \cdot \opt(A, B)$, we may follow the same approach as with the our other proofs. Recall the notation. That is, $i_1, i_2, \ldots, i_{|A_2|}$ are the agents of $A_2$ in the order considered by the mechanism and $j_1, \ldots, j_{|A_2|}$ are the corresponding $\hat{\jmath}$ selected in the $k$th execution of line \ref{line:nMC_argmax}.\footnote{In case not all agents are considered, what remains in $F$ is arbitrarily indexed and paired with some $\hat{\jmath}$. This is as if we had a few dummy iterations at the end of the \textbf{while} loop in order to exhaust all agents by rejecting them one by one.} By $S_{j}^{(k)}$ and $B_{j}^{(k)}$ we denote $S_j$ and $B_j$, respectively, at the time $i_k$ is selected. We only use $S_1, S_2, B_1, B_2$ for the final version of the corresponding set or quantity.
The set $R=A_2\mysetminus (S_1 \cup S_2)$ contains the agents $i_k$ that were not added to $S_{j_k}^{(k)}$ and it is further partitioned to 
$R_{\mathbf{c}} = \big\lbrace i_k \,|\, \frac{\beta  B}{x} v\big( i_k|S_{i_k}^{(k)}\big)  < c_{\hat{\imath}}\big\rbrace $, $R_B  = \big\lbrace i_k \,|\, B_{i_k}^{(k)} < \frac{\beta  B}{x} v\big( i_k|S_{i_k}^{(k)}\big) \big\rbrace$,   and $R_{\mathcal{I}}  = R\mysetminus (R_{\mathbf{c}}\cup R_B)$.

Recall that Lemma \ref{lem:Bei-Leonardi} guarantees that $\opt(A_i,  B) \ge \frac{k-1}{4k} \opt(A, B)$ for $i\in\{1,2\}$, where $k=136(p + 6)$, happens with probability at least $1/2$. Assume this is indeed the case. Therefore, $x=v(\alg_1(A_1))\ge \frac{(k-1)p}{4k(1+\varepsilon)(p+1)(2p+3)} \opt(A, B)$.
\smallskip

\noindent \textbf{Case 1.} Assume that  $R_B\neq \emptyset$. By repeating the analysis of Case 1 in the proof of Lemma \ref{lem:SG_approximation}, we get
\begin{equation*}
v(S)  \ge \left( \frac{(k-1)p}{4k(1+\varepsilon)(p+1)(2p+3)\beta} -  \frac{1}{k}\right) \cdot \opt(A, B)\,. 
\end{equation*}
By substituting  $k=136(p + 6)$, $\beta=8.5$ and $\varepsilon=10^{-3}$, it is a matter of simple calculations to get
\begin{equation}\label{eq:nMC_approximation_1}
v(S)  \ge \frac{1}{136(p + 6)} \cdot \opt(A, B)\,. 
\end{equation}
\smallskip

\noindent \textbf{Case 2.} Next, assume that  $R_B = \emptyset$. Let $C^*$ be an optimal solution for the instance $(A_2, v, \mathbf{c}_{A_2}, B)$ and $C_1= C^* \cap S_1$, $C_2= C^* \cap S_2$, $C_3 = C^* \mysetminus (C_1\cup C_2)$. By subadditivity (recall inequality \eqref{eq:SG_approximation_3}) and the fact that $T_j=\alg_2(S_j), j\in\{1,2\}$, is a 2-approximate solution with respect to  $\opt(S_j, \infty)$ we get
\begin{equation}\label{eq:nMC_approximation_3b}
\opt(A_2, B) = v(C^*) \le v(C_1) + v(C_2) + v(C_3) \le 2v(T_1) + 2v(T_2) + v(C_3)\,. 
\end{equation}

For  $v(C_3)$ recall inequality \eqref{eq:SG_approximation_5} (see proof of Lemma \ref{lem:SG_approximation}):
\begin{equation}
v(C_3) \le  v(S_1\cup C_3) + v(S_2\cup C_3)  \,.\tag{\ref{eq:SG_approximation_5}}
\end{equation}
To upper bound $v(S_j\cup C_3)$ we work like in the proof of \ref{thm:Constrained-Submodular} because of the $p$-system constraint.
By Definition \ref{def:SM}\ref{def:sub3}, we have
	\begin{IEEEeqnarray}{rCl}
		v(S_j\cup C_3)& \le  & v(S_j) + \sum_{i_k\in C_3} v( i_{k} |S_j) \nonumber \\ 
		& \le & v(S_j) + \sum_{i_k\in C_3\cap R_{\mathbf{c}}} v( i_{k} |S_j) + \sum_{i_k\in C_3\cap R_{\mathcal{I}}} v( i_{k} |S_j) \,. \label{eq:nMC_approximation_2}
	\end{IEEEeqnarray}
	We upper bound the first sum exactly as in \eqref{eq:MC_approximation_3}:
	\begin{equation}\label{eq:nMC_approximation_3}
	\sum_{i_k\in C_3\cap R_{\mathbf{c}}} v( i_{k} |S_j) \le    \sum_{i_k\in C_3\cap R_{\mathbf{c}}} v\left( i_{k} |S_j^{(k)}\right) <  \frac{x}{\beta  B} \sum_{i_k\in C_3\cap R_{\mathbf{c}}}  c_{i_k} \le \frac{x}{\beta} \le \frac{\opt(A, B)}{\beta}\,.
	\end{equation}
	For the second sum we have the analog of Claim \ref{claim:p-system}. Recall that we never used the monotonicity of $v$ in the proof of Claim \ref{claim:p-system}. With just minor changes in notation, we can prove the following.
	\begin{claim}\label{claim:nonmonotone_p-system}
		For both $j\in\{1,2\}$, $\sum_{i_k\in C_3\cap R_{\mathcal{I}}} v( i_{k} |S_j) \le p \cdot v(S_j)\,.$
	\end{claim}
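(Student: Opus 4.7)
\renewcommand{\qedsymbol}{{\large $\triangleleft$}}
The plan is to adapt the mapping argument from the proof of Claim \ref{claim:p-system}, replacing the algorithm's iteration-of-rejection ordering of $C^* \cap R_{\mathcal{I}}$ by a new ordering tailored to the fact that in \nameref{fig:nMain-Constrained} agents of $R_{\mathcal{I}}$ are never actively picked: they linger in $U$ until both $(i,1)$ and $(i,2)$ leave $F$. Once such an ordering is in place, the rest of the proof should go through essentially verbatim.

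Fix $j \in \{1,2\}$ and write $S_j = \{i_{a_1}, \ldots, i_{a_s}\}$ in order of insertion (so $a_1 < \cdots < a_s$ and $j_{a_\ell} = j$). For each $i \in C_3 \cap R_{\mathcal{I}}$, I will define $\tau_i$ as the largest iteration $\tau$ with $S_j^{(\tau)} \cup \{i\} \in \mathcal{I}$. Since $\{i\} \in \mathcal{I}$ (as $i \in C^* \in \mathcal{I}$) and $S_j \cup \{i\} \notin \mathcal{I}$ (otherwise $(i,j)$ would still have been in $F$ at termination), $\tau_i$ is well-defined; since $S_j^{(\tau)}$ only changes at iterations of the form $a_\ell + 1$, one has $\tau_i = a_{k_i}$ for a unique $k_i \in \{1, \ldots, s\}$. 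I then order $C_3 \cap R_{\mathcal{I}} = \{i_{b_1}, \ldots, i_{b_t}\}$ so that $k_{i_{b_1}} \le k_{i_{b_2}} \le \cdots \le k_{i_{b_t}}$ and set $f(i_{b_m}) = i_{a_{\lceil m/p \rceil}}$. Property (ii) of the analog of Claim \ref{claim:p-system}, namely $|f^{-1}(i_{a_\ell})| \le p$, is immediate by construction; granting property (i), the same telescoping computation as in the monotone case yields
\[
\sum_{i \in C_3\cap R_{\mathcal{I}}} v(i \,|\, S_j) \,\le\, \sum_m v\bigl(f(i_{b_m}) \,\big|\, S_j^{(a_{\lceil m/p\rceil})}\bigr) \,\le\, p \sum_{\ell=1}^s v\bigl(i_{a_\ell} \,\big|\, S_j^{(a_\ell)}\bigr) \,=\, p \cdot v(S_j),
\]
with no monotonicity needed for the final equality.

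The main obstacle is verifying property (i), i.e., $v(i_{b_m} \,|\, S_j) \le v\bigl(i_{a_{\lceil m/p \rceil}} \,\big|\, S_j^{(a_{\lceil m/p \rceil})}\bigr)$, which reduces to the bound $k_{i_{b_m}} \ge \lceil m/p \rceil$. Once the latter holds, $\tau_{i_{b_m}} \ge a_{\lceil m/p \rceil}$ ensures that $(i_{b_m}, j) \in F$ at iteration $a_{\lceil m/p \rceil}$, and since $(i_{a_{\lceil m/p \rceil}}, j)$ was the argmax in line \ref{line:nMC_argmax} at that step, one application of submodularity delivers the required inequality. To prove $k_{i_{b_m}} \ge \lceil m/p \rceil$, I will set $U_K = \{i \in C_3 \cap R_{\mathcal{I}} : k_i \le K\}$ and $T_K = \{i_{a_1}, \ldots, i_{a_K}\}$, and observe that by definition of $\tau_i$ every $i \in U_K$ satisfies $T_K \cup \{i\} \notin \mathcal{I}$ (so $T_K$ is maximally independent in $T_K \cup U_K$, giving $\mathrm{lr}(T_K \cup U_K) \le K$), while $U_K \subseteq C^* \in \mathcal{I}$ yields $\mathrm{ur}(T_K \cup U_K) \ge |U_K|$. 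The $p$-system inequality then forces $|U_K| \le p K$; since $U_K$ is a prefix of the ordered list, $k_{i_{b_m}} \le K$ would imply $m \le |U_K| \le pK$, which fails for $K = \lceil m/p \rceil - 1$ (as $p(\lceil m/p \rceil - 1) < m$), proving $k_{i_{b_m}} \ge \lceil m/p \rceil$. Conceptually, the substitution of the abstract ordering by $\tau_i$ for the monotone proof's iteration-based ordering is precisely the step that rescues the argument when the rejections in $R_{\mathcal{I}}$ are implicit rather than explicit.
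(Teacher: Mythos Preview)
Your proof is correct and is precisely the ``minor changes in notation'' the paper alludes to but does not spell out. The only real adjustment needed in passing from Claim~\ref{claim:p-system} to Claim~\ref{claim:nonmonotone_p-system} is that in \nameref{fig:nMain-Constrained} the agents in $R_{\mathcal{I}}$ are never selected as $\hat{\imath}$ (they carry only dummy iteration indices), so the monotone proof's ordering by iteration-of-rejection must be replaced by an ordering recording, for the fixed $j$, when the pair $(i,j)$ dropped out of $F$; your $\tau_i = a_{k_i}$ does exactly this, and from there the mapping $f$, the $p$-system counting argument, and the telescoping sum go through verbatim as in Claim~\ref{claim:p-system}.
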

Now, combining \eqref{eq:nMC_approximation_3b}, \eqref{eq:SG_approximation_5}, \eqref{eq:nMC_approximation_2}, \eqref{eq:nMC_approximation_3}, and Claim \ref{claim:nonmonotone_p-system}, we have
\begin{equation*}
\opt(A_2, B) \le 2 v(T_1) + 2 v(T_2) + (p+1) v(S_1) + (p+1) v(S_2) +  2\frac{\opt(A, B)}{\beta}\,,
\end{equation*}
and, using the definition of $S$ and the lower bound on $\opt(A_2, B)$,
\begin{equation*}
v(S) \ge \frac{1}{2p+6} \cdot  \left( \frac{k-1}{4k} - \frac{2}{\beta}\right)  \opt(A, B)\,.
\end{equation*}
By substituting  $k$ and $\beta$, it is a matter of calculations to get
\begin{equation}\label{eq:nMC_approximation_4}
v(S)  \ge \frac{1}{136(p + 6)} \cdot \opt(A, B)\,. 
\end{equation}
Since, due to Lemma \ref{lem:Bei-Leonardi}, both \eqref{eq:nMC_approximation_1} and \eqref{eq:nMC_approximation_4} hold with probability at least $1/2$, we have
	\[\mathbb{E}(v(S)) \ge (1-q)\cdot\frac{1}{2}\cdot\frac{1}{136(p + 6)} \cdot \opt(A, B) > \frac{1}{410(p+10)} \cdot \opt(A, B) \,, \]
thus concluding the proof.
\end{proof}

\newpage

% % % % % % % % % % % % % % % % % %
% % % % % % % % % % % % % % % % % %
\bibliographystyle{ACM-Reference-Format}
\bibliography{budgetedMechanismDesign.bib}
% % % % % % % % % % % % % % % % % %
% % % % % % % % % % % % % % % % % %

\end{document}